\newcommand{\spn}{\mathrm{Span}}
\newcommand{\fhat}[2]{\ifthenelse{\equal{#2}{}}{\hat{f}[#1]}{\ifthenelse{\equal{#2}{0}}{\hat{f}[\emptyset]}{\hat{f}[#1_{\leq #2}]}}}
\newcommand{\gain}[2]{\ifthenelse{\equal{#2}{}}{g[#1]}{g[#1_{\leq #2}]}}
\newcommand{\pr}[2][]{\Pr_{\ifthenelse{\isempty{#1}}{}{{#1}}}\left[{#2}\right]}
\newcommand{\E}{\Exp}
\newcommand{\remove}[1]{}
\newcommand{\ol}{\overline}
\newcommand{\wt}[1]{\widetilde{#1}}
\newcommand{\etal}{et~al.\ }
\newcommand{\resp}{resp.,\ }
\newcommand{\ie}{i.e.,\ }
\newcommand{\wrt} {with respect to\ }
\newcommand{\ceil}[1]{\lceil #1 \rceil}
\newcommand{\floor}[1]{\lfloor #1 \rfloor}
\newcommand{\bra}[1]{\langle#1\rvert}
\newcommand{\braket}[1]{\langle #1 \rangle}
\newcommand{\ket}[1]{\lvert#1\rangle}
\newcommand{\set}[1]{\{ #1 \}}
\newcommand{\inner}[2]{\langle#1\vert#2\rangle}
\newcommand{\C}{\mathbb{C}}
\newcommand{\bits}{\{0,1\}}
\newcommand{\xor}{\oplus}
\newcommand{\size}[1]{\left|#1\right|}
\newcommand{\abs}[1]{\size{#1}}
\newcommand{\norm}[1]{\left\lVert#1\right\rVert}
\newcommand{\N}{{\mathbb N}}
\newcommand{\cA}{{\mathcal A}}
\newcommand{\cB}{{\mathcal B}}
\newcommand{\cC}{{\mathcal C}}
\newcommand{\cD}{{\mathcal D}}
\newcommand{\cE}{{\mathcal E}}
\newcommand{\cF}{{\mathcal F}}
\newcommand{\cG}{{\mathcal G}}
\newcommand{\cH}{{\mathcal H}}
\newcommand{\cR}{{\mathcal R}}
\newcommand{\cS}{{\mathcal S}}
\newcommand{\cT}{{\mathcal T}}
\newcommand{\cX}{{\mathcal X}}
\newcommand{\cY}{{\mathcal Y}}
\newcommand{\bfr}{\mathbf{r}}
\newcommand{\bfu}{\mathbf{u}}
\newcommand{\bfx}{\mathbf{x}}
\newcommand{\bfy}{\mathbf{y}}
\newcommand{\veps}{\varepsilon}
\newcommand{\poly}{\operatorname{poly}}
\newcommand{\polylog}{\operatorname{polylog}}
\newcommand{\Exp}{\operatorname*{\mathbb{E}}}
\newcommand{\Ex}{\Exp}
\newcommand{\negl}{\operatorname{negl}}
\newcommand{\class}[1]{\ensuremath{\mathbf{#1}}}
\newtheorem{observation}[theorem]{Observation}
\newtheorem{fact}[theorem]{Fact}
\newtheorem{construction}[theorem]{Construction}
\newtheorem{algorithm1}[theorem]{Algorithm}
\newtheorem{assumption}[theorem]{Assumption}
\renewcommand{\@Opargbegintheorem}[4]{%
  #4\trivlist\item[\hskip\labelsep{#3#2\@thmcounterend}]}
\newtheorem{theorem}{Theorem}[section]
\theoremstyle{plain}
\newtheorem{lemma}[theorem]{Lemma}
\newtheorem{corollary}[theorem]{Corollary}
\newtheorem{observation}[theorem]{Observation}
\theoremstyle{definition}
\newtheorem{definition}[theorem]{Definition}
\newtheorem{assumption}[theorem]{Assumption}
\theoremstyle{definition}
\newtheorem{remark}[theorem]{Remark}
\newcommand{\sdotfill}{\textcolor[rgb]{0.8,0.8,0.8}{\dotfill}} %change to \cdotfill later
\def\th@protocol{%
\normalfont % body font
\setbeamercolor{block title example}{bg=orange,fg=white}
\setbeamercolor{block body example}{bg=orange!20,fg=black}
\def\inserttheoremblockenv{exampleblock}
}
\theoremstyle{protocol}
\newtheorem{proto}[theorem]{Protocol}
\newtheorem{protoc}[theorem]{Protocol}
\newcommand{\namedref}[2]{#1~\ref{#2}}
\newcommand{\torestate}[3]{%
\expandafter \def \csname BBRESTATE #2 \endcsname{#3}
\theoremstyle{plain}
\newtheorem{BBRESTATETHMNUM#2}[theorem]{#1}
\begin{BBRESTATETHMNUM#2}\label{#2}\csname BBRESTATE #2 \endcsname   \end{BBRESTATETHMNUM#2}
\newtheorem*{BBRESTATETHMNONNUM#2}{\namedref{#1}{#2}}
}
\newcommand{\restate}[1]{\begin{BBRESTATETHMNONNUM#1}[Restated] \csname BBRESTATE #1 \endcsname
\end{BBRESTATETHMNONNUM#1}}
\newcommand{\trdist}{\Delta} %trace distance
\newcommand{\tr}{\mathrm{tr}}
\newcommand{\HL}{H_L}
\newcommand{\evolt}{t} %evolution time of quantum walk
\newcommand{\OracleH}{\mathcal{O}_{H}} %entry oracle
\newcommand{\OracleSparse}{\mathcal{O}_L} %sparse structure oracle
\newcommand{\pquery}{p}
\newcommand{\querynum}{q}
\newcommand{\hash}{\Pi}
\newcommand{\hashstep}{f}
\newcommand{\reduction}{\mathcal{R}}
\newcommand{\Hsimalg}{\mathcal{A}} %algorithm
\newcommand{\sPi}{\text{S}\Pi}
\newcommand{\xb}{\bar{x}}
\newcommand{\Pitl}{\tilde{\Pi}}
\newcommand{\sPitl}{\text{S}\tilde{\Pi}}
\newcommand{\nc}{\newcommand}
\newcommand{\nn}{\nonumber}
\newcommand{\sPitlk}{\sPitl^{\otimes k}}
\newcommand{\Proj}[1]{\left|#1\right\rangle\left\langle #1\right|}
\nc{\vev}[1]{\langle#1\rangle}
\nc{\dxb}{\Delta \xb}
\nc{\probone}{O\left(\ell\sqrt{\frac{k}{N}}\right)}
\nc{\proboneq}{O\left(q\sqrt{\frac{k}{N}}\right)}
\title{On the Impossibility of General Parallel Fast-forwarding of \\ Hamiltonian Simulation}
\newcommand{\Knote}[1]{{\color{red} [{K:} #1]}}
\newcommand{\nai}[1]{{\color{blue} [{Nai:} #1]}}
\newcommand{\YT}[1]{{\color{brown} [{YT:} #1]}}
\newcommand{\yuching}[1]{{\color{magenta} [{yuching:} #1]}}
\newcommand{\YC}[1]{{\color{olive} [{YC:} #1]}}
\newcommand{\hannote}[1]{\textcolor{blue}{\small {\textbf{(Han:} #1\textbf{) }}}}
\newcommand{\Knote}[1]{}%{{\color{red} [{K:} #1]}}
\newcommand{\nai}[1]{}%{{\color{blue} [{Nai:} #1]}}
\newcommand{\YT}[1]{}%{{\color{brown} [{YT:} #1]}}
\newcommand{\yuching}[1]{}%{{\color{magenta} [{yuching:} #1]}}
\newcommand{\YC}[1]{}%{{\color{olive} [{YC:} #1]}}
\newcommand{\hannote}[1]{}%{\textcolor{blue}{\small {\textbf{(Han:} #1\textbf{) }}}}
\newcommand{\email}[1]{\href{mailto:#1}{#1}}
\author{
    Nai-Hui Chia\thanks{Rice University, USA, \email{nc67@rice.edu}.}  \and
	Kai-Min Chung\thanks{Academia Sinica, Taiwan, \email{kmchung@iis.sinica.edu.tw}.}  \and
	Yao-Ching Hsieh\thanks{University of Washington, Seattle, USA, \email{ychsieh@cs.washington.edu}, work done at Academia Sinica.} \and
	Han-Hsuan Lin\thanks{National Tsing Hua University, Taiwan, \email{linhh@cs.nthu.edu.tw}.} \and
	Yao-Ting Lin\thanks{UCSB, USA, \email{yao-ting\_lin@ucsb.edu}. Part of
    the work was done when working at Academia Sinica.} \and
	Yu-Ching Shen \thanks{Academia Sinica, Taiwan, \email{yuching@iis.sinica.edu.tw}.}
}
\date{}
\begin{document}
\maketitle

\begin{abstract}
     Hamiltonian simulation is one of the most important problems in the field of quantum computing. There have been extended efforts on designing algorithms for faster simulation, and the evolution time $T$ for the simulation turns out to largely affect algorithm runtime. While there are some specific types of Hamiltonians that can be fast-forwarded, i.e., simulated within time $o(T)$, for large enough classes of Hamiltonians (e.g., all local/sparse Hamiltonians), existing simulation algorithms require running time at least linear in the evolution time $T$. On the other hand, while there exist lower bounds of $\Omega(T)$  \emph{circuit size} for some large classes of Hamiltonian,  these lower bounds do not rule out the possibilities of Hamiltonian simulation with large but ``low-depth'' circuits by running things in parallel. Therefore, it is intriguing whether we can achieve fast Hamiltonian simulation with the power of parallelism.
    
    In this work, we give a negative result for the above open problem, showing that sparse Hamiltonians and (geometrically) local Hamiltonians cannot be parallelly fast-forwarded. In the oracle model, we prove that there are time-independent sparse Hamiltonians that cannot be simulated via an oracle circuit of depth $o(T)$. In the plain model, relying on the random oracle heuristic, we show that there exist time-independent local Hamiltonians and time-dependent geometrically local Hamiltonians that cannot be simulated via an oracle circuit of depth $o(T/n^c)$, where the Hamiltonians act on $n$-qubits, and $c$ is a constant.
\end{abstract}

\tableofcontents

\section{Introduction}

Simulating a physical system with a specified evolution time is an essential approach to
study the properties of the system. In particular, given a Hamiltonian $H$ that presents the physical system of interest and the evolution time $t$, the goal is to use some well-studied physical system as a simulator to implement $e^{-iHt}$ when $H$ is time-independent or $\exp_{\cT} \left(-i\int_0^{t} H(t') dt' \right)$ for time-dependent $H$, where $\exp_\cT$ denotes the time-ordered matrix exponential. Intuitively, a simulator simulates a Hamiltonian $H$ step by step and thus requires time at least linear in $t$. On the other hand, if one can use a well-studied physical system (e.g., digital or quantum computers) to simulate the Hamiltonian of interest with time significantly less than the specified evolution time, it can significantly benefit our study of physics. Following this line of thought, a fundamental question for simulating Hamiltonians is: 
\begin{center}
    \textsl{Can a simulator simulate Hamiltonians in time strictly less than the evolution time?}
\end{center}
This is called \emph{fast-forwarding of Hamiltonians}. In this work, we investigate the possibility of achieving fast-forwarding of Hamiltonian using quantum computation.

It is known that quantum algorithms can fast-forward some Hamiltonians. Atia and Aharonov~\cite{Atia_2017} showed that commuting local Hamiltonians and quadratic fermionic Hamiltonians with evolution time $t=2^{\Omega(n)}$ can be \emph{exponentially} fast-forwarded by quantum algorithms, where the Hamiltonian applies on $n$ qubits. This result implies the existence of quantum algorithms that simulate the two classes of Hamiltonians in $\poly(n)$ time. Gu et al.~\cite{Gu_2021} showed that more Hamiltonians could be exponentially or polynomially fast-forwarded, such as the exponential fast-forwarding for block diagonalizable Hamiltonians and polynomial fast-forwarding method for frustration-free Hamiltonians at low energies. 

The existence of general fast-forwarding methods for Hamiltonians using quantum computers has also been studied. In particular, people investigated whether all Hamiltonians with some ``succinct descriptions'', such as local and sparse Hamiltonians, can be fast-forwarded. Berry et al.~\cite{berry2007efficient} proved no general fast-forwarding for all sparse Hamiltonians of $n$ qubits for evolution time $t=n\pi/2$ by a reduction from computing the parity of a binary string. In particular, computing the parity of an $n$-bit string requires $\Omega(n)$ quantum queries, and they showed that any algorithm that simulates the corresponding Hamiltonian in time $o(n)$ will violate the aforementioned query lower bound of parity. Atia and Aharonov~\cite{Atia_2017} further showed that if all 2-sparse row-computable\footnote{Given the row index, one can efficiently compute the column indices and values of the non-zero entries of the row.} Hamiltonians with evolution time $2^{o(n)}$ can be simulated in quantum polynomial time, then $\class{BQP} = \class{PSPACE}$. In other words, the result in~\cite{Atia_2017} rules out the possibility of exponential fast-forwarding for Hamiltonians with evolution time superpolynomial in the dimension under well-known complexity assumptions. Haah et al.~\cite{Haah_2021} showed that there exists a piecewise constant bounded 1D time-dependent Hamiltonian\footnote{The Hamiltonian is 1D local, and there is a time slicing such that $H(t)$ is time-independent within each time slice. See~\cite{Haah_2021} for the formal definition.} $H(t)$ on $n$ qubits, such that any quantum algorithm simulating $H(t)$ with evolution time $T$ requires $\Omega(nT)$ gates.

All these works, however, mainly considered lower bounds on the \emph{number of gates} required for simulation, and it does not rule out the possibility that one can complete the simulation with time strictly less than $t$ by using \emph{parallelism}. Briefly, if many local gates in an algorithm operate on disjoint sets of input, then these gates can be applied together, and the efficiency of the algorithm is captured by the \emph{circuit depth} instead of the number of gates. For instance, the result in~\cite{berry2007efficient} was based on the fact that the query complexity of parity is $\Omega(n)$ and thus, one needs $\Omega(t)$ queries to simulate the Hamiltonian evolution; however, if one runs $t$ queries in parallel, it is possible to solve the parity problem with one layer of queries. Therefore, a direct translation of~\cite{berry2007efficient}  does not rule out the possibility of constant depth simulation of the Hamiltonian evolution for time $t$ by running $O(t)$ simulations in parallel.

Parallel runtime (i.e., the circuit depth in the quantum circuit model) could be another suitable notion for capturing the efficiency of the Hamiltonian simulation. Broadly speaking, any physically controllable and implementable system can be used as a simulator, so-called quantum analogue computing\cite{CZ12, CMP18}; instead of having one local interaction at each time step, a simulator that is realized by some physical system will have the whole system evolve together. From a computational perspective, a positive result of fast-forwarding Hamiltonians using parallel algorithms can imply that the simulation can be done in time strictly less than the specified evolution time given sufficient computational resources. In particular, if there exists an algorithm that simulates all Hamiltonians in time less than the simulation time $t$, we might be able to further reduce the runtime by recursively applying the algorithm with sufficient quantum resources. Hence, such algorithms can be a powerful tool for studying quantum physics. In fact, parallel quantum algorithms for Hamiltonian simulation have been studied and showed some advantages. Zhang et al.~\cite{zhang2021parallel} presented a parallel quantum algorithm, whose parallel runtime (circuit depth) has a doubly logarithmic dependency on the simulation precision. Moreover, Haah et al.~\cite{Haah_2021} showed that the circuit depth of their algorithm for simulating geometrically constant-local Hamiltonians can be reduced to $O(t\cdot \polylog(tn/\epsilon))$ by using ancilla qubits. In the last, choosing other physical systems similar to the target Hamiltonians as simulators is possible to gain advantages, which is the idea of quantum analogue computing.

We first explore the possibility of achieving fast-forwarding of Hamiltonians using parallel quantum algorithms, i.e., quantum algorithms that have circuit depth strictly less than the simulation time. We call this \emph{parallel fast-forwarding.} Our first goal is to address the following question: 
\begin{center}
    \emph{For all sparse or local Hamiltonians, do there exist quantum algorithms that simulate the Hamiltonians with circuit depth strictly less than the required evolution time?} 
\end{center}

Furthermore, we noticed that more general simulators (in addition to quantum circuit models) are widely considered for Hamiltonian simulation, such as quantum analogue computing. So, we are also wondering about the following question.

\begin{center}
\emph{For all sparse or local Hamiltonians, does there exist a natural simulator that simulates the Hamiltonians with evolution time strictly less than the required evolution time?}
\end{center}

\subsection{Our Results}

In the work, we give negative answers to the above questions. Roughly speaking, we show that under standard cryptographic assumptions, there exists Hamiltonians that cannot be parallelly fast-forwarded by quantum computers and any simulators that are geometrically local physical systems.

We define parallel fast-forwarding as follows: 

\begin{definition}[Parallel fast-forwarding] 
\label{dfn:pff_informal}
Let $\mathbf{H}$ be a subset of all normalized Hamiltonian ($\|H\|=1$) and $\mathbf{H}_n$ be the subset of Hamiltonian in $\mathbf{H}$ which acts on $n$ qubits. We say that the set $\mathbf{H}$ can be $(T(\cdot), g(\cdot), \varepsilon(\cdot))$-parallel fast forwarded if there exists an efficient classical algorithm $\cA(1^n,t)$ that outputs a circuit $C_{n,t}$, i.e., $\{C_{n,t}\}$ is a uniform quantum circuit, such that for all $n\in\mathbb{N}, t\leq T(n)$, $C_{n,t}$ satisfies the following two properties.
\begin{itemize}
\item The circuit $C_{n,t}$ has depth at most $g(t)$.
\item For all $H\in\mathbf{H}_n, \ket{\psi}\in\mathbb{C}^{2^n}$, the circuit $C_{n,t}(H, \ket{\psi})$ (or $C_{n,t}^H(\ket{\psi})$ under the oracle setting) has output state that is $\varepsilon(n)$ close to the Hamiltonian evolution outcome $e^{-iHt}\ket{\psi}$.
\end{itemize}
In other words, there exists uniform quantum circuit $C_{n,t}$ such that for every Hamiltonian $H\in\mathbf{H}_n$, the evolution of $H$ to time $t$ up to some predetermined time bound $T(\cdot)$ can be simulated by $C$.
\end{definition}

Compared to~\cite{Atia_2017}, Definition~\ref{dfn:pff_informal} focuses on $C$'s circuit depth instead of the number of gates and requires the depth of $C$ to be smaller than $t$ rather than being $\poly(n)$. In particular, when $t = \mathsf{superpoly}(n)$, the definition in~\cite{Atia_2017} can only be satisfied by a circuit $C$ that has gate number superpolynomially smaller than $t$, and $C$ that has circuit depth slightly less than $t$ can satisfy Definition~\ref{dfn:pff_informal}. Therefore, we can also interpret the no fast-forwarding theorem in~\cite{Atia_2017} as refuting the possibility of achieving Definition~\ref{dfn:pff_informal} with gate number (and also circuit depth) superpolynomially smaller than $t = \mathsf{superpoly}(n)$ for a specific family of Hamiltonians. However, given that negative result, one might ask the following question: 
\begin{center}
    \emph{Can we achieve parallel fast-forwarding with $g(t)$ slightly smaller than $t$, such as $g(t) = \sqrt{t}$?} 
\end{center}

In this work, we address the aforementioned question and show impossibility results for parallel fast-forwarding with circuit depth $g(t)$ slightly smaller than $t$ for local or sparse Hamiltonians. Our first result is an unconditional\footnote{That is, the result holds without making any computational assumptions.} result under the oracle model.\footnote{By the oracle model we mean that the algorithm can only access the Hamiltonian by making (quantum) queries to the oracle that encodes the description of the Hamiltonian. See~\Cref{sec:oracle} for the definition.}

\begin{theorem}[No parallel fast-forwarding for sparse Hamiltonians relative to random permutations, simplified version of Theorem~\ref{thm:lower_bound_oracle}]
Relative to a random permutation oracle over $n$-bit strings, for any polynomial $T(\cdot)$, there exists a family of time-independent sparse Hamiltonians $\mathbf{H}$ such that 
$\mathbf{H}$ cannot be $(T(\cdot), g(\cdot), \varepsilon(\cdot))$-fast forwarded for some $g=\Omega(t)$ and $\varepsilon=\Omega(1)$.
\label{thm:sh_informal}
    
\end{theorem}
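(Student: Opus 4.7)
The plan is to reduce parallel fast-forwarding of sparse Hamiltonians to a \emph{parallel quantum query lower bound for iterated random permutations}: for a uniformly random permutation $\pi$ on $\{0,1\}^n$, computing $\pi^T(x)$ on a random input $x$ requires quantum oracle depth $\Omega(T)$, even when the algorithm is allowed arbitrarily many queries per layer. The underlying intuition is that each layer of queries can only extend the algorithm's knowledge of the orbit through $x$ by one step, and guessing an unseen value of a random permutation succeeds with probability at most $O(\poly(n)/2^n)$.

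First, I would encode $\pi$ as a sparse time-independent Hamiltonian whose evolution transports amplitude along the orbits of $\pi$ in a readable way. A natural choice uses an auxiliary ``step counter'' register and sets $H_\pi = \sum_{x,k}\bigl(\ket{\pi(x),k+1}\bra{x,k} + \ket{x,k}\bra{\pi(x),k+1}\bigr)$, which is $2$-sparse and row-computable with $O(1)$ queries to $\pi$ and $\pi^{-1}$. On the orbit states $\{\ket{\pi^k(x),k}\}_k$ the Hamiltonian acts as a nearest-neighbor walk on a path, so standard continuous-time quantum-walk analysis gives that $e^{-iH_\pi t}\ket{x,0}$ has $\Omega(1)$ overlap with $\ket{\pi^{m}(x),m}$ for some $m = \Theta(t)$, enabling extraction of $\pi^m(x)$ with constant success probability by measurement. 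Consequently, any oracle circuit that $(T,g,\varepsilon)$-fast-forwards $\mathbf{H}$ on $H_\pi$ yields an oracle circuit of depth $O(g(T))$ that outputs $\pi^{\Theta(T)}(x)$ with probability $\Omega(1)$, because each sparse-access oracle call is answered by $O(1)$ queries to $\pi$.

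Second, I would prove the parallel quantum query lower bound for iterated random permutations. The natural tool is Zhandry's compressed-oracle framework adapted to uniformly random permutations (\emph{e.g.}, via the permutation-recording or lazy-sampling techniques): the joint adversary--oracle state can be written over partial permutation databases, and a single layer of queries can extend the database along any particular trajectory by at most one new edge. A hybrid/chain argument then shows that in order to output $\pi^T(x)$ with probability noticeably larger than $\poly(n)/2^n$, the database must (essentially) contain all $T$ consecutive edges of the orbit of $x$; assembling this chain requires $\Omega(T)$ sequential layers, because each layer contributes at most one new edge unless the algorithm ``guesses'' some intermediate $\pi^{i}(x)$, which succeeds only with negligible probability over random $\pi$ by a union bound.

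Finally, the reduction closes the argument: an $(T,g,\varepsilon)$-parallel fast-forwarder with $g(t)=o(t)$ and $\varepsilon = \Omega(1)$, instantiated on $H_\pi$ and input $\ket{x,0}$ at $t=\Theta(T)$, would contradict the lower bound above; choosing constants appropriately yields the claimed $g=\Omega(t)$, $\varepsilon=\Omega(1)$ bound. The main obstacle I anticipate is the second step: establishing a tight parallel quantum query depth lower bound for iterated random permutations, despite the polynomially many off-orbit queries the adversary may issue per layer and the quantum superpositions over database extensions. The crux should be a clean invariant---roughly, that after $d$ parallel query layers the ``known-chain length'' from $x$ is at most $d$ except with negligible probability---which must be proved directly in the compressed permutation oracle representation.
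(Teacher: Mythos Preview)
Your overall strategy matches the paper's: encode the permutation chain into a $2$-sparse Hamiltonian whose evolution is a continuous-time quantum walk on a path, show via Bessel-function analysis that simulating for time $t$ lands beyond position $t$ with constant probability, and conclude that a depth-$o(t)$ simulator would violate a parallel query lower bound for the chain. Your Hamiltonian and the reduction from the sparse-access oracles to permutation queries are essentially the paper's construction.

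Two differences are worth flagging. First, the paper uses $L$ \emph{independent} random permutations $\Pi_1,\dots,\Pi_L$ (one per step of the chain) rather than iterating a single $\pi$. This independence is exactly what makes their lower-bound argument clean: they can ``release'' the permutations one layer at a time without worrying about correlations. Your single-$\pi$ formulation would force you to control correlations between queries at different positions along the same orbit, which is strictly harder and not obviously amenable to the same invariant. Second, and more substantively, the paper does \emph{not} use a compressed-oracle argument for the permutation lower bound. They explicitly note that the need for inverse access $\Pi_j^{-1}$ (which you correctly identify as required to implement the sparse-structure oracle) breaks the known compressed-oracle chain techniques. Instead they craft a two-step hybrid: (i) replace each $\Pi_j$ by an ``erased'' function $\tilde\Pi_j$ that is zero off the chain, showing indistinguishability by reduction to a $k$-parallel Grover bound; then (ii) release $\tilde\Pi_1,\tilde\Pi_2,\dots$ on a schedule synchronized with the query layers, so that after $\ell$ layers the algorithm has no information about $\Pi_{\ell+1},\dots,\Pi_q$. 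Your chain-length invariant is the right intuition, but carrying it through a compressed-permutation oracle with two-sided queries is not a standard tool and would require independent development; the paper's hybrid argument sidesteps this entirely and is what you should expect to need.
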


To obtain no fast-forwarding result in the standard model,\footnote{By the standard (plain) model we mean that the algorithm is given the classical description of the Hamiltonian as input, which is the standard setting of the Hamiltonian simulation problem. Moreover, there is no oracle that can be accessed by algorithms. We will use the terms ``standard model'' and ``plain model`` interchangeably throughout this work.} we rely on cryptographic assumptions that provide hardness against low-depth algorithms. 
We assume the existence of \emph{iterative parallel-hard functions}, formally defined in Definition~\ref{def:p-hard_funcs}. Roughly speaking, an iterative parallel-hard function is a function of the form $f(k,x) = g^{(k)}(x):= \underbrace{g(g(\dots g(x)))}_{k\text{ times}}$, such that $g$ is efficiently computable (by some circuit of size $s$), but $g^{(k)}(x)$ is not computable for circuits with depth much less than $k$.

With such a cryptographic assumption, we obtained the following two no-fast-forwarding theorems under the standard model.

\begin{theorem}[No parallel fast-forwarding for local Hamiltonians, simplified version of Theorem~\ref{thm:lower_bound_plain}]
\label{thm:lh_informal}
Assuming the existence of iterative parallel-hard functions with size parameter $s(n)$, then for every polynomial $T(n)$, there exists a family of time-independent local Hamiltonians $\mathbf{H}$
such that 
$\mathbf{H}$ cannot be $(T(\cdot), g(\cdot), \varepsilon(\cdot))$-fast forwarded for some $g=\Omega(t/s(n))$ and $\varepsilon=\Omega(1)$.
\end{theorem}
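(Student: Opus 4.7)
The plan is to leverage the iterative parallel-hard function to construct a local Hamiltonian whose time evolution effectively performs iteration of $g$, and then show that any parallel fast-forwarder would give a low-depth quantum circuit for iterating $g$, contradicting the assumed parallel-hardness.

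First, I would translate the size-$s(n)$ circuit $C_g$ computing an iteratively parallel-hard $g : \{0,1\}^n \to \{0,1\}^n$ into a time-independent $O(1)$-local Hamiltonian via a Feynman--Kitaev style quantum walk construction. Let $U_1,\dots,U_{s(n)}$ be the local gates of a reversible version of $C_g$ (made cleanly composable across iterations by uncomputing ancillas). Using a unary clock register of length $k\cdot s(n)$, define
\[
    H_g \;=\; \sum_{i=1}^{k}\sum_{j=1}^{s(n)} \bigl( U_j \otimes |(i-1)s(n)+j\rangle\langle (i-1)s(n)+j-1|_{\mathrm{clk}} + \mathrm{h.c.} \bigr).
\]
Each term is $O(1)$-local because unary clock transitions touch only two adjacent clock qubits and every $U_j$ is a constant-size gate. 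By the standard continuous-time quantum walk analysis on a line (\`a la Nagaj / Aharonov--Gottesman--Kempe), evolving $H_g$ on the initial state $|x\rangle\otimes|0\rangle_{\mathrm{clk}}$ for time $T=\Theta(k\cdot s(n))$ produces, with constant probability, a state whose data register equals $g^{(k)}(x)$, provided the clock is found in an appropriate window near the right endpoint. I would sharpen this to high probability either by padding the clock with buffer positions and choosing $T$ to maximize overlap, or by adding a mild ``reflecting'' boundary term.

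Second, I would carry out the reduction. Suppose for contradiction that some efficient $\{C_{n,t}\}$ fast-forwards the family $\mathbf{H}:=\{H_g\}$ with depth $g(t)=o(t/s(n))$ and constant error $\varepsilon$. Given an input $(g,x,k)$ with $k\leq T(n)/s(n)$: classically compute the description of $H_g$ from $C_g$ in polynomial time (this does not count toward the quantum depth); prepare $|x\rangle|0\rangle_{\mathrm{clk}}$ in depth $O(1)$; run $C_{n,t}(H_g,\cdot)$ with $t=\Theta(k\cdot s(n))$, which has depth $g(t)=o(k)$; and measure the data (and clock) register. The output is $g^{(k)}(x)$ with constant probability, yielding a depth-$o(k)$ quantum circuit computing $k$-fold iteration of $g$, contradicting the iterative parallel-hardness of $g$.

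The main obstacle is the Hamiltonian construction: we need the continuous-time evolution of a \emph{local} Hamiltonian to compute $g^{(k)}$ with constant probability, not merely produce a Feynman history superposition over clock positions. This is handled by the quantum-walk-on-a-line analysis which shows a wavepacket traverses the clock in linear time, combined with projecting onto the right-endpoint clock interval or using a reflecting boundary; the $O(1)$ success probability per attempt can be amplified by parallel repetition within the same depth budget. A secondary subtlety is ensuring clean composition of iterations (ancilla uncomputation between blocks), which inflates $s(n)$ by at most a constant factor and so is absorbed into the $\Omega(t/s(n))$ lower bound.
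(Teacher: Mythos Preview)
Your overall strategy matches the paper's: encode the iterated circuit for $g$ into a Feynman--Kitaev Hamiltonian, use the continuous-time quantum-walk-on-a-line analysis to argue that evolving for time $t$ reaches clock position $\geq t$ with constant probability, and then reduce a putative low-depth simulator to a low-depth circuit for $g^{(k)}$. That skeleton is correct.

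There is, however, a genuine gap in your construction that blocks the ``for every polynomial $T(n)$'' clause. With a \emph{unary} clock of length $k\cdot s(n)$, the Hamiltonian acts on $n + k\cdot s(n)$ qubits, and the evolution time for which you can prove hardness is at most $\Theta(k\cdot s(n))$. In the terminology of Definition~\ref{dfn:pff_informal}, if $m$ denotes the number of qubits of $H$, your argument only yields hardness for $t \leq O(m)$, i.e.\ $T(m)=O(m)$, not an arbitrary polynomial. The paper explicitly identifies this as the obstacle with Nagaj's construction and overcomes it with a new clock encoding based on Hamiltonian paths in the Johnson graph $J_{n,c-1}$ (Lemma~\ref{construct:clock_state}): a clock of length $L$ can be realized with only $O(L^{1/(c-1)})$ qubits at the price of making the clock transitions $c$-local. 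Choosing $c$ large enough then lets the evolution time be any prescribed polynomial in the qubit count, which is exactly what the statement demands.

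A smaller point on the extraction step: rather than padding with identity gates and hoping the clock lands in a terminal window (which makes the Hamiltonian depend on the target $k$), the paper fixes a single Hamiltonian for the full $T(n)$-fold iteration, measures the clock to obtain some position $l$, and then applies the at most $s(n)$ remaining gates of the current block to reach the next checkpoint $m\cdot s(n)$, yielding $g^{(m)}(x)$ exactly at an additive $O(s(n))$ depth cost. Combined with the Bessel-function analysis (Lemma~\ref{lem:simple_h}/Lemma~\ref{lem:c2H_lemma}) showing that $l>t$ with probability at least $1/3$, this gives hardness for \emph{every} $t$ in the range, not just a single carefully chosen one.
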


\begin{theorem}[No parallel fast-forwarding for time-dependent geometrically local Hamiltonians, simplified version of Theorem~\ref{thm:lower_bound_dep}]
\label{thm:glh_informal}
Assuming the existence of iterative parallel-hard functions with size parameter $s(n)$, then for every polynomial $T(n)$, there exists a family of time-dependent geometrically local Hamiltonian $H$ 
such that 
$\mathbf{H}$ cannot be $(T(\cdot), g(\cdot), \varepsilon(\cdot))$-fast forwarded for some $g=\Omega(t/ns(n))$ and $\varepsilon=\Omega(1)$.
\end{theorem}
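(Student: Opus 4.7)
The plan is to reduce parallel fast-forwarding of the target Hamiltonian family to computing iterates $g^{(k)}$ in low depth, mirroring the proof of Theorem~\ref{thm:lh_informal} but with an extra geometric-routing layer. Let $f(k,x)=g^{(k)}(x)$ be an iterative parallel-hard function whose one-step map $g$ is computed by a (not necessarily geometrically local) circuit $C_g$ of size $s(n)$; by assumption, no quantum circuit of depth $o(k)$ can produce $g^{(k)}(x)$ on a random $x$ with constant advantage.

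For each $n$ I would construct a time-dependent geometrically local Hamiltonian $H_n(t)$ on $n'=O(n)$ qubits arranged on a one-dimensional line with the property that the time-ordered evolution of the initial state $\ket{x}\ket{0}$ for total time $t$ yields, in a designated output register, (the encoding of) $g^{(k)}(x)$ where $k=\Theta(t/(n\cdot s(n)))$. The construction takes the local-but-not-geometrically-local Hamiltonian underlying Theorem~\ref{thm:lh_informal}, which carries out one invocation of $C_g$ per $\Theta(s(n))$ units of evolution time, and replaces each of its non-adjacent few-local terms by a geometrically local route--apply--unroute sub-schedule: a sequence of nearest-neighbour SWAP pulses that brings the participating qubits together, an application of the term itself, and an undo of the SWAPs. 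Because 1D SWAP routing on $n$ qubits takes $O(n)$ evolution time, this conversion blows up the time per invocation of $C_g$ by a factor of $O(n)$, matching the claimed $\Omega(t/(n\cdot s(n)))$ bound. Time-dependence enters only in choosing which nearest-neighbour SWAP terms are active at each moment; at every instant $H_n(t)$ is a bounded-norm sum of geometrically local terms, and the schedule is efficiently describable given $(n,t)$, so $\mathbf{H}=\{H_n\}$ lies in the class quantified over by Definition~\ref{dfn:pff_informal}.

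Given this construction, the reduction is immediate. Suppose, towards contradiction, that $\mathbf{H}$ admits a $(T(\cdot),g(\cdot),\varepsilon(\cdot))$-parallel fast-forwarder with $g(t)=o(t/(n\cdot s(n)))$ and $\varepsilon$ a sufficiently small constant. On a challenge $(x,k)$ for $f$, an adversary sets $t=\Theta(k\cdot n\cdot s(n))$, invokes the efficient classical compiler on $(1^{n'},t)$ to obtain $C_{n',t}$, runs $C_{n',t}$ on $\ket{x}\ket{0}$, and measures the designated output register. Correctness of the fast-forwarder combined with correctness of the Hamiltonian yields $g^{(k)}(x)$ with constant probability, while the depth of $C_{n',t}$ is $g(t)=o(k)$, contradicting iterative parallel-hardness of $f$.

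The main obstacle is making the routing step honestly quantitative: I must exhibit a piecewise-constant, bounded-norm, geometrically local time-dependent schedule whose time-ordered exponential realises the SWAP routing together with the application of each few-local term with the right timing, so that the overhead per invocation of $C_g$ is genuinely $\Theta(n)$ and no hidden polylogarithmic factors creep in that would weaken the depth lower bound. A secondary technical point is handling the constant simulation error and the ancillas used by both the Hamiltonian construction and the fast-forwarder: since the target output is a classical bitstring, an $O(1)$-close approximation still recovers $g^{(k)}(x)$ with constant success probability (amplifiable by a few independent runs), and ancillas can be placed at one end of the line without breaking geometric locality.
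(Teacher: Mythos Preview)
Your overall strategy---build a geometrically local time-dependent Hamiltonian whose evolution for time $t$ computes $g^{(\Theta(t/(ns(n))))}(x)$ and then argue that fast-forwarding would break the depth lower bound for $f$---is the same as the paper's. The SWAP-routing idea is also right. But the specific construction you sketch has a gap.

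You propose to start from the time-\emph{independent} Nagaj-style Hamiltonian underlying Theorem~\ref{thm:lh_informal}, which has the form $H=\sum_j H_j$ with $H_j=U_j\otimes\ket{\gamma_j}\bra{\gamma_{j-1}}+\text{h.c.}$, and to make it geometrically local by a route--apply--unroute schedule for each term. The difficulty is that the useful dynamics of this Hamiltonian---advancing the clock while applying gates---comes from all the $H_j$ acting \emph{simultaneously} as a quantum walk; that is the whole point of the Bessel-function analysis in Section~\ref{sec:Q_walk}. Sequentializing the terms via routing does not implement $e^{-iHt}$. If instead you mean to turn on each $H_j$ alone for time $\pi/2$ (which does map $\ket{\phi}\ket{\gamma_{j-1}}\mapsto -iU_j\ket{\phi}\ket{\gamma_j}$), then the clock register is redundant---the time-dependent schedule already is the clock---and you also need to route clock qubits next to data qubits at every step, which is more than the $O(n)$ overhead you claim and pushes the qubit count beyond $O(n)$.

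The paper avoids all of this by never passing through the Nagaj Hamiltonian. It routes at the \emph{circuit} level: insert at most $n-1$ nearest-neighbour SWAPs before each gate of $C^f$ to obtain a geometrically local circuit $C'$ of size $ns(n)T(n)$ on exactly $n$ qubits (Lemma~\ref{lem:swap}), then set $H(t)\coloneqq -i\log U_i$ for $t\in[i-1,i)$ where $U_i$ is the $i$-th gate of $C'$ (Theorem~\ref{thm:dep_trans}). This is piecewise-constant, geometrically $2$-local, and evolving for integer time $\ell$ applies exactly the first $\ell$ gates---no clock register, no quantum-walk analysis, no Trotter error. The reduction then runs the fast-forwarder to a non-integer time $t$, uses an off-the-shelf simulator (Theorem~\ref{thm:existing_Hsim}) on the \emph{time-independent} Hamiltonian $H(t)$ for the remaining $\lceil t\rceil-t$ time, applies the $\le ns(n)$ leftover gates to reach the next multiple of $ns(n)$, undoes the known qubit permutation $\pi_m$, and measures. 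So your routing obstacle is a non-issue once you route gates rather than Hamiltonian terms.
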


Some loss in parameters are hidden in Theorem~\ref{thm:lh_informal} and Theorem~\ref{thm:glh_informal}. Readers are referred to the full theorems in Section~\ref{sec:plain} for precise parameters.

We note that the existence of parallel-hard functions with an iterative structure is widely used in cryptography. Our definition of iterative parallel-hard functions adapts from the iterated sequential function proposed by Boneh et al. \cite{Boneh18VDP}. Functions of this form play a crucial role in the recent construction of verifiable delay functions (VDF)\cite{Pie19, Wes20, EFK20}. 
In contrast to its wide usage, there have not been many proposals on candidates for such iterative hard functions. Iterative squaring\cite{RSW96}, which is probably the most widely used candidate, is not hard against quantum circuits. There are some recent attempts toward constructing iterated quantum-hard functions from isogenies \cite{FMP19,CRT21}, but these assumptions are much less well-studied.

As a concrete instantiation of our iterated parallel-hard function, we adopted a hash chain, which is also widely assumed to be hard to compute within low depth. In Section \ref{sec:t-hashchain}, we justify the quantum parallel hardness of the hash chain by showing a depth lower bound of computing the hash chain in the quantum random oracle model~\cite{BDFLSZ11}.

Our results in Theorem~\ref{thm:sh_informal}, Theorem~\ref{thm:lh_informal}, and Theorem~\ref{thm:glh_informal} imply that no quantum algorithm can simulate certain families of local or sparse Hamiltonians with circuit depth polynomially smaller than $t$. For instance, suppose $t = n^{c}$ for some constant $c$ and $s(n) = n^2$, then by Theorem~\ref{thm:lh_informal}, no quantum algorithm can simulate the local Hamiltonians with circuit depth smaller than $t^{c-2}$.

Since local Hamiltonians are sparse, Theorem~\ref{thm:lh_informal} also implies no parallel fast-forwarding of sparse Hamiltonians in the standard model. Finally, Theorem~\ref{thm:glh_informal} and Theorem~\ref{thm:lh_informal} are incomparable due to the fact that the Hamiltonians in Theorem~\ref{thm:glh_informal} are time-dependent and the depth lower bound has a factor of $n$. 

It is worth noting that the results above show no parallel fast-forwarding when using ``quantum circuits'' as simulators, which does not directly imply hardness results when considering other physical systems as simulators. Especially, choosing physical systems that are similar to the Hamiltonians to be simulated is possible to gain advantages, and physical systems naturally evolve the whole system together instead of applying local operators one by one. Therefore, it is nontrivial whether similar results hold for other simulators. Fortunately, we are able to generalize Theorem~\ref{thm:glh_informal} and Theorem~\ref{thm:lh_informal} to show that natural simulators that are geometrically local Hamiltonians cannot do much better than quantum circuits.

\begin{theorem}
[No fast-forwarding for local Hamiltonians with natural simulators, simplified version of Corollary~\ref{cor:lower_bound_plain}]
\label{thm:informal_naruto_1}
Assuming the existence of iterative parallel-hard function with size parameter $s(n)$, then for every polynomial $T(n)$, there exists a family of time-independent local Hamiltonians $\mathbf{H}$ over $\widetilde{O}(n)$ qubits satisfying the following. For any geometrically constant-local Hamiltonian $H_B$ acting on $\poly(n)$ qubits, using $H_B$ to simulate any $H_A\in\mathbf{H}$ for any evolution time $t\in[0, s(n)T(n)]$ needs an evolution time at least $(t/2s(n)-O(s(n)))/\polylog(tn)$.
\end{theorem}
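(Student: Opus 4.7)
The plan is to reduce from the Hamiltonian-simulator setting back to the quantum-circuit setting of Theorem~\ref{thm:lh_informal}. The key observation is that any geometrically constant-local Hamiltonian on $\poly(n)$ qubits can itself be simulated by a quantum circuit whose depth is only a polylogarithmic factor larger than its evolution time, so an ``analog'' simulator can be compiled into a ``digital'' one with tiny parallel-time overhead. Suppose for contradiction that some geometrically constant-local $H_B$ on $\poly(n)$ qubits simulates $H_A\in\mathbf{H}_n$ for evolution time $t$ using total $H_B$-time $t'$ substantially below $(t/2s(n) - O(s(n)))/\polylog(tn)$. I would compile this simulator into a circuit-based simulator for $H_A$ of comparable depth, contradicting Theorem~\ref{thm:lh_informal}.

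\textbf{Main steps.} First, I would formalize ``$H_B$ simulates $H_A$'' via encoding/decoding isometries that can be implemented by constant-depth circuits (the structure of $\mathbf{H}_n$ and the geometric layout of $H_B$ being fixed independently of $t$). Second, I would invoke the parallel Hamiltonian simulation algorithm of Haah et al.~\cite{Haah_2021} for geometrically constant-local Hamiltonians to realize $e^{-iH_B t'}$ by a quantum circuit of depth $O\bigl(t'\cdot\polylog(t'\cdot\poly(n)/\varepsilon)\bigr)$ with simulation error $\varepsilon$. Third, I would compose the encoding, the Haah-et-al.\ circuit, and the decoding to obtain a uniform quantum circuit of depth $O\bigl(t'\cdot\polylog(tn)\bigr)$ that simulates $e^{-iH_A t}$ for every $H_A\in\mathbf{H}_n$, setting $\varepsilon=1/\poly(n)$ so that the composed error stays $o(1)$; this only adds a polylogarithmic factor to the depth. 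Fourth, applying Theorem~\ref{thm:lh_informal} forces the depth of this composed circuit to be $\Omega(t/s(n))$, which rearranges to $t' = \Omega\bigl((t/s(n))/\polylog(tn)\bigr)$.

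\textbf{Main obstacle.} The principal technical work lies in pinning down the constants so that the bound matches the stated form $(t/2s(n) - O(s(n)))/\polylog(tn)$: the factor $2$ likely absorbs the polylogarithmic overhead together with Haah-et-al.\ constants, while the additive $O(s(n))$ term plausibly arises from (i) the constant-depth cost of the encoding/decoding and (ii) the regime in which Theorem~\ref{thm:lh_informal}'s lower bound becomes nontrivial (namely $t$ exceeding some threshold that scales with $s(n)$). A secondary subtlety is giving a sufficiently general notion of ``simulation via $H_B$'' to cover natural quantum-analog simulators, allowing ancilla qubits, arbitrary dilations, and possibly a time reparameterization; each such ingredient must be shown to be foldable into a constant-depth encoding/decoding so that it does not change the order of the final bound.
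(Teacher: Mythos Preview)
Your approach is correct and matches the paper's: the paper proves Corollary~\ref{cor:lower_bound_plain} precisely by assuming an $H_B$-simulator with small evolution time $t'$, compiling $e^{-iH_B t'}$ into a quantum circuit of depth $t'\cdot\polylog(t'n/\varepsilon)$ via the Haah et al.\ algorithm for geometrically local Hamiltonians, and then contradicting the circuit-depth lower bound of Theorem~\ref{thm:lower_bound_plain}. One clarification on your ``main obstacle'': the factor $2$ and the additive $O(s(n))$ term are both inherited directly from the full circuit lower bound (Theorem~\ref{thm:lower_bound_plain} gives depth $\geq d(\lfloor t/2s(n)\rfloor)-O(s(n))$), not from the Haah et al.\ compilation or any encoding/decoding overhead, so no additional work is needed there.
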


\begin{theorem}[No fast-forwarding for geometrically local Hamiltonians with natural simulators, simplified version of Corollary~\ref{cor:lower_bound_tdh}]
\label{thm:informal_naruto_2}
 Assuming the existence of iterative parallel-hard functions with size parameter $s(n)$, then for every polynomial $T(n)$, there exists a family of time-dependent geometrically local Hamiltonians $\mathbf{H}$ over $\widetilde{O}(n)$ qubits satisfying the following. 
For any geometrically constant-local Hamiltonian $H_B$ acting on $\poly(n)$ qubits, using $H_B$ to simulate any $H_A\in\mathbf{H}$ for any evolution time $t\in[0, ns(n)T(n)]$ needs an evolution time at least $\left( \frac{t}{ns(n)}-O(ns(n))-\polylog (n) \right)/\polylog(tn)$.
\end{theorem}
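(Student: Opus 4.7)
The plan is to reduce evolution-time lower bounds for analogue simulators to the circuit-depth lower bound already established in Theorem~\ref{thm:glh_informal}, using a parallel Hamiltonian-simulation algorithm for geometrically local Hamiltonians (for instance, that of Haah~et~al.~\cite{Haah_2021}) as the bridge.

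First, I would take $\mathbf{H}$ to be the hard family guaranteed by Theorem~\ref{thm:glh_informal} and consider any candidate analogue simulator: a geometrically constant-local Hamiltonian $H_B$ on $N=\poly(n)$ qubits together with encoding/decoding isometries $E$ and $D$ that embed the $n$-qubit logical input into the $H_B$-register and extract the $n$-qubit answer at the end. Write $t^{\star}$ for the simulator's evolution time needed to simulate $e^{-iH_A t}$ on an arbitrary input $\ket{\psi}$ to some small constant accuracy.

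Second, I would feed $e^{-iH_B t^{\star}}$ to Haah~et~al.'s parallel simulation, which produces a uniform quantum circuit on $\poly(n)$ qubits of depth $O\bigl(t^{\star}\cdot\polylog(tn)\bigr)$ that $\Omega(1)$-approximates $e^{-iH_B t^{\star}}$. Composing this circuit with (low-depth realisations of) $E$ on input and $D^{\dagger}$ on output yields a uniform quantum circuit $C_{n,t}$ on $\widetilde{O}(n)$ qubits, of total depth
\[
g_{C}(t)\;=\;O\bigl(t^{\star}\cdot\polylog(tn)\bigr)+O\bigl(ns(n)\bigr)+\polylog(n),
\]
which $\Omega(1)$-simulates $e^{-iH_A t}\ket{\psi}$ for every $H_A\in\mathbf{H}_n$ and every input $\ket{\psi}$.

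Third, I would apply Theorem~\ref{thm:glh_informal} to the resulting family $\{C_{n,t}\}$. Since $\mathbf{H}$ admits no $(T,g,\Omega(1))$-parallel-fast-forwarding with $g(t)=o\bigl(t/(ns(n))\bigr)$ on the interval $t\in[0,ns(n)T(n)]$, the depth must satisfy $g_{C}(t)=\Omega\bigl(t/(ns(n))\bigr)$. Solving for $t^{\star}$ gives
\[
t^{\star}\;\geq\;\frac{t/(ns(n))-O(ns(n))-\polylog(n)}{\polylog(tn)},
\]
which is exactly the claimed bound.

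The main obstacle is parameter bookkeeping rather than a new conceptual ingredient. Care is needed to (i) formalise the analogue-simulation model so that the isometries $E$ and $D$ can be realised by shallow circuits whose depth is absorbed in the additive $O(ns(n))+\polylog(n)$ slack; (ii) track the qubit count through the Haah~et~al.\ reduction so that the resulting circuit still uses only $\widetilde{O}(n)$ qubits as required by Theorem~\ref{thm:glh_informal}; and (iii) choose the internal precision of Haah~et~al.\ so that the total error of the combined circuit remains a constant, which only costs the stated $\polylog(tn)$ overhead. Once these three points are pinned down, the reduction is mechanical.
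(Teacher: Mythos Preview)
Your proposal is correct and follows essentially the same route as the paper: simulate the analogue simulator $H_B$ digitally via the Haah~et~al.\ algorithm to obtain a circuit of depth $O(t^{\star}\cdot\polylog(tn/\epsilon))$, then invoke the circuit-depth lower bound (Theorem~\ref{thm:lower_bound_dep}/\ref{thm:glh_informal}) to force $t^{\star}$ large. Your worry~(ii) about keeping the width at $\widetilde{O}(n)$ is unnecessary, since the underlying parallel-hardness (Definition~\ref{def:p-hard_funcs}) tolerates any $\poly(n)$-size circuit; the paper's proof accordingly does not track encoding/decoding isometries or width at all.
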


\section{Technical Overview}

\paragraph{The main idea} Our idea is to reduce some tasks that have a circuit or query depth lower bounds (i.e., parallel-hard problems) to simulating specific Hamiltonians with evolution time $t$, such that the existence of parallel fast-forwarding of the Hamiltonians will contradict the circuit depth lower bound and also violate the parallel hardness of the task. For instance, one can reduce parity, which is not in $\class{QNC}^0$ (the class of all constant-depth bounded fan-in circuits), to simulate a corresponding Hamiltonian $H$ with some time $t$, such that $e^{-iHt}$ outputs the parity of the input. Along this line, if $e^{-iHt}$ can be implemented by a constant-depth quantum circuit, we can compute parity -- this violates the quantum circuit lower bound on parity! Following the same idea, one can also derive some no-go results for parallel fast-forwarding from unstructured search, where the $k$-parallel quantum query complexity is ${\Theta}(\sqrt{N/k})$, where $k$-parallel means each ``query layer'' can have $k$ queries in parallel~\cite{JMD13,zalka1999grover}.

However, there are several challenges: First, those above-mentioned parallel-hard problems can be solved in depth smaller than the input size. This could result in a Hamiltonian simulation in which the evolution time is smaller than the number of qubits. Although this might still lead to an impossibility result for parallel fast-forwarding of an $o(n)$ evolution time, parallel fast-forwarding algorithms for such a short evolution time seem not that useful. In fact, to the best of our knowledge, it is not easy to find a problem that can be computed in quantum polynomial time while having a quantum depth strictly greater than the input size using $\polylog(n)$ parallel queries. So, one technical contribution of our work is finding such problems and proving their quantum depth. 

Second, finding appropriate reductions from the parallel-hard problems to Hamiltonians of our interest and preserving the input size and the quantum depth is also challenging. Note that we are focusing on sparse or local Hamiltonians with evolution time, a polynomial in the number of qubits. One intuitive approach is trying the circuit-to-Hamiltonian reduction in~\cite{nagaj2010fast,KSV02}. Briefly, the reduction uses a $t$-depth circuit on $n$ qubits to simulate a local Hamiltonian on $n+t$ qubits with time $t$, where the additional $t$ qubits are for the ``clock register''. This, as mentioned above, has an evolution time smaller than the number of qubits. In this work, we find reductions that map a $d$-depth $n$-qubit quantum computation with $d=\poly(n)$ to a local or sparse Hamiltonian with the number of qubits and evolution time ``close to'' $n$ and $t$ respectively. 

Another challenge is that we need the parallel-hard problem as an iterative structure to show no parallel fast-forwarding theorems. More specifically, our goal is to prove that some Hamiltonians cannot be parallel fast-forwarded with \emph{any evolution time in the specified range}. Therefore, we might need a sequence of parallel-hard problems such that there are corresponding parallel-hard problems \emph{for all $t$} in the range. 
In addition, given a parallel-hard problem with an iterative structure, it is not trivial how to reduce it to one Hamiltonian $H$ with different evolution times $t$ such that simulating $H$ for different $t$ gives the corresponding answers.   

\paragraph{Parallel hardness of the underlining assumptions} One candidate for parallel-hard problems with an iterative structure of our purpose is the \emph{hash chain}. Roughly speaking, let $\cX$ be a finite set and $h:\cX\rightarrow \cX$ be a hash function. An $s$-chain of $h$ is a sequence $x_0,x_1,\dots,x_s\in\cX$ such that $x_{i+1} = h(x_i)$ for any $i\in [s-1]$. Given quantum oracle access to $h$, the goal of the algorithm is to find an $s$-chain. Classically, it was proven that classical algorithms require query depth of at least $s$ to output an $s$-chain with constant probability. A similar result also holds for quantum algorithms that make quantum queries to the hash function~\cite{chung2021compressed}. Along this line, the hash chain seems ideal for our purpose because $s$ can be a polynomial in $\log(|\cX|)$ and has the iterative parallel hardness.

However, a hash function is generally irreversible, and this fails standard approaches for reducing the problem to Hamiltonian simulation. Briefly, one encodes $h$ as a Hamiltonian $H$ such that evaluating $h$ is equivalent to applying $e^{-iH}$. Since $e^{-iH}$ is a unitary that is reversible, evaluating $h$ also needs to be reversible. Here, we give \emph{permutation chain} and \emph{twisted hash chain} that are iteratively parallel-hard and the underlining function is reversible. However, the reversibility imposes another challenge, as the ability to query the inverse of the permutation breaks the known composed oracle techniques used to prove the hardness of hash chain~\cite{chung2021compressed}\footnote{One can use the technique in~\cite{zhandry13} to convert random permutations to random functions, but the conversion only works when the algorithm has no access to the inverse oracle.}. Therefore we tailored a two-step-hybrid argument to prove the hardness of the random permutation chain with the ability to query the inverse of the permutations.

Note that for oracle lower bounds of parallel query algorithms, while ~\cite{JMD13} gives optimal bounds by generalizing the adversary method, it is notoriously hard to find the suitable adversary matrices. Therefore we derive the query lower bounds for our problems by crafting a hybrid argument and using the compressed oracle technique~\cite{zhandry2019record} respectively.

\subsection{No parallel fast-forwarding for sparse Hamiltonians relative to random permutation oracle}
\label{subsec:sh_informal}
We first introduce the permutation chain and demonstrate how to prove Theorem~\ref{thm:sh_informal} via the \emph{graph-to-Hamiltonian reduction} based on the permutation chain. This shows no parallel fast-forwarding for sparse matrices relative to a random permutation oracle.

\paragraph{Permutation chain} One of the reversible parallel-hard problem we formulated is the \emph{permutation chain}. In this problem, we are given as inputs $q$  permutations of $N:=2^n$ elements $\Pi_1,\Pi_2,\dots,\Pi_q$.\footnote{They can be viewed as a special case of one permutation of $qN$ elements.} Let $\sPi$ be the unitary that enables one to query to each $\Pi_i$ and their inverses in superposition. Let $\xb_1=1$ and $\xb_{i+1}=\Pi_i(\xb_i)$ so that $\xb_{q+1}=\Pi_q(\dots \Pi_2(\Pi_1(1)))$. With $q$ queries to $\sPi$, it is easy to calculate $\xb_i$, while we prove that it is only possible to calculate $\xb_i$ with probability $O(q\sqrt{k/N})$ using $\floor{(q-1)/2}$ $k$-parallel queries\footnote{$k$-parallel means each ``query layer'' can have $k$ queries in parallel} to $\sPi$. Therefore if we have $q,k=O(\polylog(N))$, the success probability is negligible in $n$, even when $k$ is larger than $q$ and having access to the inverses of $\Pi_1,\Pi_2,\dots,\Pi_q$. 

To bound the success probability, we employed a two-step hybrid. First, we show that we can replace each $\sPi$ with $\sPitl$. $\sPitl$ is a set of functions that return zeros almost everywhere except at $\{\xb_i\}$, where they behave the same as $\sPi$ (see Figure~\ref{fig:hash_chain_permutation}(a)(b)). We prove that we can approximately simulate one call to a random $\sPi$ with two calls to $\sPitl$. Now, $\sPitl$ looks like a constant zeros function, we can erase some of its values without getting caught. In the second step, we show that we can release the $\Pitl_i$'s on a finely controlled schedule, with only negligible change in the output probability. Define $\Pi^\bot$ to be a constant zero function. Define $\sPitl_\ell$ to be the unitary corresponding to $\Pitl_1,\Pitl_2,\dots,\Pitl_{\ell}, \Pi^\bot, \dots$, \ie all but the first $\ell$ permutations are erased (see Figure~\ref{fig:hashchian_hybrid_2}). We show that if we replace the first $k$-parallel queries of $\sPitl$ with $\sPitl_1$, second $k$-parallel queries of $\sPitl$ with $\sPitl_2$, third $k$-parallel queries of $\sPitl$ with $\sPitl_3$, etc, we can only be caught with negligible probability. Intuitively, this is because while we are at the $i$-th query layer, it is hard to find any non-zero values of $\Pitl_{i+1}, \dots,\Pitl_{q}$. Therefore, if an algorithm only makes $q-1$ queries to $\sPitl$, we can replace the queries with $\sPitl_1, \sPitl_2, \dots, \sPitl_{q-1}$. It is impossible to find $\xb_{q+1}$ with non-negligible probability since these oracles do not have information of $\Pitl_q$.

\paragraph{Graph-to-Hamiltonian reductions}
The purpose of graph-to-Hamiltonian reduction is using quantum walk on a line \cite{Childs_2003} to solve the permutation chain. Briefly, we use a graph to encode the permutation chain and let Hamiltonian $H$ be the adjacency matrix that represents the graph. Then, the time evolution operator $e^{-iHt}$ helps to find the solution of permutation chain. Therefore, a low-depth Hamiltonian simulation algorithm for $H$ could result in breaking the hardness of permutation chain. This gives our first impossibility result of parallel fast-forwarding for sparse Hamiltonians.

Let $\Pi_1,\Pi_2,\dots\Pi_L$ be $L$ permutations over $N$ elements.
We use a graph with $N(L+1)$ vertices in which each vertex labelled by $(j,x)$ to record the permutation chain, where $j\in \{0,1,\dots,L\}$ and $x\in [N]$.
The vertices $(j,x)$ and $(j+1, x')$ are adjacent if and only if $x' = \Pi_{j+1}(x)$.
The construction of the graph has followings properties.
First, the graph consists of $N$ disconnected line because each $\Pi_j$ is a permutation. 
Second, each vertex $(q,x)$ that connects to $(0, x_0)$ satisfies $x_q = \Pi_q(\Pi_{q-1}\cdots(\Pi_1(x_0))$.
To solve the permutation chain problem, we start from the vertex $(0, x_0)$ and walk along the connected line.
When stopping at a vertex $(q, x_q)$, the pair $(x_q, x_0)$ would be a solution of permutation chain. It is obvious that the adjacency matrix of the corresponding graph is \emph{sparse}. We let the Hamiltonian $H$ determining the dynamics of the walk be the adjacency matrix of the graph, and our goal is to find $(x_q, x_0)$ by simulating $e^{-iHt}$ given \emph{sparse access} to $H$.

There are two main challenges for building such a reduction: First, we need to implement the sparse oracle access to the corresponding Hamiltonian. This requires oracle access to the permutation and inverse permutation oracle. More specific, we need to implement two oracles that are used in the Hamiltonian simulation algorithm to execute the quantum walk.
The first one is the entry oracle $\OracleH$, which answers the element value of $H$ when queried on the matrix index.
The second one is the sparse structure oracle $\OracleSparse$, which answers the indices of the nonzero entries when queried on the row index.
To implement $\OracleH$, it is equivalent to checking if two vertices $(j, x)$ and $(j+1, x')$ are adjacency.
It can be done by querying $\Pi_{j+1}(x)$.
To implement $\OracleSparse$, it is equivalent to finding the vertices that are adjacent to $(j,x)$.
Finding $(j+1,x')$ adjacent to $(j,x)$ can be done by querying $\Pi_{j+1}(x)$, but finding $(j-1,x'')$ needs to query $\Pi_{j}^{-1}(x)$. Hence, we need to consider the security of permutation chain when the inversion oracle $\Pi_{j}^{-1}$ is given to the adversary. We bypass this challenge by showing that the our permutation chain is secure against quantum adversaries even if inverse permutation oracle is given as we previously discussed.

Second, we need to show that the simulation algorithm is able to walk fast enough so that simulating $H$ for evolution time close to the length of the chain gives the solution to the permutation chain. To be more precise, we aim to design the system such that after walking for time $t$, it reaches the vertex further than $t$ with high probability. Recall that $H$ determining the dynamics of the walk is the adjacency matrix of the graph corresponding to the permutation chain. We observe that for such quantum walk system, it indeed reaches some points beyond $t$ for the walking time $t$ with high probability.
At any time $t$, the system is described by the quantum state $e^{-iHt}\ket{0, x_0}$. 
The probability of stopping on the vertex $(q, x_q)$ at time $t$ is $P(q) =\abs{\bra{q,x_q}e^{-iHt}\ket{0,x_0}}^2$.
We have $\abs{\bra{q,x_q}e^{-iHt}\ket{0,x_0}} = qJ_{q}(2t)/t$ for $t\in[0, L/2]$, where $J_q(\cdot)$ is the $q$-th order Bessel function \cite{Childs_2003}.
By the properties of Bessel function, we show that $\sum_{q=\ceil{t}}^{L} P(q)= O(1)$, which means that the probability of stopping at a vertex $(l, x_l)$ such that $l>t$ is high.
As a result, it breaks the hardness of permutation chain if $e^{-iHt}$ can be implemented with $o(t)$ queries.

\subsection{No parallel fast-forwarding for (geometrically) local Hamiltonians in the plain model}
\label{subsec:lh_informal}

To show no fast-forwarding of (geometrically) local Hamiltonians in the plain model, the combination of the permutation chain and the graph-to-Hamiltonian reduction used in Section~\ref{subsec:sh_informal} might be insufficient. First, it is unclear how to instantiate random permutation oracle. In addition, even if we can translate the permutation chain to a parallel-hard quantum circuit in the plain model, the graph-to-Hamiltonian reduction inherently provides sparse oracle access to the Hamiltonian from oracle access to the permutation chain. However, we need to have the full classical descriptions of each local term for simulating local Hamiltonians. 

Observing these difficulties, we introduce the twisted hash chain and the circuit-to-Hamiltonian reduction for proving Theorem~\ref{thm:lh_informal} and Theorem~\ref{thm:glh_informal}. 

\paragraph{Twisted hash chain}
In order to implement a reversible operation (or a permutation), we follow the idea of the Feistel network~\cite{Luby88}. Roughly speaking, the Feistel network is an implementation of block ciphers by using cryptographic hash functions. By means of chaining quantum query operators as in Figure~\ref{fig:feistel}, the outputs in each layer satisfy $x_i = H(x_{i-1}) \xor x_{i-2}$. Therefore, we can think of it as a ``quantum version'' of the Feistel networks. Informally, the goal of the algorithm is to output the head and tail of a chain of length $q+1$ by using at most $q$ depth of queries.

For proving the parallel hardness, we use the compressed oracle technique by Zhandry~\cite{zhandry2019record}. In particular, the analysis is undergone in the framework of Chung \etal~\cite{chung2021compressed} where they generalize the technique to the parallel query model. Our proof is inspired by the parallel hardness of the standard hash chain proven in~\cite{chung2021compressed}. For technical reasons, the challenge is the following: in the twisted hash chain problem, the algorithm is not required to output \emph{all} elements of the chain and their hash values. Therefore, we cannot directly apply the tools provided in~\cite{chung2021compressed}. In addition, we cannot simply ask the algorithm to spend extra queries for outputting the hash values since this would lead to a trivial bound (we call the extra queries for generating the whole chain the ``verification'' procedure). Instead, we need a more fine-grained analysis of the verification procedure. First, we notice that since $x_i = H(x_{i-1})\xor x_{i-2}$, the verification requires \emph{sequential} queries. Therefore, unlike Theorem~5.9 in \cite{chung2021compressed} where the verification procedure only requires parallel queries, the analysis for our purpose is more involved.

We bypass the aforementioned issue by reduction. Suppose there is an algorithm $\cA$ outputs $x_0,x_q,x_{q+1}$ such that $x_0,\dots,x_{q+1}$ form a $(q+1)$-chain by making $q$ $k$-parallel queries. Then we can construct a reduction $\cB$ which first runs $\cA$ and obtain $x_0,x_q,x_{q+1}$. Next, $\cB$ starts with $x_0,x_{q+1}$ and queries each element of the chain iteratively \emph{in parallel} until approaching $x_{q-1},x_{2q}$. If $\cA$ successfully outputs a $(q+1)$-chain, then it implies that $\cB$ also outputs the \emph{complete} $(2q+1)$-chain with hash values but $H(x_{2q+1})$ by making a total of $2q$ $k$-parallel queries. As a result, it remains to analyze the success probability of making the last additional query on $x_{2q+1}$. In this way, it significantly simplifies the proof.

\paragraph{Circuit-to-Hamiltonian reductions}
For our results in the plain model, we leverage the power of the random oracle heuristic. From the parallel hardness of twisted hash chain, we can obtain a heuristically parallel-hard circuit that preserves the iterative structure. Evaluation of this circuit to large depth directly translates to computing a hash chain of large length, which is assumed to be hard for low depth circuit. To translate the hardness to a no parallel fast-forwarding result, we embed the computation of the circuit to a Hamiltonian via two different approaches.

To embed circuit computation to a time-independent Hamiltonian, we use the technique from Nagaj \cite{nagaj2010fast}, which demonstrate how to transform a circuit computation with size $T$ to a Hamiltonian evolution problem of time $O(T\log T)$. In our work, we make two major modification upon Nagaj's technique. First, we observed that Nagaj's technique fits well with our iterated structure of circuit. At a high level, simulating Hamiltonian obtained from Nagaj's compiler can be interpreted as a quantum walk on a line, where each point on the line correspond to a computation step/gate of the circuit. Again by the detailed analysis on Bessel function that we used in the graph-to-Hamiltonian reduction, we observe that we can obtain a "depth $O(t)$" intermediate state of computing $C$ by evolving $H$ for time $O(t)$. This not only gives a better fast-forward lower bound, but also allows us to obtain a Hamiltonian that is hard to fast-forward on \emph{every} evolution time within time bound $T$. Second, Nagaj's construction gives a Hamiltonian of $O(n+T)$-qubits, where $n$ is the circuit input size and $T$ is the circuit size. This is an issue because it restricts our no fast-forwarding results to evolution times small than the Hamiltonian size. We overcome this by introducing a new design for the clock state via the Johnson graph. Our restructured clock state allows a fine-grained tradeoff between the locality parameter and the Hamiltonian size.

For our second construction, we achieve the geometrically local property with the power of time-dependent Hamiltonians. Our idea is to use the piecewise-time-independent construction from \cite{Haah_2021}, in which simulating the Hamiltonian for each time segment on the initial state behaves equivalently to applying a gate on the state. We take one step further by transforming our circuit to contain gates operating on neighboring gates only. This gives us a geometrically 2-local Hamiltonian which is hard to fast-forward. Combined with the algorithm that simulates geometrically local Hamiltonians also by \cite{Haah_2021}, our result tightens the gap between upper bounds and lower bounds to a small polynomial in qubit number $n$.

\begin{remark} Two things worth to be noted for the two approaches in Section~\ref{subsec:sh_informal} and Section~\ref{subsec:lh_informal}:
\begin{itemize}
    \item If one can instantiate random permutations by hash functions or other algorithms without using keys, one can obtain Theorem~\ref{thm:lh_informal} and Theorem~\ref{thm:glh_informal} by combining the permutation chain and the circuit-to-Hamiltonian reduction.
    \item The combination of the twisted hash chain and the circuit-to-Hamiltonian reduction can give no parallel fast-forwarding for Hamiltonians in the random oracle model. This is similar to Theorem~\ref{thm:sh_informal}; however, Theorem~\ref{thm:sh_informal} using the permutation chain and the graph-to-Hamiltonian reduction provides a better size of the Hamiltonians. In particular, the Hamiltonian in Theorem~\ref{thm:sh_informal} has the number of qubits independent of the evolution time, while the Hamiltonians given from the circuit-to-Hamiltonian reduction has the number of qubits that is poly-logarithmic in the evolution time.
\end{itemize}
\end{remark}

\section{Open Questions}

In this work, we showed that the existence of a parallel-hard problem with an iterative structure implies no parallel fast-forwarding of sparse and (geometrically) local Hamiltonians under cryptographic assumptions. Along this line, the first question that is natural to ask is whether there exist more Hamiltonians that have succinct descriptions and cannot be parallelly fast-forwarded under other computational assumptions.

We are also wondering whether the existence of parallel-hard problems with an iterative structure is equivalent to no parallel fast-forwarding. This is equivalent to proving or disproving that no parallel fast-forwarding results in parallel-hard problems with an iterative structure. Intuitively, One can show that the existence of Hamiltonians that cannot be parallelly fast-forwarded implies some quantum circuits that have no smaller circuit depth. This follows from the fact that if one can implement a quantum circuit with a depth smaller than the quantum simulation algorithm for the Hamiltonian, one can achieve parallel fast-forwarding. However, this task asks the algorithm to output quantum states close to $e^{-iHt}\ket{\psi}$ and thus is not a ``classical computational problem'' as parallel-hard problems with an iterative structure.

In addition, we want to match the upper and lower bounds for parallel fast-forwarding of Hamiltonian simulation. For instance, for geometrically local Hamiltonians, the algorithms in~\cite{Haah_2021} require depth $O(t\cdot \polylog(tn/\epsilon))$, where $n$ is the number of qubits and $\epsilon$ is the precision parameter. There is still a $O(1/ns(n))$ gap compared to our result in Theorem~\ref{thm:glh_informal}. Likewise, our results for sparse (Theorem~\ref{thm:sh_informal}) and local Hamiltonians (Theorem~\ref{thm:lh_informal}) have not matched the upper bounds from known quantum simulation algorithms, such as~\cite{zhang2021parallel, LC17,LC19}.

The questions mentioned above are to investigate the optimal quantum circuit depth for Hamiltonian simulation under certain computational assumptions. Note that the Hamiltonian simulation problem has classical inputs and quantum outputs. Inspired by this, we are wondering a more general question: \emph{is it possible to prove quantum circuit lower bounds for complexity classes that have classical inputs and quantum outputs?} For example, can we unconditionally show quantum circuit depth lower bounds for Hamiltonian simulation or some quantum states with succinct classical descriptions? Note that although showing circuit depth lower bounds for languages is challenging and has some barriers, complexity classes with quantum outputs might have specific properties and provide new insights into showing quantum circuit depth lower bounds. 
\section{Preliminaries and Notation}

\subsection{Notation}
For $n\in\N$, we use $[n]$ to denote the set $\{1, 2, \dots , n\}$.
The trace distance between two density matrices $\rho$ and $\sigma$ is denoted by $\trdist(\rho, \sigma):= \frac{1}{2}\norm{\rho-\sigma}_1= \frac{1}{2}\tr\left(\sqrt{(\rho-\sigma)^{\dagger}(\rho-\sigma)} \right)$.
Let $x_1,x_2$ be $n$-bit strings, we use $x_1 \oplus x_2$ to denote the bitwise XOR of $x_1$ and $x_2$. The Kronecker delta is denoted by $\delta_{jk}$ where $\delta_{jk} = 0$ if $j\neq k$ and $\delta_{jk} = 1$ if $j = k$. 

\subsection{Hamiltonian simulation}

\begin{definition}[Hamiltonian simulation]
A Hamiltonian simulation algorithm $\cA$ takes as inputs the description of the Hamiltonian $H$, an initial state $\ket{\psi_0}$, the evolution time $t \geq 0$ and an error parameter $\epsilon \in (0, 1]$. 
Let $\ket{\wt{\psi_t}}$ be the ideal state under the Hamiltonian $H$ for evolution time $t$ with the initial state $\ket{\psi_0}$.
In other words, $\ket{\wt{\psi_t}} := e^{-iHt}\ket{\psi_0}$ for a time-independent $H$, and $\ket{\wt{\psi_t}} := \exp_{\cT}\left(-i\int_{0}^{t} H(t') dt' \right)\ket{\psi_0}$ for a time-dependent $H$, where $\exp_{\cT}$ is the time-ordered matrix operator.
The goal of $\cA$ is to generate an approximation $\ket{\psi_{t}}$ of the evolved quantum state $\ket{\wt{\psi_{t}}}$ such that 
\[
\trdist\left(\ket{\psi_{t}}\bra{\psi_{t}}, \ket{\wt{\psi}_{t}}\bra{\wt{\psi}_{t}}\right) \leq \epsilon.
\]
\end{definition}

\subsection{Basic quantum computation}
Below, we provide a brief introduction to quantum computation. For more basics, we refer the readers to \cite{nielsen2010quantum}. Throughout this work, we use the standard bra-ket notation.

\begin{definition}[Quantum circuit model]
A quantum circuit consists of qubits, a sequence of quantum gates, and measurements. A qubit is a two-dimensional complex Hilbert space. Each qubit is associated with a \emph{register}. A quantum gate is a unitary operator acting on quantum registers. We say a quantum gate is a $k$-qubit gate if it acts non-trivially on $k$ qubits.
\end{definition}

\begin{theorem}[Universal gate sets \cite{BMP+99}]
There exists a \emph{universal gate set} that consists of a finite number of quantum gates such that  any unitary operator can be approximated by composing elements in the universal gate set within an arbitrary error. Furthermore, every element in the universal gate set is a one- or two-qubit gate.
\end{theorem}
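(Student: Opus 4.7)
The plan is to prove the theorem in three stages, following the classical decomposition framework. First, I would show that any $n$-qubit unitary $U$ acting on the $2^n$-dimensional Hilbert space can be written as a product of $O(4^n)$ \emph{two-level} unitaries, i.e.\ unitaries that act nontrivially only on the two-dimensional subspace spanned by two computational basis states. This follows by iteratively applying Givens-style rotations to zero out subdiagonal entries column by column, exactly analogous to $QR$-decomposition.

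Second, I would decompose each two-level unitary into one- and two-qubit gates. A two-level unitary acting on basis states $\ket{a}$ and $\ket{b}$ can be realized by conjugating a controlled single-qubit rotation with a short sequence of multiply-controlled NOT gates that permute $a$ and $b$ to differ in exactly one bit via a Gray code. The multiply-controlled NOTs can then be synthesized using ancillas and standard Toffoli/CNOT constructions, reducing everything to one- and two-qubit gates.

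Third, I would show that a finite set such as $\{H, T, \mathrm{CNOT}\}$ generates a dense subgroup of the unitary group modulo global phase. The key step is the single-qubit lemma: the group generated by $H$ and $T$ is dense in $\mathrm{SU}(2)$. This can be established by observing that $T$ and $HTH$ are rotations about two non-parallel axes by an angle that is an irrational multiple of $\pi$, so the subgroup generated by each is dense along its own rotation axis; combining with the standard fact that a subgroup of a compact connected Lie group that is dense along two non-commuting one-parameter subgroups is dense in the whole group yields density in $\mathrm{SU}(2)$. Together with the previous stage, any $n$-qubit unitary can be approximated within arbitrary error $\epsilon$.

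The main obstacle is the irrationality/density argument in the third stage: verifying explicitly that the chosen generators produce a rotation angle incommensurate with $\pi$ and that the commutator structure covers all of $\mathrm{SU}(2)$ is the only non-routine ingredient, since stages one and two are linear-algebraic bookkeeping. Quantitative efficiency of the approximation (polylogarithmic in $1/\epsilon$) would follow from invoking Solovay--Kitaev, but for the statement as given, only the existence of an approximation is required.
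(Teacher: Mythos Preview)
The paper does not supply its own proof of this statement: it appears in the preliminaries as a cited background result from \cite{BMP+99}, with no accompanying argument. So there is no paper proof to compare against.

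Your outline is the standard textbook route (essentially the Barenco et al.\ decomposition combined with the Solovay--Kitaev density argument, as in Nielsen--Chuang), and it is correct at the level of a proof sketch. The three stages are exactly the right decomposition, and you have correctly identified the only genuinely non-mechanical step as the irrationality/density claim for the single-qubit generators. One small point: in stage two you mention ancillas for synthesizing multiply-controlled NOTs; strictly speaking the statement is about approximating unitaries on the given Hilbert space, so you should either note that the ancillas are returned to their initial state (so the action on the original space is the desired unitary) or use the ancilla-free Barenco decomposition of $C^k$-NOT into $O(k^2)$ Toffolis and then Toffolis into one- and two-qubit gates. Either fix is routine.
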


\begin{definition}[Quantum circuit depth]
Given a finite-sized gate set $\cG$, a \emph{$d$-depth quantum circuit} or a \emph{quantum circuit of depth $d$ \wrt $\cG$} consists of a sequence of $d$ layers of gates such that (i) each gate belongs in $\cG$ and (ii) each gate within the same layer acts on disjoint qubits.
We omit the gate set $\cG$ when it is clear from the context.
\end{definition}

\begin{definition}[Quantum query operator]
Given an oracle $f: \bits^n \to \bits^m$,
the \emph{query operator} $\mathcal{O}_f$ is defined as
\[
\mathcal{O}_f\ket{x,y} \coloneqq \ket{x,y\oplus f(x)}.
\]
\end{definition}

\begin{definition}[Parallel quantum query operator]
  \label{dfn:parallel_query}
  Given an oracle $f: \bits^n \to \bits^m$.
  The \emph{$k$-parallel query operator} $\mathcal{O}^{\otimes k}_f$ is defined as
  \[
  \mathcal{O}^{\otimes k}_f\ket{\bfx,\bfy}
  \coloneqq \ket{\bfx,\bfy\xor f(\bfx)},
  \]
  where $\bfx=(x_1,\dots,x_k)$,  $\bfy=(y_1,\dots,y_k)$ and $f(\bfx) \coloneqq (f(x_1),\dots f(x_k))$.
\end{definition}

\subsection{Useful tools}
In this subsection, we introduce several definitions and lemmas for analyzing quantum random walk in Section~\ref{sec:Q_walk} and the clock state construction in Section~\ref{sec:plain}. 

\subsubsection{Bessel functions}
The Bessel functions of the first kind of order $n$ are denoted by $J_n(x)$. We present the required properties of Bessel functions for our use.

\begin{itemize}
    \item The integration form of the Bessel function:
    \begin{equation}
        \label{eq:jn_integral}
        J_{n}(x) = \frac{1}{2\pi}\int_{-\pi}^{\pi}dp\ e^{inp - ix\sin p}
        = \frac{i^{n}}{2\pi} \int_{-\pi}^{\pi}dp\ e^{inp - ix\cos p}.
    \end{equation}
    \item
    The relation between $J_n$ and $J_{-n}$:
    \begin{equation}
        \label{eq:jn_minus_n}
        J_{-n}(x) = (-1)^{n} J_n(x).
    \end{equation}
    \item The recursion formula for integer orders:
    \begin{equation}
        \label{eq:jn_recursion}
        J_{n+1}(x) = \frac{2n}{x} J_{n}(x) - J_{n-1}(x).
    \end{equation}
    \item
    The asymptotic form for large order
    \begin{equation}
    \label{eq:jn_large_order_xltn}
        J_{n}(n\ \mathrm{sech} \xi) \sim \frac{e^{-n(\xi - \tanh \xi)}}{\sqrt{2\pi n\tanh \xi}}
    \end{equation}
    suggests that when $x<\abs{n}$, the value of $J_n(x)$ is exponentially small in $n$.
\end{itemize}

The following lemmas provide upper bounds for Bessel functions for large argument $x$.

\begin{lemma}[Theorem 2 in \cite{Kra06}]\label{lem:bessel}
  Let $n > -1/2$ and $\mu \coloneqq (2n+1)(2n+3)$.
  For any $x > \sqrt{\mu + \mu^{2/3}}/2$, it holds that
\[
J_{n}^2 (x)
\leq \frac{4\left(4x^2-(2n+1)(2n+5)\right)}{\pi\left((4x^2-\mu)^{3/2}-\mu\right)}.
\]
\end{lemma}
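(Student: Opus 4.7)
The plan is to establish the bound through a Sonine--P\'olya type monotonicity argument applied to a carefully chosen Lyapunov quantity built from $J_n$ and $J_n'$. First I would put the Bessel equation into normal form by setting $u(x) := x^{1/2} J_n(x)$, which satisfies
\[
u''(x) + \phi(x)\, u(x) = 0, \qquad \phi(x) = 1 - \frac{n^2 - 1/4}{x^2}.
\]
This removes the first-derivative term and, for $x$ large enough that $\phi > 0$, places us in the oscillatory regime where the asymptotic behavior of solutions is well controlled.

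Next I would introduce a Lyapunov function of the form
\[
F(x) = P(x)\, u(x)^2 + Q(x)\, u'(x)^2 + R(x)\, u(x) u'(x),
\]
with weights $P, Q, R$ to be determined, and compute $F'(x)$ using $u'' = -\phi u$. The goal is to impose one relation so the mixed $u u'$ terms cancel, and then use the remaining degree of freedom so that $F'(x)$ has a definite sign on the interval $x > \sqrt{\mu + \mu^{2/3}}/2$. The specific combination $(2n+1)(2n+5)$ in the numerator and $\mu = (2n+1)(2n+3)$ in the denominator of the claimed bound strongly suggest an ansatz in which $P$ is an affine function of $x^2$ involving the first combination, while the weight on $u'^2$ is built from a fractional power of $4x^2 - \mu$. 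Integrating the first-order ODE that enforces the cancellation then produces the $(4x^2 - \mu)^{3/2}$ factor, and the additive $-\mu$ appears as an integration constant fixed by matching the limiting value of $F$ at infinity.

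The third step would pin down the overall constant by using the classical asymptotic $J_n(x) \sim \sqrt{2/(\pi x)}\cos(x - n\pi/2 - \pi/4)$. Translating to $u$ and $u'$, the leading oscillation gives $\lim_{x\to\infty}(u^2 + u'^2) = 2/\pi$, which together with the chosen weights forces $\lim_{x\to\infty} F(x)$ to equal a prescribed multiple of $1/\pi$. Combining monotonicity of $F$ with this boundary value, and inverting the relation $u^2 = x J_n^2$, yields the claimed inequality.

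The hardest part will be the algebraic bookkeeping. Concretely, the weights $P, Q, R$ have to simultaneously (i) eliminate the cross terms in $F'$, (ii) produce a $F'$ of definite sign \emph{exactly} on $x > \sqrt{\mu + \mu^{2/3}}/2$, so that this threshold emerges as a discriminant-type condition rather than being imposed by hand, and (iii) normalize so that the numerical factor comes out to $4/\pi$. The half-integer exponent in $(4x^2 - \mu)^{3/2}$ tells us that at least one of the weights cannot be polynomial but arises from integrating the ODE implied by the cancellation condition, and getting the subtractive $-\mu$ correct requires a careful choice of the corresponding integration constant. Since this bound is Theorem~2 of \cite{Kra06}, we invoke it as a black box rather than carrying out these computations in full, observing only that the bound is sharp to leading order in $x$ since both sides behave like $2/(\pi x)$ as $x \to \infty$.
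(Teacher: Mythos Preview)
The paper does not prove this lemma at all: it is stated purely as a citation of Theorem~2 in \cite{Kra06} and used as a black box to derive Lemma~\ref{lem:bessel_tail}. Your proposal ultimately does the same thing in its last paragraph, so in that sense you match the paper exactly. The preceding sketch of a Sonine--P\'olya monotonicity argument is a reasonable outline of how Krasikov-type bounds are typically obtained, but it is not something the present paper undertakes, so there is nothing to compare against here beyond noting that both you and the paper defer to \cite{Kra06}.
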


By Lemma~\ref{lem:bessel}, we have the following lemma which is more convenient for our use.

\begin{lemma}\label{lem:bessel_tail}
Let $n$ be a positive integer. For any real $x \geq 2n$, it holds that
\[
    J_{n}^2 (x) \leq \frac{2}{n\pi}.\footnote{The bound reminds us of the well known formula: $J_n(x)\sim \sqrt{2/(\pi x)} \left(\cos (x- nx/2 -\pi/4 ) +O(x^{-1})\right)$. However, it holds for $x>n^2$ only.}
\]

\end{lemma}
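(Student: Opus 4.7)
The plan is to derive the desired bound as a direct corollary of Krasikov's estimate (\lemmaref{lem:bessel}). First I would verify that the hypothesis $x \geq 2n$ is strong enough to meet the precondition $x > \sqrt{\mu + \mu^{2/3}}/2$ of that lemma. Since $\mu = (2n+1)(2n+3) = 4n^2 + 8n + 3$, the inequality $16n^2 \geq \mu + \mu^{2/3}$ holds for every $n$ above a small absolute threshold, so $4x^2 \geq 16n^2 > \mu + \mu^{2/3}$ follows from $x \geq 2n$; the finitely many remaining pairs can be handled by a direct numerical inspection using standard tables of $J_n$, together with the universal bound $|J_n(x)| \leq 1$.

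Second, substituting into the conclusion of \lemmaref{lem:bessel}, it suffices to establish
\begin{equation*}
\frac{4\bigl(4x^2 - (2n+1)(2n+5)\bigr)}{\pi\bigl((4x^2 - \mu)^{3/2} - \mu\bigr)} \leq \frac{2}{n\pi}.
\end{equation*}
Clearing denominators and using the identity $(2n+1)(2n+5) - \mu = 4n+2$, this reduces, with $u \coloneqq 4x^2 - \mu$, to
\begin{equation*}
2nu - 4n^2 + 4n + 3 \leq u^{3/2}.
\end{equation*}
From $x \geq 2n$ one gets $u \geq 16n^2 - \mu = 12n^2 - 8n - 3$, which exceeds $4n^2$ for all $n \geq 2$; hence $u^{3/2} \geq 2n \cdot u$, and since $-4n^2 + 4n + 3 \leq 0$ for $n \geq 2$ as well, the inequality holds with room to spare. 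The remaining boundary case $n = 1$, where Krasikov's hypothesis is borderline, is handled separately by a direct numerical check of $J_1^2(x)$ on a compact interval $x \in [2, C]$ combined with the standard oscillatory asymptotic $J_1^2(x) = O(1/x)$ for $x \geq C$.

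The proof is thus a short polynomial manipulation; the only real obstacle is the bookkeeping of the small-$n$ boundary where Krasikov's inequality is only marginally applicable, and a brief manual verification there. Once that is disposed of, the comfortable gap between the cubic growth $u^{3/2} = \Theta(n^3)$ and the linear-in-$u$ term $2nu = \Theta(n^3)$ ensures the inequality robustly, so no delicate estimate beyond \lemmaref{lem:bessel} is required.
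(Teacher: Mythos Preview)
Your proposal is correct and follows essentially the same strategy as the paper: apply Krasikov's inequality (\lemmaref{lem:bessel}) for $n$ beyond a small threshold, and dispose of the remaining small-$n$ cases by citing known numerical maxima of $J_n$. The only differences are in the algebra and the bookkeeping. The paper bounds numerator and denominator of Krasikov's expression separately, arriving at $J_n^2(x) \le \tfrac{6}{\pi\sqrt{4x^2-\mu}}$ and then using $4x^2-\mu \ge 9n^2$; this chain happens to need $n\ge 3$, so the paper treats $n=1,2$ by hand. Your direct cross-multiplication, reducing to $2nu - 4n^2 + 4n + 3 \le u^{3/2}$ with $u=4x^2-\mu$, is a bit tighter and already goes through for $n\ge 2$, leaving only $n=1$ as a genuine special case. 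One small omission: when you clear denominators you should note that $(4x^2-\mu)^{3/2}-\mu>0$ in the relevant range (it follows immediately from your bound $u>4n^2$), so the inequality direction is preserved. For $n=1$ the paper simply quotes the global maximum $\max_x J_1^2(x)\approx 0.339 < 2/\pi$, which is shorter than your compact-interval-plus-asymptotic plan but amounts to the same thing.
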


\begin{proof}
We discuss the behavior of Bessel functions in three cases: $n=1$, $n=2$, and $n\ge 3$.
For $n=1$, the maximum of $J^2_1(x)$ is $0.339\dots$ which is less than $2/\pi \approx 0.637$.
For $n=2$, the maximum of $J^2_2(x)$ is $0.237\dots$ which is less than $1/\pi \approx 0.318$.

Now let us analyze the case in which $n\geq 3$. First, we notice that when $x > 2n$, the conditions in Theorem~\ref{lem:bessel} hold. This is because $\mu+\mu^{2/3} < 2\mu$ and then
\[
\frac{\sqrt{\mu+\mu^{2/3}}}{2}
< \frac{\sqrt{2\mu}}{2}
= \sqrt{2n^2+ 4n +\frac{3}{2}}
< \sqrt{4n^2} = 2n < x,
\]
where the second inequality holds when $n \geq 3$.

Now, we will finish the proof by bounding the numerator and the denominator of the RHS in Lemma~\ref{lem:bessel}. For the numerator, we have
\[
4\left(4x^2 - (2n+1)(2n+5) \right)
< 4\left(4x^2 - (2n+1)(2n+3) \right)
= 4\left(4x^2 - \mu \right).
\]
For the denominator, we will show that
\[
(4x^2 - \mu)^\frac{3}{2} -\mu > \frac{2}{3}(4x^2 - \mu)^\frac{3}{2}
\]
or equivalently
\[
\frac{1}{3}(4x^2 - \mu)^\frac{3}{2} > \mu.
\]

First, since $x \geq \sqrt{2\mu}/2$, we have $4x^2 - \mu \geq \mu$.
Furthermore, when $n \geq 3$ we have $\mu \geq 35$, which would imply $\frac{1}{3} \mu^{3/2} > \mu$.
Hence, we conclude that $\frac{1}{3}(4x^2 - \mu)^\frac{3}{2} \geq \frac{1}{3} \mu^{3/2} > \mu$.
Putting things together, we obtain
\[
J_{n}^2(x) < \frac{4}{\pi}\cdot \frac{(4x^2-\mu)}{\frac{2}{3}(4x^2-\mu)^{3/2}}
= \frac{4}{\pi}\cdot \frac{1}{\frac{2}{3}\sqrt{4x^2 - \mu}}.
\]
When $x > 2n$ and $n > 3$, it holds that $4x^2-\mu \geq 16n^2 - (4n^2 +8n +3) \geq 9n^2$. Therefore, we finally obtain
\[
    J_n^2(x) \leq \frac{4}{\pi}\cdot \frac{1}{\frac{2}{3}\sqrt{4x^2-\mu}}
    \leq \frac{4}{\pi}\cdot \frac{1}{\frac{2}{3}\cdot 3n}
    = \frac{2}{n\pi}.
\]
This finishes the proof.
\end{proof}

\subsubsection{Johnson graph}

\begin{definition}[Johnson Graph]
  \label{def:Johnson_graph}
    For all integers $n\geq k\geq 1$, the \emph{$(n,k)$-Johnson graph} $J_{n,k} = (V,E)$ is an undirected acyclic graph defined as follows.
    \begin{itemize}
        \item $V \coloneqq \set{S \subseteq [n] \colon |S| = k}$, \ie the vertices are the $k$-element subsets of an $n$-element set.
        \item $E \coloneqq \set{(S_0,S_1) \colon |S_0 \cap S_1| = k-1}$, \ie there is an edge if and only if the intersection of the two vertices (subsets) contains $k-1$ elements.\footnote{Equivalently, we can define $E \coloneqq \set{(S_0,S_1) \colon |S_0 \cup S_1| = k+1}$.}
    \end{itemize}
    The number of vertices in $J_{n,k}$ is $\binom{n}{k}$.
    It was proven that for all integers $n\geq k\geq 1$, there exists a Hamiltonian path\footnote{A Hamiltonian path is a path that visits every vertex in the graph exactly once. Do not confuse it with the physical quantity we want to simulate.} in $J_{n,k}$\cite{alspach2012johnson}. 
\end{definition}
\section{Lower Bounding Permutation Chain}
\label{sec:hashchain}

\begin{definition}[Permutation notations]\label{def:perm-notation}
Here we define several notations for the later proofs. 
Let $\Pi_1,\Pi_2,\dots,\Pi_q$ be permutations of $N$ elements. 
Let $\Pi^{-1}_1, \Pi^{-1}_2, \dots ,\Pi^{-1}_q$ be the corresponding inverse permutations.
Define the sets $[-q]:=\{-q,-q+1,\dots,-1\}$ and $[\pm q]:=[q]\cup[-q]$. 
We define the unitary $\sPi$ as the controlled version of the above permutations as
\begin{align}
    \sPi\ket{j,x,r}:=\begin{cases}
    \ket{j,x,r\oplus \Pi_j(x)} &,j>0   \\ 
        \ket{j,x,r\oplus \Pi^{-1}_{|j|}(x)} &,j<0
    \end{cases}  
\end{align}
where $j\in [\pm q]$ and $x,r \in [N]$. \\
We denote the elements of the chain by $\xb_1:=1$ and $\xb_{i+1}:=\Pi_i(\xb_{i})$ for all $i\in [k]$.
Next, we define $\Pitl_i$ to be the ``erased'' $\Pi_i$ for all $i\in[q]$. Formally, $\Pitl_i$ is defined to be the function $[N]\rightarrow [N]\cup \{0\}$ such that
\begin{align}
    \Pitl_i(x)=\begin{cases} \xb_i &,x=\xb_{i-1} \\
    0 &,\text{otherwise}.
    \end{cases}
\end{align}

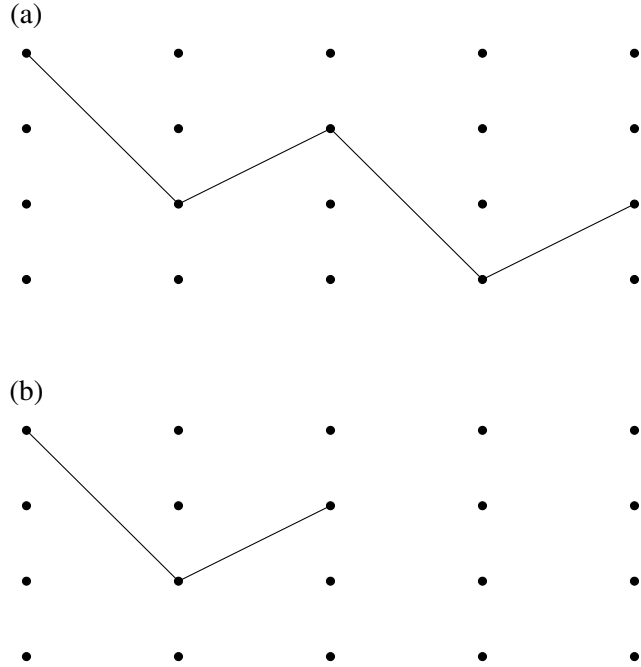
\begin{figure}[ht]
    \centering
    \begin{adjustbox}{max width = \textwidth}
\begin{tikzpicture}
  \node at (0, 3.5) {(a)};

  \filldraw [black] (0,3) circle (1.5pt);
  \filldraw [black] (0,2) circle (1.5pt);
  \filldraw [black] (0,1) circle (1.5pt);
  \filldraw [black] (0,0) circle (1.5pt);

  \filldraw [black] (2,3) circle (1.5pt);
  \filldraw [black] (2,2) circle (1.5pt);
  \filldraw [black] (2,1) circle (1.5pt);
  \filldraw [black] (2,0) circle (1.5pt);
  
  \filldraw [black] (4,3) circle (1.5pt);
  \filldraw [black] (4,2) circle (1.5pt);
  \filldraw [black] (4,1) circle (1.5pt);
  \filldraw [black] (4,0) circle (1.5pt);
  
  \filldraw [black] (6,3) circle (1.5pt);
  \filldraw [black] (6,2) circle (1.5pt);
  \filldraw [black] (6,1) circle (1.5pt);
  \filldraw [black] (6,0) circle (1.5pt);
  
  \filldraw [black] (8,3) circle (1.5pt);
  \filldraw [black] (8,2) circle (1.5pt);
  \filldraw [black] (8,1) circle (1.5pt);
  \filldraw [black] (8,0) circle (1.5pt);

  \draw (0,3) -- (2,1);
  \draw (2,1) -- (4,2);
  \draw (4,2) -- (6,0);
  \draw (6,0) -- (8,1);

  %==========
  \node at (0, -1.5) {(b)};

  \filldraw [black] (0,-2) circle (1.5pt);
  \filldraw [black] (0,-3) circle (1.5pt);
  \filldraw [black] (0,-4) circle (1.5pt);
  \filldraw [black] (0,-5) circle (1.5pt);

  \filldraw [black] (2,-2) circle (1.5pt);
  \filldraw [black] (2,-3) circle (1.5pt);
  \filldraw [black] (2,-4) circle (1.5pt);
  \filldraw [black] (2,-5) circle (1.5pt);
  
  \filldraw [black] (4,-2) circle (1.5pt);
  \filldraw [black] (4,-3) circle (1.5pt);
  \filldraw [black] (4,-4) circle (1.5pt);
  \filldraw [black] (4,-5) circle (1.5pt);
  
  \filldraw [black] (6,-2) circle (1.5pt);
  \filldraw [black] (6,-3) circle (1.5pt);
  \filldraw [black] (6,-4) circle (1.5pt);
  \filldraw [black] (6,-5) circle (1.5pt);
  
  \filldraw [black] (8,-2) circle (1.5pt);
  \filldraw [black] (8,-3) circle (1.5pt);
  \filldraw [black] (8,-4) circle (1.5pt);
  \filldraw [black] (8,-5) circle (1.5pt);

  \draw (0,-2) -- (2,-4);
  \draw (2,-4) -- (4,-3);

\end{tikzpicture}
\end{adjustbox}
    \caption{
    Schematic diagram of $\sPitl$ and $\sPitl_\ell$. 
    (a) The permutation chain $\xb_1,\xb_2,\dots\xb_{q+1}$ specified by $\sPitl$.
    (b) $\sPitl_\ell$, where the permutations are removed after $\ell$ queries.
    }
    \label{fig:hashchian_hybrid_2}
\end{figure}

Similarly, we define the corresponding controlled unitary $\sPitl$. For all $i\in[k]$, define $\dxb_i:=\xb_{i+1}-\xb_i \mod{N}$. Note that $\sPitl$ can be parameterized by either $\{\xb_2,\dots,\xb_{q+1} \}$ or $\{\dxb_1,\dots,\dxb_q \}$. 
Denote the transformation from $\sPi$ to $\sPitl$ by $\sPitl=F(\sPi)$. 
For all $\ell\in [q]$, define the hybrid oracle $\sPitl_{\ell}$ as
\begin{align}
    \sPitl_{\ell}\ket{j,x,r}:=\begin{cases}
    \ket{j,x,r\oplus \Pitl_j(x)} &,|j|\leq \ell, j>0   \\ 
    \ket{j,x,r} &, j>\ell \\
    \ket{j,x,r\oplus \Pitl^{-1}_{|j|}(x)} &,|j|\leq \ell,j<0 \\
    \ket{j,x,r} &, j<-\ell.
    \end{cases}  
\end{align}
\end{definition}

\begin{theorem}\label{thm:hashchain_main}
Use the notations of Definition~\ref{def:perm-notation}. Let $q,k$ be integers such that $k=O(\polylog(N))$ and $q=O(\polylog(N))$.
For any quantum algorithm $\cA$ using $\floor{(q-1)/2}$ $k$-parallel queries to $\sPi$, we have 
$$\E_{\sPi}\left[\Pr[ \cA^{\sPi} \text{ outputs } \xb_{q+1}]\right] = \proboneq .$$
\end{theorem}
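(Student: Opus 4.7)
The plan is to execute the two-step hybrid argument sketched in the technical overview. Bounding $\E_{\sPi}[\Pr[\cA^{\sPi}\text{ outputs }\xb_{q+1}]]$ directly is awkward because of the inverse-query access to the permutations; instead I would first reshape the oracle into the ``erased'' form $\sPitl$ (Step~1), and then gradually erase the tail permutations $\Pitl_{i+1},\dots,\Pitl_q$ along the course of the algorithm's execution (Step~2). Each hybrid transition contributes $O(\sqrt{k/N})$ from a $k$-parallel Grover-type lower bound, and summing over the $O(q)$ transitions yields the claimed $\proboneq$.

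\textbf{Step 1: replacing $\sPi$ with $\sPitl$.} I would first prove a simulation lemma stating that a single $k$-parallel call to $\sPi$ (averaged over uniformly random $\Pi_1,\dots,\Pi_q$) can be implemented, up to small trace-distance error, by at most two $k$-parallel calls to $\sPitl$ together with $\Pi$-independent local unitaries on internal registers. The intuition is lazy sampling: outside the chain $\{\xb_i\}$, a uniformly random permutation is indistinguishable from a fresh draw from the remaining range, so the simulator can use one $\sPitl$-call to detect whether a query input lies on the chain and a second call to handle the inverse direction consistently. Under this reduction, the $\floor{(q-1)/2}$ queries to $\sPi$ translate into at most $q-1$ $k$-parallel queries to $\sPitl$, so the task reduces to bounding the probability that an algorithm making $q-1$ $k$-parallel queries to $\sPitl$ outputs $\xb_{q+1}$.

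\textbf{Step 2: staggered erasure $\sPitl \to \sPitl_\ell$.} I would next introduce a sequence of hybrids in which the $i$-th $k$-parallel layer of $\sPitl$ is replaced by $\sPitl_i$, erasing $\Pitl_{i+1},\dots,\Pitl_q$ during that layer (cf.\ Figure~\ref{fig:hashchian_hybrid_2}(b)). The change between adjacent hybrids $\sPitl_{i-1}\to \sPitl_i$ at layer $i$ is controlled by the following observation: conditioned on the algorithm's state just before layer $i$, no information about the chain tail $\xb_{i+1},\dots,\xb_q$ has yet been revealed, so these secrets are essentially uniform over the remaining range. Detecting the switch therefore reduces to finding a non-zero input of one of the sparse oracles $\Pitl_{i+1},\dots,\Pitl_q$, each supported on a single hidden point. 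By the $k$-parallel Grover lower bound of Zalka and Jeffery--Magniez--de~Wolf, the per-layer distinguishing advantage is $O(\sqrt{k/N})$, and summing over the $q-1$ layers gives total error $O(q\sqrt{k/N})$. In the terminal hybrid the oracle is $\sPitl_{q-1}$, which is independent of $\Pitl_q$ and hence of $\xb_{q+1}$, so $\cA$ outputs $\xb_{q+1}$ with probability at most $1/N$ there, completing the bound.

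\textbf{Main obstacle.} The delicate point is the presence of inverse queries: standard compressed-oracle techniques and the random-permutation-to-random-function conversion both break once the adversary can query $\Pi^{-1}$, so I cannot invoke off-the-shelf parallel-query lower bounds directly on $\sPi$. This is precisely why Step~1 is stated as a simulation lemma with a $2\times$ query blow-up rather than as a direct substitution: it funnels all forward and inverse queries through $\sPitl$, whose support is a small, structured set (the chain), thereby decoupling the inverse-access issue from the core finding problem and allowing the clean $k$-parallel Grover bound to drive Step~2. A secondary technical burden is bookkeeping the simulator's internal lazy-sampling state across $k$ parallel queries per layer and across the hybrid transitions; I expect to handle this by a standard purification argument so that the induced error bounds are uniform in $k$ and $q$ and sum cleanly to $\proboneq$.
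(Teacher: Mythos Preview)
Your proposal is correct and follows essentially the same two-step hybrid as the paper: Lemma~\ref{lem:pi2pitl} carries out your Step~1 (simulate $\sPi$ by $\sPitl$ with a $2\times$ query blow-up, with indistinguishability reduced to the $k$-parallel Grover bound of Lemma~\ref{lem:k-grover}), and Lemma~\ref{lem:pitl-depth-bound} together with Corollary~\ref{cor:q-1} carries out your Step~2 (replace the $i$-th layer's $\sPitl$ by $\sPitl_i$ and bound each transition by $O(\sqrt{k/N})$). The only cosmetic difference is that the paper implements Step~1 by sampling a full auxiliary permutation $\sPi^R$ upfront and arguing indistinguishability of $\sPitl'$ from $H(\sPitl')$ via a Grover reduction, rather than via the lazy-sampling/purification picture you sketch; the resulting bounds and query counts are identical.
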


Before proving Theorem~\ref{thm:hashchain_main}, we first introduce several lemmas as follows.
\begin{lemma}\label{lem:2-norm_to_trace}
    For any $\ket{\phi}$, $\ket{\psi}$ such that $\|\ket{\phi}\| = \|\ket{\psi}\| = 1$ and $\|\ket{\phi}-\ket{\psi}\| \leq \veps$, it holds that 
    \[
    \Delta(\ket{\phi}\bra{\phi},\ket{\psi}\bra{\psi}) \leq \veps.
    \]
\end{lemma}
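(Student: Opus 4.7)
The plan is to reduce both quantities to the single parameter $\alpha \eqdef \inner{\phi}{\psi}$ and then use a short algebraic comparison. First I would expand the 2-norm hypothesis: since both states are unit vectors,
\[
\veps^2 \geq \|\ket{\phi}-\ket{\psi}\|^2 = 2 - 2\Re\,\alpha.
\]

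Next I would invoke the standard closed form for the trace distance between two pure states, namely
\[
\Delta(\ket{\phi}\bra{\phi}, \ket{\psi}\bra{\psi}) = \sqrt{1 - |\alpha|^2},
\]
which one derives by observing that $\ket{\phi}\bra{\phi}-\ket{\psi}\bra{\psi}$ is a rank-$\le 2$ Hermitian operator supported on $\operatorname{span}\{\ket{\phi},\ket{\psi}\}$ with trace $0$, so its two nonzero eigenvalues are $\pm\sqrt{1-|\alpha|^2}$ and the trace norm is $2\sqrt{1-|\alpha|^2}$.

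It then remains to check $\sqrt{1-|\alpha|^2} \leq \sqrt{2-2\Re\,\alpha}$. This follows by factoring
\[
1 - |\alpha|^2 = (1-|\alpha|)(1+|\alpha|) \leq 2(1-|\alpha|) \leq 2(1-\Re\,\alpha),
\]
where the first inequality uses $|\alpha|\leq 1$ (which holds by Cauchy--Schwarz) and the second uses $\Re\,\alpha \leq |\alpha|$. Combining with the hypothesis $2-2\Re\,\alpha \leq \veps^2$ gives the claim.

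There is no real obstacle here; the only mildly subtle point is that $\alpha$ can be complex, which is why one must be careful to pass from $|\alpha|$ to $\Re\,\alpha$ in the last step rather than equating them. Everything else is a direct manipulation of the pure-state trace-distance formula.
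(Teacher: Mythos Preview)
Your proof is correct and follows essentially the same approach as the paper: both reduce to the pure-state trace-distance formula $\sqrt{1-|\alpha|^2}$ and then verify the key inequality $1-|\alpha|^2 \le 2 - 2\Re\,\alpha$. The only difference is cosmetic algebra---the paper obtains this inequality by expanding $(\Re\,\alpha - 1)^2 + (\mathrm{Im}\,\alpha)^2 \ge 0$, while you factor $1-|\alpha|^2$ and use $|\alpha|\le 1$ together with $\Re\,\alpha \le |\alpha|$.
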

\begin{proof}
    The trace distance between two pure states is given by $\sqrt{1-|\braket{\phi | \psi}|^2}$.
    The Euclidean norm of $\ket{\phi}-\ket{\psi}$ is given by $\|\ket{\phi}-\ket{\psi}\| = \sqrt{(\bra{\phi}-\bra{\psi})(\ket{\phi}-\ket{\psi})} = \sqrt{2-2\mathsf{Re}[\braket{\phi | \psi}]}$, where $\mathsf{Re}[\cdot]$ denote the real part of a complex number. 

    First, it is true that $0 \leq \left( \mathsf{Re}[\braket{\phi | \psi}]-1 \right)^2 + \mathsf{Im}[\braket{\phi | \psi}]^2$, where $\mathsf{Im}[\cdot]$ denote the imaginary part of a complex number.
    Rearranging the terms, we obtain 
    \[
    1-|\braket{\phi | \psi}|^2
    = 1 - (\mathsf{Re}[\braket{\phi | \psi}]^2 + \mathsf{Im}[\braket{\phi | \psi}]^2) 
    \leq 2 - 2\mathsf{Re}[\braket{\phi | \psi}].
    \]
\end{proof}

\begin{lemma}[$q$-bin $k$-parallel Grover search lower bound] \label{lem:k-grover}
Let $\cF$ be the set of all functions $f$ from $[qN]$ to $\bits$ with the following promise. For all $i \in \set{0,1,\dots,q-1}$, it holds that $|\set{x \in \set{iN+1,iN+2,\dots,iN+N} \colon f(x) = 1}| = 1$.
Let $g$ be the constant zero function with domain $[qN]$. 
Then for every algorithm that makes $\ell$ $k$-parallel queries to $f$ (or $g$), the final state of the algorithm, denoted by $\ket{\psi^f}$ (or $\ket{\psi^g}$), satisfies
\[
\Ex_{f\gets\cF}\left[ \Delta\left( \ket{\psi^f}\bra{\psi^f},\ket{\psi^g}\bra{\psi^g} \right) \right] 
= O\left(\ell\sqrt\frac{k}{N}\right).
\]    
\end{lemma}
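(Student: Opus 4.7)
}
The plan is a BBBV-style hybrid argument adapted to the $k$-parallel query model, together with the observation that $g\equiv 0$ makes $\cO_g$ the identity. Write the algorithm as a sequence of unitaries $U_0, U_1, \ldots, U_\ell$ interleaved with $k$-parallel query operators, so that
\[
\ket{\psi^h} \;=\; U_\ell\, \cO_h^{\otimes k}\, U_{\ell-1}\, \cO_h^{\otimes k}\cdots U_1\, \cO_h^{\otimes k}\, U_0 \ket{0}
\]
for $h\in\{f,g\}$. Define Hybrid $j$ to be the state obtained by making the first $j$ queries to $g$ and the remaining $\ell-j$ queries to $f$, so that Hybrid $0 = \ket{\psi^f}$ and Hybrid $\ell = \ket{\psi^g}$. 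By the triangle inequality it suffices to bound, for every $j$,
\[
\bigl\|(\cO_g^{\otimes k} - \cO_f^{\otimes k})\ket{\phi_{j-1}}\bigr\|,
\qquad \ket{\phi_{j-1}} := U_{j-1}\,\cO_g^{\otimes k}\cdots U_1\,\cO_g^{\otimes k}\, U_0 \ket{0},
\]
and note crucially that $\ket{\phi_{j-1}}$ depends only on $g$, hence is independent of $f$.

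Next I would use $g\equiv 0$, which gives $\cO_g^{\otimes k}=I$, to rewrite the quantity as $\|(I - \cO_f^{\otimes k})\ket{\phi_{j-1}}\|$. Let $P_f$ denote the projector onto the subspace of query registers $(x_1,\ldots,x_k)$ for which at least one $x_i$ is marked by $f$; on the complementary subspace $\cO_f^{\otimes k}$ acts as the identity, so
\[
\bigl\|(I - \cO_f^{\otimes k})\ket{\phi_{j-1}}\bigr\| \;\le\; 2\,\|P_f \ket{\phi_{j-1}}\|.
\]
By Jensen's inequality, $\Ex_f[\|P_f\ket{\phi_{j-1}}\|] \le \sqrt{\Ex_f[\|P_f\ket{\phi_{j-1}}\|^2]}$. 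The expected squared norm equals the probability that, after measuring the query registers of $\ket{\phi_{j-1}}$ in the standard basis to obtain $(x_1,\ldots,x_k)$ and independently sampling $f\gets\cF$, at least one $x_i$ is marked. For any fixed $(x_1,\ldots,x_k)$, each $x_i$ lies in some specific bin where the marked element is uniform among $N$ choices, so $\Pr_f[f(x_i)=1]=1/N$. A union bound gives $\Pr_f[\exists\, i: f(x_i)=1]\le k/N$, hence $\Ex_f[\|P_f \ket{\phi_{j-1}}\|^2]\le k/N$.

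Combining the estimates over the $\ell$ hybrid steps yields $\Ex_f[\|\ket{\psi^f}-\ket{\psi^g}\|] \le 2\ell\sqrt{k/N}$, and Lemma~\ref{lem:2-norm_to_trace} converts this into the claimed trace-distance bound of $O(\ell\sqrt{k/N})$. The only subtlety I expect is making sure the hybrid is anchored on $g$ (so that $\ket{\phi_{j-1}}$ is truly independent of the randomness of $f$) and that the union bound is correctly applied across bins; once those are in place, the factors of $\sqrt{k}$ and $\ell$ fall out of the Jensen step and the telescoping, respectively. Multiple bins do not spoil the argument because each query index still has marked-probability exactly $1/N$, so the $q$ does not appear in the bound.
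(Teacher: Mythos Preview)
Your proposal is correct and follows essentially the same BBBV-style hybrid argument as the paper: anchor the hybrid on $g$ so the intermediate states are independent of $f$, bound $\|(\cO_f^{\otimes k}-\cO_g^{\otimes k})\ket{\phi_{j-1}}\|$ by twice the norm of the projection onto ``some query hits a marked element,'' apply Jensen, and then use that each coordinate is marked with probability $1/N$ (the paper writes this as the operator inequality $I-\ol\Pi_f^{\otimes k}\le\sum_j\Pi_{f,j}$ together with $\Ex_f[\Pi_{f,j}]=I/N$, which is exactly your union-bound step). The final conversion to trace distance via Lemma~\ref{lem:2-norm_to_trace} is also identical.
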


\begin{proof}
Let $\ket{\psi^g_i} \coloneqq U_i O^{\otimes k}_g \dots U_1 O^{\otimes k}_g U_0\ket{0}$. For any $f\in\cF$, let $\Pi_f$ be the projector acting on the query register of the algorithm that is defined as $\Pi_f \coloneqq \sum_{x:f(x)=1}\ket{x}\bra{x}$. Let $\ol{\Pi}_{f} \coloneqq I-\Pi_{f}$.

Notice that $(O^{\otimes k}_f-O^{\otimes k}_g)\ol{\Pi}_{f}^{\otimes k} = 0$ because $O_f$ and $O_g$ are identical beyond the set of the $1$-preimages.
Therefore, we have
\begin{equation}\label{eq:oracle_diff}
(O^{\otimes k}_f-O^{\otimes k}_g) (I - \ol{\Pi}_{f}^{\otimes k}) = O^{\otimes k}_f-O^{\otimes k}_g.
\end{equation}

\noindent Also, it holds that
\begin{equation}\label{eq:inclu}
I - \ol{\Pi}_{f}^{\otimes k} \leq \sum_{i = j}^k \Pi_{f,j},
\end{equation}
where $\Pi_{f,j}$ denote the operator the acts as $\Pi_{f}$ on the register of the $j$-th query branch and as identity on other registers; for matrices $A,B$, by $A \geq B$ we mean that $A-B$ is a positive semi-definite matrix. Then by standard hybrid arguments~\cite{BBBV97}, we have
\begin{align*}
    \Ex_{f\gets\cF}\left[ \|\ket{\psi^f}-\ket{\psi^g}\| \right]
    & \leq \sum_{i=0}^{\ell-1} \Ex_{f\gets\cF}\left[ \|(O^{\otimes k}_f-O^{\otimes k}_g)\ket{\psi^g_i}\| \right] \\
    & = \sum_{i=0}^{\ell-1} \Ex_{f\gets\cF}\left[ \|(O^{\otimes k}_f-O^{\otimes k}_g) (I - \ol{\Pi}_{f}^{\otimes k}) \ket{\psi^g_i}\| \right] \\
    & \leq \sum_{i=0}^{\ell-1} \Ex_{f\gets\cF}\left[ \|(O^{\otimes k}_f-O^{\otimes k}_g)\| \cdot \| (I - \ol{\Pi}_{f}^{\otimes k}) \ket{\psi^g_i}\| \right],
\end{align*}
where the first equality is due to \eqref{eq:oracle_diff} and the last inequality is due to the fact that $\|A\ket{\phi}\| \leq \|A\| \cdot \|\ket{\phi}\|$, where $\|A\|$ denotes the operator norm of $A$.

Since $\|O_f\| = \|O_g\| = 1$, by the triangle inequality we can bound it as
\begin{align*}
    & \leq 2 \sum_{i=0}^{\ell-1} \Ex_{f\gets\cF}\left[ \|(I - \ol{\Pi}_{f}^{\otimes k})\ket{\psi^g_i}\| \right] \\
    & = 2 \sum_{i=1}^{\ell-1} \Ex_{f\gets\cF}\left[ \sqrt{\bra{\psi^g_i}(I - \ol{\Pi}_{f}^{\otimes k})\ket{\psi^g_i}} \right] \\
    & \leq 2 \sum_{i=0}^{\ell-1} \sqrt{\Ex_{f\gets\cF}\left[ \bra{\psi^g_i}(I - \ol{\Pi}_{f}^{\otimes k})\ket{\psi^g_i} \right]}  & \text{(Jensen's inequality)} \\
    & \leq 2 \sum_{i=0}^{\ell-1} \sqrt{\Ex_{f\gets\cF}\left[ \bra{\psi^g_i}\sum_{j = 1}^k \Pi_{f,j}\ket{\psi^g_i} \right]}  \\
    & \leq 2 \sum_{i=0}^{\ell-1} \sqrt{\sum_{j = 1}^k\Ex_{f\gets\cF}\left[ \bra{\psi^g_i} \Pi_{f,j}\ket{\psi^g_i} \right]}  & \text{(linearity of expectation)} \\
    & = 2 \sum_{i=0}^{\ell-1} \sqrt{\sum_{j = 1}^k\bra{\psi^g_i}\frac{I}{N}\ket{\psi^g_i}} 
    = 2\ell\sqrt\frac{k}{N},
\end{align*}
where the first equality holds since $I - \ol{\Pi}_{f}^{\otimes k}$ is a projection operator; 
the third inequality holds due to \eqref{eq:inclu} 
and the second equality holds because the probability of any $x\in[qN]$ being a $1$-preimage of $f$ is $1/N$.
So we have $\Ex_{f\gets\cF}[\Pi_{f,j}] = I/N$ for every $j$. Finally, by Lemma~\ref{lem:2-norm_to_trace}, we have 
\[
\Ex_{f\gets\cF}\left[ \Delta\left( \ket{\psi^f}\bra{\psi^f},\ket{\psi^g}\bra{\psi^g} \right) \right] 
\leq 2\ell\sqrt{\frac{k}{N}}.
\]
as desired.
\end{proof}

\begin{figure}[ht]
    \centering
    {
\newcommand{\www}{1.5} % width of one subfig
\newcommand{\sss}{2} %spacing between two subfigure

\newcommand{\xone}{0}
\newcommand{\xtwo}{\xone+\www}

\newcommand{\xthree}{\xtwo+\sss}
\newcommand{\xfour}{\xthree+\www}

\newcommand{\xfive}{\xfour+\sss}
\newcommand{\xsix}{\xfive+\www}

\newcommand{\xseven}{\xsix+\sss}
\newcommand{\xeight}{\xseven+\www}

\newcommand{\xnine}{\xeight+\sss}
\newcommand{\xten}{\xnine+\www}

\begin{adjustbox}{max width = \textwidth}
\begin{tikzpicture}
  
  %==========Pi_j==========
  \node at (0.75, 3.5) {(a) $\Pi_i$};
  
  \filldraw [black] (\xone,3) circle (1.5pt);
  \filldraw [black] (\xone,2) circle (1.5pt) node[anchor=east] {$\bar{x}_i$};
  \filldraw [black] (\xone,1) circle (1.5pt);
  \filldraw [black] (\xone,0) circle (1.5pt);

  \filldraw [black] (\xtwo,3) circle (1.5pt);
  \filldraw [black] (\xtwo,2) circle (1.5pt);
  \filldraw [black] (\xtwo,1) circle (1.5pt) node[anchor=west] {$\bar{x}_{i+1}$};
  \filldraw [black] (\xtwo,0) circle (1.5pt);
  
  \draw [very thick] (\xone,3) -- (\xtwo,2);
  \draw [very thick] (\xone,2) -- (\xtwo,1);
  \draw [very thick] (\xone,1) -- (\xtwo,0);
  \draw [very thick] (\xone,0) -- (\xtwo,3);

  %==========tilde{Pi_j}=====
  \node at (4.25, 3.5) {(b) $\tilde{\Pi}_i$};
  
  \filldraw [black] (\xthree,3) circle (1.5pt);
  \filldraw [black] (\xthree,2) circle (1.5pt) node[anchor=east]{$\bar{x}_{i}$};
  \filldraw [black] (\xthree,1) circle (1.5pt);
  \filldraw [black] (\xthree,0) circle (1.5pt);

  \filldraw [black] (\xfour,3) circle (1.5pt);
  \filldraw [black] (\xfour,2) circle (1.5pt);
  \filldraw [black] (\xfour,1) circle (1.5pt) node[anchor=west] {$\bar{x}_{i+1}$};
  \filldraw [black] (\xfour,0) circle (1.5pt);
  
  \draw [very thick] (\xthree,2) -- (\xfour,1);

  %==========Pi^R_j==========
  \node at (7.75, 3.5) {(c) $\Pi^{R}_{i}$};
  
  \filldraw [black] (\xfive,3) circle (1.5pt);
  \filldraw [black] (\xfive,2) circle (1.5pt) node[anchor=east] {$\bar{x}_{i}$};
  \filldraw [black] (\xfive,1) circle (1.5pt) node[anchor=east] {$\bar{x}'_{i}$};
  \filldraw [black] (\xfive,0) circle (1.5pt);

  \filldraw [black] (\xsix,3) circle (1.5pt);
  \filldraw [black] (\xsix,2) circle (1.5pt) node[anchor=west] {$\bar{x}'_{-i}$};
  \filldraw [black] (\xsix,1) circle (1.5pt) node[anchor=west] {$\bar{x}_{i+1}$};
  \filldraw [black] (\xsix,0) circle (1.5pt);
  
  \draw [very thick] (\xfive,3) -- (\xsix,3);
  \draw [very thick] (\xfive,2) -- (\xsix,2);
  \draw [very thick] (\xfive,1) -- (\xsix,1);
  \draw [very thick] (\xfive,0) -- (\xsix,0);
  %==========tilde{Pi}'==========
  \node at (11.25, 3.5) {(d) $\tilde{\Pi}^{\prime}_{i}$};
  
  \filldraw [black] (\xseven,3) circle (1.5pt);
  \filldraw [black] (\xseven,2) circle (1.5pt) node[anchor=east] {$\bar{x}_{i}$};
  \filldraw [black] (\xseven,1) circle (1.5pt) node[anchor=east] {$\bar{x}'_{i}$};
  \filldraw [black] (\xseven,0) circle (1.5pt);

  \filldraw [black] (\xeight,3) circle (1.5pt);
  \filldraw [black] (\xeight,2) circle (1.5pt) node[anchor=west] {$\bar{x}'_{-i}$};
  \filldraw [black] (\xeight,1) circle (1.5pt) node[anchor=west] {$\bar{x}_{i+1}$};
  \filldraw [black] (\xeight,0) circle (1.5pt);
  
  \draw [very thick, >=stealth', <->] (\xseven+0.05,3) -- (\xeight-0.05,3);
  \draw [very thick, >=stealth', dashed, <-](\xseven+0.07,2) -- (\xeight-0.05,2);
  \draw [very thick, >=stealth', <->](\xseven+0.06,2-0.04) -- (\xeight-0.06,1.04);
  \draw [very thick, >=stealth', dashed, ->](\xseven+0.05,1) -- (\xeight-0.07,1);
  \draw [very thick, >=stealth', <->](\xseven+0.05,0) -- (\xeight-0.05,0);
  
  %===========H\Pi========
  \node at (14.75, 3.5) {(e) $H(\tilde{\Pi}^{\prime}_{i})$};
  
  \filldraw [black] (\xnine,3) circle (1.5pt);
  \filldraw [black] (\xnine,2) circle (1.5pt) node[anchor=east] {$\bar{x}_{i}$};
  \filldraw [black] (\xnine,1) circle (1.5pt) node[anchor=east] {$\bar{x}'_{i}$};
  \filldraw [black] (\xnine,0) circle (1.5pt);

  \filldraw [black] (\xten,3) circle (1.5pt);
  \filldraw [black] (\xten,2) circle (1.5pt) node[anchor=west] {$\bar{x}'_{-i}$};
  \filldraw [black] (\xten,1) circle (1.5pt) node[anchor=west] {$\bar{x}_{i+1}$};
  \filldraw [black] (\xten,0) circle (1.5pt);
  
  \draw [very thick, >=stealth', <->] (\xnine+0.05,3) -- (\xten-0.05,3);
  \draw [very thick, >=stealth', <->](\xnine+0.06,2-0.04) -- (\xten-0.06,1.04);
  \draw [very thick, >=stealth', <->](\xnine+0.06,1+0.04) -- (\xten-0.06,2-0.04);;
  \draw [very thick, >=stealth', <->](\xnine+0.05,0) -- (\xten-0.05,0);
  
\end{tikzpicture}
\end{adjustbox}
}
    \caption{
    Schematic diagrams of permutation in the hybrid proof. 
    (a) $\Pi_i$ is the permutation given by the problem. 
    (b) $\tilde{\Pi}_i$ maps $\bar{x}_{i}$ to $\bar{x}_{i+1} := \Pi_i(\bar{x}_{i})$ and maps other inputs to a dummy image $0$. 
    (c) $\Pi^{R}_{i}$ is a random permutation that is independent of $\Pi_i$. 
    (d) $\tilde{\Pi}'_{i}$ merges $\tilde{\Pi}_i$ and $\Pi^{R}_{i}$.
    It maps $\bar{x}_{i}$ to $\bar{x}_{i+1}$, and maps other input $x$ to $\Pi^{R}_{i}(x)$. 
    There is a collision on inputs $\bar{x}_{i}$ and $\bar{x}'_{i} := {{\Pi}^{R}}^{-1}_{i}(\bar{x}_{i+1})$.
    When executing the ``inverse'' of ${\tilde{{\Pi}^{\prime}}}_{i}$,
    it follows the rules of ${\Pi^{R}}^{-1}_{i}$. Note that the ``inverse'' is not exactly equal to ${\tilde{{\Pi}^{\prime}}}^{-1}_{i}$. 
    (e) The truth table of $H(\tilde{\Pi'}_{i})$ is equal to $\tilde{\Pi}'_{i}$ except that $H(\tilde{\Pi'}_{i})(\bar{x'}_{i}) = \bar{x'}_{-i}$
    \label{fig:hash_chain_permutation}
    }
\end{figure}
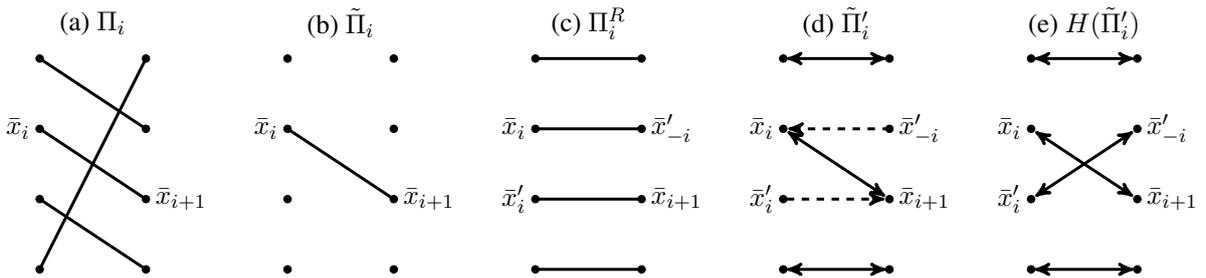

\begin{lemma}\label{lem:pi2pitl}
Use the notations of Definition~\ref{def:perm-notation}. Let $\ell,k$ be integers such that $\ell = O(\polylog(N))$ and $k = O(\polylog(N))$. 
For any quantum algorithm $\cA$ using $\ell$ $k$-parallel queries to $\sPi$
%where $\ell=\polylog(N)$
, there is a quantum algorithm $\tilde{\cA}$ using $2\ell$ $k$-parallel queries to $\sPitl$ such that for all $\sPitl$, 
$$ \E_{\sPi \in F^{-1}(\sPitl)}\left[\Pr[ \cA^{\sPi} \neq \tilde{\cA}^{\sPitl} ]\right] = \probone.$$
\end{lemma}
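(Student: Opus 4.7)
The plan is to construct $\tilde{\cA}$ by hybridizing through the intermediate objects depicted in Figure~\ref{fig:hash_chain_permutation}. First, $\tilde{\cA}$ internally samples (or simulates via sufficiently high-wise independence) uniformly random permutations $\Pi^R_1,\dots,\Pi^R_q$ and from them forms the spliced functions $\tilde{\Pi}'_i$ of Figure~\ref{fig:hash_chain_permutation}(d), which act as $\Pi^R_i$ on every input except that $\xb_{i-1}\mapsto\xb_i$, with the inverse branch defined to follow $(\Pi^R_i)^{-1}$. A single coherent call to the corresponding controlled oracle $\sPi(\tilde{\Pi}')$ can be implemented with $2$ queries to $\sPitl$: one $\sPitl$ query on the input register detects whether $x=\xb_{|j|-1}$ on the forward branch (or $x=\xb_{|j|}$ on the inverse branch) and overwrites the locally-computed $\Pi^R_{|j|}(x)$ accordingly, while the second $\sPitl$ query is used to uncompute scratch registers coherently. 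Hence simulating the whole $\ell$-layer execution of $\cA$ costs exactly $2\ell$ $k$-parallel queries to $\sPitl$.

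Next I would verify closeness in two hybrid steps. Define $H(\tilde{\Pi}'_i)$ as in Figure~\ref{fig:hash_chain_permutation}(e), obtained from $\tilde{\Pi}'_i$ by rerouting the collision point $\xb'_i:=(\Pi^R_i)^{-1}(\xb_i)$ to $\Pi^R_i(\xb_{i-1})$; this produces an honest permutation. When the expectation is taken over $\sPi\in F^{-1}(\sPitl)$, the joint distribution of $\bigl(H(\tilde{\Pi}'_1),\dots,H(\tilde{\Pi}'_q)\bigr)$ is precisely the distribution of $(\Pi_1,\dots,\Pi_q)$ conditioned on the chain values, because each $H(\tilde{\Pi}'_i)$ is a uniformly random permutation subject only to the single constraint fixed by $\sPitl$. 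Hence the execution of $\cA$ against $\sPi$ and against $\sPi\bigl(H(\tilde{\Pi}')\bigr)$ can be perfectly coupled.

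The second hybrid compares $\sPi\bigl(H(\tilde{\Pi}')\bigr)$ with $\sPi(\tilde{\Pi}')$, which differ only at the $2q$ ``dangerous'' points $S:=\{\xb_{i-1},\xb'_i:i\in[q]\}$ distributed across the $q$ permutation bins on the forward branch, and the analogous set $S^{-1}$ on the inverse branch. Since each $\xb'_i$ is uniformly random in bin $i$ independent of $\cA$'s view until such a point is queried, I would embed this distinguishing task into the framework of Lemma~\ref{lem:k-grover}: let $f$ be the indicator of $S\cup S^{-1}$ with $O(1)$ marked elements per bin and $g\equiv 0$, so that outside the event of hitting $S\cup S^{-1}$ the two oracles act identically. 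Lemma~\ref{lem:k-grover} then bounds the induced trace distance by $O(\ell\sqrt{k/N})$, absorbing the $O(q)=O(\polylog N)$ number of marked elements per bin into the stated big-$O$.

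The delicate step I expect will be the second hybrid, because the ``inverse'' branch of $\tilde{\Pi}'_i$ is not literally the inverse of a function and both branches share a single query record in superposition. To make Lemma~\ref{lem:k-grover} apply cleanly, I would argue that, conditioned on no query ever landing in $S\cup S^{-1}$, the registers produced by $\sPi(\tilde{\Pi}')$ and $\sPi(H(\tilde{\Pi}'))$ are identical as quantum states (not merely in distribution), and then bound the hitting probability by treating the forward and inverse registers jointly as $2k$ parallel queries to a single indicator function. The factor of $2$ in the query overhead of $\tilde{\cA}$ then matches the $2\ell$ in the claim, while the hybrid error yields the advertised $O(\ell\sqrt{k/N})$ bound.
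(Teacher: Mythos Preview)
Your approach is essentially identical to the paper's: build $\tilde{\cA}$ by splicing $\sPitl$ into freshly sampled random permutations $\Pi^R_i$ to form $\sPitl'$, observe that $H(\tilde{\Pi}')$ is distributed exactly as a uniform $\sPi\in F^{-1}(\sPitl)$ so the first hybrid is free, and then reduce the gap between $\sPitl'$ and $H(\sPitl')=\sPi$ to the $q$-bin Grover problem of Lemma~\ref{lem:k-grover}. One small correction: the forward-branch dangerous set is only $\{\xb'_i\}_i$---exactly one marked element per bin, not $O(q)$---since $\tilde{\Pi}'_i$ and $H(\tilde{\Pi}'_i)$ already agree at the chain point; the paper implements this cleanly by building a single interpolating oracle $\sPi^G$ (with free access to $\sPi$ and two calls to the Grover oracle $G$) that equals $\sPi$ when $G\equiv 0$ and equals a random $\sPitl'\in H^{-1}(\sPi)$ when $G=f$, which handles the forward and inverse branches uniformly and avoids the delicacy you flag in your last paragraph.
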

\begin{proof}
Let $\Pi^R_1,\Pi^R_2,\dots,\Pi^R_q$ be permutations of $N$ elements. Let $\sPi^R$ be the corresponding unitary.

We construct $\tilde{\cA}$ as follows: it samples a uniformly random $\sPi^R$ and runs $\cA$ but replaces every query to $\sPi$ with $\sPitl'$ constructed from uniformly random $\sPi^R$, where $\sPitl'$ is defined as
\begin{align}
    \sPitl'\ket{i,x,r}:=\begin{cases} 
    \ket{i,x,r\oplus \Pitl_i(x)} &,\Pitl_i(x)\neq 0, i>0\\
    \ket{i,x,r\oplus \Pi^R_i(x)} &,\Pitl_i(x)= 0, i>0 \\
    \ket{i,x,r\oplus \Pitl^{-1}_{|i|}(x)} &,\Pitl^{-1}_{|i|}(x)\neq 0, i<0\\
    \ket{i,x,r\oplus {\Pi_{|i|}^R}^{-1}(x)} &,\Pitl^{-1}_{|i|}(x)= 0, i<0.
    \end{cases}
\end{align}
One query to $\sPitl'$ can be constructed from two queries to $\sPitl$ coherently by doing the obvious classical calculation and uncomputing the garbage. Therefore $\tilde{\cA}$ uses 2$\ell$ queries to $\sPitl$ as required.

Now we prove that it is very hard to distinguish $\cA$ from $\tilde{\cA}$. This is done by a reduction to the hardness of the modified Grover's search algorithm in Lemma~\ref{lem:k-grover}.

For all $i \in [q]$, define $\xb'_i$ to be $x$ such that $\Pitl'(x)=\xb_{i+1}$ and $x\neq \xb$ . For all $i \in [-q]$, define $\xb'_{i}$ to be $x$ such that $\Pitl'^{-1}_{|i|}(x)=\xb_{|i|}$ and $x\neq \xb_{|i|+1}$. Note that $\xb'_i$ does not exist if $\Pi^R_i(\xb_i)=\Pitl_i(\xb_i)$.

Note that each $\Pitl'_i$ looks like a random permutation except  on the collisions  $\{\xb'_i\}$.  We define a function $H$ which relates these similar $\sPi$ and $\sPitl'$:
\begin{align}
    H(\sPitl')\ket{i,x,r}:=\begin{cases} 
    \ket{i,x,r\oplus \Pitl_i(x)} &,\Pitl_i(x)\neq 0, i>0\\
    \ket{i,x,r\oplus \Pi_i^R(x)} &,\Pitl_i(x)= 0, x\neq \xb'_i, i>0 \\
    \ket{i,x,r\oplus \xb'_{-i}} &,x= \xb'_i, i>0 \\
    \ket{i,x,r\oplus \Pitl^{-1}_{|i|}(x)} &,\Pitl^{-1}_{|i|}(x)\neq 0, i<0\\
    \ket{i,x,r\oplus {\Pi_{|i|}^R}^{-1}(x)} &,\Pitl^{-1}_{|i|}(x)= 0, x\neq \xb'_i, i<0 \\
    \ket{i,x,r\oplus \xb'_{-i}} &,x= \xb'_i, i<0.
    \end{cases}
\end{align}

It is easy to check that for all $\sPitl'$, $H(\sPitl')$ is a valid $\sPi$, and for a given $\sPi$, there are $N^q$ elements in $H^{-1}(\sPi)$. It is also easy to verify that for a fixed $\sPitl$, different $H^{-1}(\sPi)$ partitions all possible $\sPitl'$.

Let $\rho_{\sPi}$ be the final density matrix of $\cA^{\sPi}$. Let $\rho_{\sPitl'}$ be the final density matrix of $\tilde{\cA}^{\sPitl}$ with a fixed $\sPi'$. By the strong convexity of trace distance, we have

\begin{align}\label{eq:first-hybrid-convex}
  \Delta\left(\E_{\sPi \in  F^{-1}(\sPitl)}[\rho_{\sPi}], \E_{\sPitl' \in  H^{-1}(F^{-1}(\sPitl))}[\rho_{\sPitl'}]  \right)  
  \leq \E_{\sPi \in  F^{-1}(\sPitl)} \left[\Delta\left(\rho_{\sPi}, \E_{\sPitl' \in  H^{-1}(\sPi)}[\rho_{\sPitl'}]  \right)\right].
\end{align}

Finally, we prove that $\Delta(\rho_{\sPi}, \E_{\sPitl' \in  H^{-1}(\sPi)}[\rho_{\sPitl'}]  )=\probone$ for all $\sPi$ by a reduction to the modified Grover search problem. Consider a Grover oracle $G$ defined in Lemma~\ref{lem:k-grover} that might be $f$ or $g$. Given free calls to $\sPi$, we use two calls to $G$ to construct an oracle $\sPi^G$ which equals $\sPi$ when $G=g$ and equals a random $\sPitl'$ when $G=f$. The construction is as follows: 
\begin{align}
    \sPi^G\ket{i,x,r}:=\begin{cases} 
    \ket{i,x,r\oplus \Pitl_i(x)} &,\Pitl_i(x)\neq 0, i>0\\
    \ket{i,x,r\oplus \Pi_i(x)} &,\Pitl_i(x)= 0, G((i-1)N+x)=0, i>0 \\
    \ket{i,x,r\oplus \xb_{i+1}} &,\Pitl_i(x)= 0, G((i-1)N+x)=1, i>0 \\
    \ket{i,x,r\oplus \Pitl^{-1}_{|i|}(x)} &,\Pitl^{-1}_{|i|}(x)\neq 0, i<0\\
   \ket{i,x,r\oplus \Pi^{-1}_{|i|}(x)} &,\Pitl^{-1}_{|i|}(x)= 0, G((i-1)N+\Pi^{-1}_{|i|}(x))=0, i<0 \\
    \ket{i,x,r\oplus \xb_{i}} &,\Pitl^{-1}_{|i|}(x)= 0, G((i-1)N+\Pi^{-1}_{|i|}(x))=1, i<0. \\
    \end{cases}
\end{align}
By Lemma~\ref{lem:k-grover}, if we try to distinguish $f$ from $g$ by distinguishing $\cA^{\sPi^G}$ of the two cases, we can only succeed with probability $O\left(\ell\sqrt{\frac{k}{N}}\right)$ since we only have $O(\ell)$ $k$-parallel queries to $G$. Therefore, one can only distinguish $\sPi$ from $\sPitl'$ with  probability $\probone$, \ie 
\begin{align}
  \Delta\left(\rho_{\sPi}, \E_{\sPitl' \in  H^{-1}(\sPi)}\left[\rho_{\sPitl'}\right]  \right)=\probone.  
\end{align}
 Then by \eqref{eq:first-hybrid-convex}, 
 \begin{align}
     \Delta\left(\E_{\sPi \in  F^{-1}(\sPitl)}\left[\rho_{\sPi}\right], \E_{\sPitl' \in  H^{-1}(F^{-1}(\sPitl))}\left[\rho_{\sPitl'}\right]  \right) =\probone.
 \end{align}
  By the operational interpretation of the trace distance, we have 
  \begin{align}
       \E_{\sPi \in F^{-1}(\sPitl)}\left[\Pr[ \cA^{\sPi} \neq \tilde{\cA}^{\sPitl} ]\right] =\probone.
  \end{align} 
\end{proof}

\begin{lemma} \label{lem:pitl-depth-bound}
Use the notations of Definition~\ref{def:perm-notation}. Let $\ell,k$ be integers such that $\ell \in [q]$, $k=O(\polylog(N))$. For all algorithm $\cA$ using $\ell$ $k$-parallel queries to $\sPitl$, w.l.o.g. we can assume the final output of $\cA$ has the form 
$$\ket{\psi}=U_{\ell} \sPitlk U_{\ell-1}\sPitlk \dots U_2 \sPitlk U_1 \sPitlk \ket{\psi_0}.$$
For all $m,p \in [\ell]$, $p\leq m$, define the hybrid state
$$\ket{{\psi}_{m,p}}:=U_{m} \sPitlk U_{m-1}\sPitlk \dots U_{p+1} \sPitlk U_{p} \sPitlk_{p} U_{p-1}\sPitlk_{p-1} \dots U_2 \sPitlk_2 U_1 \sPitlk_1 \ket{\psi_0}.$$
Then for all $\cA$, 
\begin{align}
    \E_{\sPitl} \norm{\ket{\psi}-\ket{{\psi}_{\ell,\ell}}}\leq 2\ell \sqrt{\frac{k}{N}}.
\end{align}
\end{lemma}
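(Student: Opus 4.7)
The plan is a telescoping hybrid over the $\ell$ query layers: I will write $\ket{\psi}-\ket{\psi_{\ell,\ell}}$ as a sum of $\ell$ one-step differences and bound each by $2\sqrt{k/N}$ in expectation via a Grover-type amplitude argument. Setting $\ket{\phi_{p-1}}:=U_{p-1}\sPitlk_{p-1}\cdots U_1 \sPitlk_1\ket{\psi_0}$ for the state entering the $p$-th query layer in the hybrid $\ket{\psi_{\ell,p}}$, unitarity of every post-query $U_m$ and $\sPitlk$ gives
\[
\bigl\|\ket{\psi}-\ket{\psi_{\ell,\ell}}\bigr\|\;\le\;\sum_{p=1}^{\ell}\bigl\|(\sPitlk-\sPitlk_p)\ket{\phi_{p-1}}\bigr\|,
\]
so the whole task reduces to showing $\Ex_{\sPitl}\bigl\|(\sPitlk-\sPitlk_p)\ket{\phi_{p-1}}\bigr\|\le 2\sqrt{k/N}$ for each $p\in[\ell]$.

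Next I will analyze the single-hybrid difference. The operators $\sPitl$ and $\sPitl_p$ agree on basis states $\ket{j,x,r}$ with $|j|\le p$, and on $|j|>p$ they differ only when $x$ equals the unique ``hidden'' input of $\Pitl_{|j|}$, namely $x=\xb_j$ for $p<j\le q$ and $x=\xb_{|j|+1}$ for $-q\le j<-p$. Let $\Pi_{\mathrm{hid}}$ denote the one-register projector onto those hidden inputs and $\Pi_{\mathrm{hid}}^{(i)}$ its copy on the $i$-th of the $k$ parallel registers. Because $\|\sPitl-\sPitl_p\|_{op}\le 2$, because $(\sPitl-\sPitl_p)$ annihilates the complement of the hidden subspace, and because $I-(I-\Pi_{\mathrm{hid}})^{\otimes k}\le \sum_{i=1}^{k}\Pi_{\mathrm{hid}}^{(i)}$ (projectors on disjoint registers), I will obtain
\[
\bigl\|(\sPitlk-\sPitlk_p)\ket{\phi_{p-1}}\bigr\|^2
\;\le\;4\sum_{i=1}^{k}\bra{\phi_{p-1}}\Pi_{\mathrm{hid}}^{(i)}\ket{\phi_{p-1}}.
\]

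The crux will then be an independence observation: since $\sPitl_1,\dots,\sPitl_{p-1}$ depend on $\sPitl$ only through $\xb_1,\dots,\xb_p$, the state $\ket{\phi_{p-1}}$ is independent of the remaining chain elements $\xb_{p+1},\dots,\xb_{q+1}$ that parameterize the hidden inputs. Moreover, because the $\Pi_i$'s are independent uniform permutations, each $\xb_{j'}$ with $j'>p$ is marginally uniform on $[N]$. Expanding $\ket{\phi_{p-1}}$ on register $i$ as $\sum_{j',x,r}\alpha^{(i)}_{j',x,r}\ket{j',x,r}$ and averaging over $\sPitl$, I will get
\[
\Ex_{\sPitl}\!\bigl[\bra{\phi_{p-1}}\Pi_{\mathrm{hid}}^{(i)}\ket{\phi_{p-1}}\bigr]
\;=\;\Ex\!\Bigl[\sum_{|j'|>p,r}\!|\alpha^{(i)}_{j',\xb_{j'}^{*},r}|^2\Bigr]
\;\le\;\tfrac{1}{N}\!\!\sum_{|j'|>p,x,r}\!|\alpha^{(i)}_{j',x,r}|^2\;\le\;\tfrac{1}{N},
\]
where $\xb_{j'}^{*}$ is the hidden input attached to register value $j'$. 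No factor of $q$ appears here because the $j'$-sum decomposes a portion of the unit vector $\ket{\phi_{p-1}^{(i)}}$ orthogonally.

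Finally, summing over the $k$ parallel branches and applying Jensen's inequality gives $\Ex_{\sPitl}\|(\sPitlk-\sPitlk_p)\ket{\phi_{p-1}}\|\le\sqrt{4k/N}=2\sqrt{k/N}$, and summing the $\ell$ hybrids yields the claimed bound $2\ell\sqrt{k/N}$. The main delicacy I anticipate is twofold: carefully listing the hidden inputs on both forward ($j>p$) and inverse ($j<-p$) branches so that the operator-norm annihilation argument is airtight, and making precise the claim that marginal uniformity of each $\xb_{j'}$ follows from the independence of the random permutations $\Pi_{p},\Pi_{p+1},\dots,\Pi_{q}$. Everything else is routine triangle-inequality and Jensen manipulation.
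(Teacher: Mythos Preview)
Your proposal is correct and follows essentially the same approach as the paper: telescope over the $\ell$ layers, reduce each step to $\|(\sPitlk-\sPitlk_p)\ket{\psi_{p-1,p-1}}\|$, bound this by $2$ times the mass on the hidden inputs via the union-bound projector inequality, and then average using that $\ket{\psi_{p-1,p-1}}$ depends only on $\xb_1,\dots,\xb_p$ while each hidden point $\xb_{j'}$ with $j'>p$ is conditionally uniform. The only cosmetic difference is that the paper organizes the averaging by summing over the single increment $\dxb_p$ (with the others fixed) to obtain the operator inequality $\sum_{\dxb_p} P_p^{(i)} \le I$ before applying Cauchy--Schwarz, whereas you take the expectation directly and apply Jensen; both routes yield the identical $2\sqrt{k/N}$ per layer.
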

\begin{proof}
Note that $\ket{\psi}=\ket{\psi_{\ell,0}}$. By triangle inequality we have 
\begin{align}\label{eq:lem2-hybrid}
     \E_{\sPitl}  \norm{\ket{\psi}-\ket{\psi_{\ell,\ell}}} &\leq \sum_{i=1}^\ell  \E_{\sPitl}  \norm{\ket{\psi_{\ell,i-1}}-\ket{\psi_{\ell,i}}}.
\end{align}
We proceed by proving $\E_{\sPitl}  \norm{\ket{\psi_{\ell,i-1}}-\ket{\psi_{\ell,i}}}=2\sqrt{k/N}$ for all $i\in[\ell]$. Note that 
\begin{align}\label{eq:psii-diff}
 \norm{\ket{\psi_{\ell,i-1}}-\ket{\psi_{\ell,i}}} 
 &= \norm{\ket{\psi_{i,i-1}}-\ket{\psi_{i,i}}}  \nn \\
 &= \norm{U_i \sPitlk \ket{\psi_{i-1,i-1}}-U_i \sPitlk_i\ket{\psi_{i-1,i-1}}}  \nn \\
 &= \norm{ (\sPitlk - \sPitlk_i)\ket{\psi_{i-1,i-1}}}. 
\end{align}

Note that $\sPitl\ket{j,x,r}$ and $\sPitl_i\ket{j,x,r}$ differs only when $j >i$ and $x=\xb_j$ or $j < -i$ and $x=\xb_{j+1}$. Therefore 
\begin{align}
    (\sPitl-\sPitl_i)\ket{j,x,r}&= (\sPitl-\sPitl_i)P_i\ket{j,x,r} \\
      \sPitl(1-P_i)&=\sPitl_i(I-P_i) \\
      (\sPitlk-\sPitlk_i)(I-P_i)^{\otimes k} &=0
\end{align}
where 
\begin{align}\label{eq:Pi}
    P_i:=\left(\sum_{j=i+1}^k\Proj{j,\xb_j}+\sum_{-j=i+1}^k\Proj{j,\xb_{j+1}}\right)\otimes I.
\end{align}
 Note that $P_i$ actually depends on $\sPitl$, but we omit the dependence for cleaner notation. 
Therefore we have

\begin{align}\label{eq:sPi-diff}
   &\norm{ (\sPitlk-\sPitlk_i) \ket{\psi_{i-1,i-1}}}^2 \nn \\
   =&\norm{ (\sPitlk-\sPitlk_i)(I-(I-P_i)^{\otimes k} ) \ket{\psi_{i-1,i-1}}}^2 \nn \\
   \leq &4\norm{ (I-(I-P_i)^{\otimes k} ) \ket{\psi_{i-1,i-1}}}^2  \nn \\
   = &4\vev{{\psi_{i-1,i-1}}| (I-(I-P_i)^{\otimes k} ) |{\psi_{i-1,i-1}}} \nn \\
   \leq &4\vev{{\psi_{i-1,i-1}}| \sum_{j=1}^k P_i^j |{\psi_{i-1,i-1}}} 
\end{align}
where $P_i^j=I^{\otimes j-1}\otimes P_i \otimes I^{\otimes k-j}$. In the fourth line, we use the fact that $I-(I-P_i)^{\otimes k}$ is a projector, and in the fifth line we use the standard union bound calculation.

By \eqref{eq:Pi}, for all $i\in [q]$ and $j\in [k]$, $P_i^j$ is normalized by
\begin{align}
    \sum_{\dxb_i=0}^{N-1} P_i^j =\left(\sum_{j=i+1}^k+\sum_{-j=i+1}^k\right)\Proj{j}\otimes{\sum_{\xb \in [N]} \Proj{\xb}} \leq  I
\end{align}
where we omitted the tensor product of identities and assume $\dxb_{i+1},\dots,\dxb_q$ to be fixed. Therefore
\begin{align}\label{eq:Pnormal}
    &\sum_{\dxb_i=0}^{N-1}\norm{ (\sPitlk-\sPitlk_i) \ket{\psi_{i-1,i-1}}}^2 \nn \\
    \leq & \sum_{\dxb_i=0}^{N-1}4\vev{{\psi_{i-1,i-1}}| \sum_{j=1}^k P_i^j |{\psi_{i-1,i-1}}}  \nn \\
    \leq & 4k
\end{align}

Finally, combining \eqref{eq:Pnormal}, \eqref{eq:sPi-diff}, and \eqref{eq:psii-diff}, we have
\begin{align}
    &\E_{\sPitl}  \norm{\ket{\psi_{\ell,i-1}}-\ket{\psi_{\ell,i}}} \nn \\
    =& \E_{\dxb_1,\dots,\dxb_{i-1}}\E_{\dxb_i} \E_{\dxb_{i+1},\dots,\dxb_{q}} \norm{ (\sPitlk - \sPitlk_i)\ket{\psi_{i-1,i-1}}} \nn \\
    =& \E_{\dxb_1,\dots,\dxb_{i-1}} \E_{\dxb_{i+1},\dots,\dxb_{q}} \frac{1}{N}\sum_{\dxb_i}\norm{ (\sPitlk - \sPitlk_i)\ket{\psi_{i-1,i-1}}} \nn \\
    \leq& \E_{\dxb_1,\dots,\dxb_{i-1}} \E_{\dxb_{i+1},\dots,\dxb_{q}} \frac{1}{N}\sqrt{\sum_{\dxb_i}\norm{ (\sPitlk - \sPitlk_i)\ket{\psi_{i-1,i-1}}}^2}\cdot\sqrt{N} \nn \\
    =&2\sqrt{\frac{k}{N}}.
\end{align}
Plugging back to \eqref{eq:lem2-hybrid} we have
\begin{align}
    \E_{\sPitl}  \norm{\ket{\psi}-\ket{\psi_{\ell,\ell}}} &\leq 2\ell \sqrt{\frac{k}{N}}.
\end{align}

\end{proof}

\begin{corollary} \label{cor:q-1}
Any algorithm $\cA$ using $(q-1)$ $k$-parallel queries to $\sPitl$ can only output $\xb_{q+1}$ with  probability $O\left(\ell \sqrt{\frac{k}{N}}\right)$.
\end{corollary}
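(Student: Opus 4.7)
The plan is to combine Lemma~\ref{lem:pitl-depth-bound} with the observation that the final hybrid oracle $\sPitl_{q-1}$ contains no information about $\xb_{q+1}$. More precisely, I would apply Lemma~\ref{lem:pitl-depth-bound} with $\ell = q-1$, so that the real final state $\ket{\psi}$ of $\cA^{\sPitl}$ and the fully-hybridized state $\ket{\psi_{q-1,q-1}}$ satisfy
\[
\E_{\sPitl}\bigl\|\ket{\psi} - \ket{\psi_{q-1,q-1}}\bigr\| \;\leq\; 2(q-1)\sqrt{k/N}.
\]
Together with Lemma~\ref{lem:2-norm_to_trace}, this upper bounds the expected trace distance between the two output density matrices by the same quantity, and therefore bounds the expected difference of any measurement statistic (in particular, the probability of outputting $\xb_{q+1}$) by $2(q-1)\sqrt{k/N}$.

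Next I would argue that the hybrid state $\ket{\psi_{q-1,q-1}}$ is produced using only the oracles $\sPitl_1,\dots,\sPitl_{q-1}$, which depend on $\sPitl$ solely through the parameters $\dxb_1,\dots,\dxb_{q-1}$ (equivalently $\xb_2,\dots,\xb_q$); the last increment $\dxb_q$, which determines $\xb_{q+1}=\xb_q+\dxb_q\pmod N$, never enters the construction. Conditioning on any fixed choice of $\dxb_1,\dots,\dxb_{q-1}$, the value $\dxb_q$ is uniform in $[N]$ and independent of the hybrid state, so the probability that the measurement outcome equals $\xb_{q+1}$ is exactly $1/N$. Averaging over $\dxb_1,\dots,\dxb_{q-1}$ preserves this, giving
\[
\E_{\sPitl}\bigl[\Pr[\text{hybrid outputs } \xb_{q+1}]\bigr] \;=\; \frac{1}{N}.
\]

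Combining the two estimates via the triangle inequality for trace distance yields
\[
\E_{\sPitl}\bigl[\Pr[\cA^{\sPitl}\text{ outputs }\xb_{q+1}]\bigr] \;\leq\; \frac{1}{N} + 2(q-1)\sqrt{k/N} \;=\; O\!\left(q\sqrt{k/N}\right),
\]
which matches the claimed bound (reading the statement's $\ell$ as the query depth $q-1$). The only genuinely delicate point is step two: one must be careful that the hybrid really does not leak $\dxb_q$. This is immediate from the definition of $\sPitl_{\ell}$, which sets all permutations with index exceeding $\ell$ to the trivial zero map, but it is the step that makes the whole argument work and deserves an explicit line in the write-up.
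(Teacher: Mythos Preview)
Your proposal is correct and follows essentially the same approach as the paper: apply Lemma~\ref{lem:pitl-depth-bound} with $\ell=q-1$ together with Lemma~\ref{lem:2-norm_to_trace} to bound the difference in output probabilities, then observe that $\ket{\psi_{q-1,q-1}}$ is independent of $\dxb_q$ so the hybrid outputs $\xb_{q+1}$ with probability $1/N$, and add. Your write-up is in fact more careful than the paper's two-line sketch, particularly in spelling out why the hybrid leaks nothing about $\dxb_q$ and in noting that the statement's $\ell$ should be read as the query depth $q-1$.
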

\begin{proof}
By Lemma~\ref{lem:2-norm_to_trace} and Lemma~\ref{lem:pitl-depth-bound}, output probability of $\ket{\psi}$ and $\ket{{\psi}_{\ell,\ell}}$ only differ by $O(\ell \sqrt{k/N})$. Since $\ket{\psi_{q-1,q-1}}$ does not depend on $\Pitl_q$, it can only find $\xb^{q+1}$ with probability $1/N$. Thus the maximum probability of $\ket{\psi}$ finding $\xb^{q+1}$  is $O(\ell \sqrt{k/N})+1/N=O(\ell \sqrt{k/N})$.
\end{proof}

\begin{proof}[Proof of Theorem~\ref{thm:hashchain_main}]
Let $\cA$ be an algorithm that uses $\floor{{(q-1)/2}}$ $k$-parallel queries to $\sPi$ and outputs $\xb_{q+1}$ with some probability $p$. By Lemma~\ref{lem:pi2pitl}, there is an algorithm $\tilde{\cA}$ that uses $(q-1)$ $k$-parallel queries to $\sPitl$ and outputs $\xb_{q+1}$ with probability at least $p-\proboneq$. By Corollary~\ref{cor:q-1}, it holds that $p-\proboneq=\proboneq$. Therefore, we have $p=\proboneq$.
\end{proof}

\section{Parallel Hardness of Twisted Hash Chains}
\label{sec:t-hashchain}
The (standard) hash chain problem is a natural candidate for parallel hardness. However, hash functions (modeled as random functions) with equal-length inputs and outputs are not injective with overwhelming probability. Importantly, quantum operations need to be reversible. Therefore, if we evaluate the standard hash chain straight-forwardly, the intermediate hash values are required to be stored in ancilla qubits. Consequently, the width of the circuit would become proportional to the length of the chain, which means the computation could not be done within width $\lambda$. 

Inspired by the construction of the Feistel cipher which implements a permutation by hash functions, we introduce the \emph{twisted hash chain} problem. Let $\cX$ be $\bits^n$. An \emph{$s$-chain} is a sequence $x_0,x_1,\dots,x_s \in \cX$ such that $x_i = H(x_{i-1}) \oplus x_{i-2}$ for $i\in [s]$ where we use the convention that $x_{-1} \coloneqq 0^n$. 
Informally, the task of a $q$-query $k$-parallel algorithm that interacts with a random oracle $H:\cX\to\cY$ with $|\cX| = |\cY|$ is to output $x_0,x_q,x_{q+1}\in\cX$ of a $(q+1)$-chain, \ie there exists a sequence $x_1,\dots,x_{q-1} \in \cX$ such that $x_i = H(x_{i-1}) \oplus x_{i-2}$ for $i\in[q+1]$. The computation of a twisted hash chain is shown in Fig.~\ref{fig:feistel}.

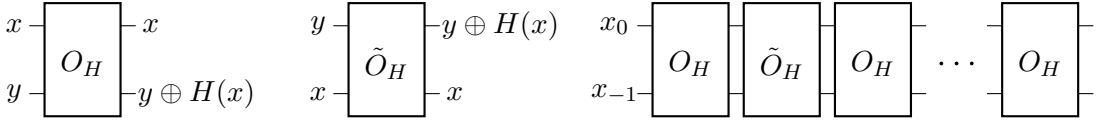
\begin{figure}[h]
  \begin{adjustbox}{max width = \textwidth}
\begin{tikzpicture}
  %\node at (-0.6, 1.8) {(a)};
  
  \draw (-0.2, 0.3) -- (1.2, 0.3);
  \node at (-0.4, 0.3) {$y$};
  \node at (2.0, 0.3) {$y\oplus H(x)$};
  
  \draw (-0.2, 1.2) -- (1.2, 1.2);
  \node at (-0.4, 1.2) {$x$};
  \node at (1.4, 1.2) {$x$};
  
  \filldraw [color=black, fill=white, thick] (0,0) rectangle (1, 1.5);
  \node at (0.5, 0.7){$O_H$};

%======================

   %\node at (3.4, 1.8) {(b)};
   
   \draw (3.8, 0.3) -- (5.2, 0.3);
   \node at (3.6, 0.3) {$x$};
   \node at (5.4, 0.3) {$x$};
   
   \draw (3.8, 1.2) -- (5.2, 1.2);
   \node at (3.6, 1.2) {$y$};
   \node at (6, 1.2) {$y\oplus H(x)$};
   
   \filldraw [color=black, fill=white, thick] (4,0) rectangle (5, 1.5);
   \node at (4.5, 0.7){$\tilde{O}_H$};

%=====================================

  %\node at (7.2, 1.8) {(c)};    
  \draw (7.8, 1.2) -- (11.6, 1.2);
  \draw (7.8, 0.3) -- (11.6, 0.3);
  \node at (7.5, 1.2) {$x_0$};
  \node at (7.5, 0.3) {$x_{-1}$};
  
  \filldraw [color=black, fill=white, thick] (8,0) rectangle (9, 1.5);
  \node at (8.5, 0.7){${O}_H$};
   
  \filldraw [color=black, fill=white, thick] (9.2,0) rectangle (10.2, 1.5);
  \node at (9.7, 0.7){$\tilde{O}_H$};
   
  \filldraw [color=black, fill=white, thick] (10.4,0) rectangle (11.4, 1.5);
  \node at (10.9, 0.7){${O}_H$};

  \filldraw[black] (11.8,0.7) circle (0.5pt);
  \filldraw[black] (12,0.7) circle (0.5pt);
  \filldraw[black] (12.2,0.7) circle (0.5pt);
   
  \draw (12.4, 1.2) -- (13.8, 1.2);
  \draw (12.4, 0.3) -- (13.8, 0.3);
  \filldraw [color=black, fill=white, thick] (12.6,0) rectangle (13.6, 1.5);
  \node at (13.1, 0.7){${O}_H$};

\end{tikzpicture}
\end{adjustbox}
  \caption{Schematic diagram of the twisted hash chain.}
  \label{fig:feistel}
\end{figure}

\noindent In this section, we aim to prove the following theorem.
\begin{theorem}[Twisted hash chain is sequential]\label{thm:twisted_hash_chain}
For any $k$-parallel $q$-query oracle algorithm $\cC$, the probability $p_\cC$ (parameterized by $k$ and $q$) that $\cC$ outputs $x_0,x_q,x_{q+1}\in\cX$ satisfying the following condition:
\begin{itemize}
    \item there exist $x_1,\dots,x_{q-1}\in\cX$ such that $H(x_{i-1}) = x_i \oplus x_{i-2}$ for $i\in[q+1]$, where $x_{-1} \coloneqq 0^n$
\end{itemize}
is at most $F(k,2q) = O(k^4q^4/|\cY|)$, where the function $F$ is defined in Lemma~\ref{lem:chain_all_but_last}.
\end{theorem}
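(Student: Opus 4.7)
The plan is to reduce the problem of outputting the three endpoint elements $x_0, x_q, x_{q+1}$ to the problem of outputting an \emph{entire} chain together with almost all of its hash witnesses, and then invoke Lemma~\ref{lem:chain_all_but_last}, which bounds the success probability of the latter task by $F(k, 2q)$.

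First, from $\cC$ I would construct a $k$-parallel $2q$-query algorithm $\cB$ that, whenever $\cC$ succeeds, outputs a complete $(2q+1)$-chain $x_0, x_1, \ldots, x_{2q+1}$ together with the hash values $H(x_0), H(x_1), \ldots, H(x_{2q})$ (that is, everything except the final hash $H(x_{2q+1})$). Concretely, $\cB$ runs $\cC$ during its first $q$ layers of $k$-parallel queries, obtaining $x_0, x_q, x_{q+1}$. Then, in its remaining $q$ layers, $\cB$ extends the putative chain in both directions: in each layer it uses two of its $k$ query slots, one to query $H$ on the next element of the left segment (starting from $x_0$ and producing $x_1 = H(x_0)$, $x_2 = H(x_1) \oplus x_0$, $\ldots$, up through $x_q$), and one on the right segment (starting from $x_{q+1}$ and producing $x_{q+2} = H(x_{q+1}) \oplus x_q$, $x_{q+3} = H(x_{q+2}) \oplus x_{q+1}$, $\ldots$, up through $x_{2q+1}$). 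Since the two extensions run in parallel and each has length at most $q$, they fit into $q$ additional $k$-parallel layers, so $\cB$ makes a total of $2q$ $k$-parallel queries.

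Correctness of the simulation rests on the determinism of the chain recurrence: conditioned on $(x_{-1}, x_0) = (0^n, x_0)$, the forward values $x_1, x_2, \ldots$ are uniquely fixed by $H$, and similarly the pair $(x_q, x_{q+1})$ unrolls forward uniquely. Thus whenever $\cC$'s output is valid, $\cB$'s recovered chain is consistent with the implicit chain witnessed by $\cC$, and $\cB$ indeed outputs a valid $(2q+1)$-chain missing only $H(x_{2q+1})$. Applying Lemma~\ref{lem:chain_all_but_last} with parameter $2q$ then yields $p_\cC \leq F(k, 2q) = O(k^4 q^4 / |\cY|)$, as claimed.

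The main technical care is largely bookkeeping. First, the reduction must be implemented \emph{coherently} as a quantum algorithm: $\cC$'s ``output'' is really a measurement of quantum registers, so $\cB$ should defer measurement and drive its verification queries controlled on registers carrying $x_0, x_q, x_{q+1}$, producing the inferred full chain in superposition before a final measurement. Second, one should verify that the success event for $\cB$ aligns exactly with the hypothesis of Lemma~\ref{lem:chain_all_but_last}, so that no extra distinctness or non-triviality condition sneaks in and makes $\cB$'s success event strictly stronger than $\cC$'s. Beyond these points, the depth/parallelism accounting is immediate and the theorem follows from the lemma as a black box.
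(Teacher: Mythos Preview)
Your proposal is correct and essentially identical to the paper's proof: both construct a $2q$-query algorithm that runs $\cC$ for the first $q$ layers, then spends the remaining $q$ layers extending the chain forward in parallel from the two known anchors $x_0$ and $(x_q,x_{q+1})$, and finally invokes Lemma~\ref{lem:chain_all_but_last} with parameter $2q$.

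One small bookkeeping point you glossed over: your left and right extensions query $x_0,\dots,x_{q-1}$ and $x_{q+1},\dots,x_{2q}$ respectively, so you never actually query $x_q$ and hence do not directly obtain $H(x_q)$; yet Lemma~\ref{lem:chain_all_but_last} requires the output to contain $y_q = H(x_q)$. The paper handles this by simply outputting $y_q \coloneqq x_{q-1} \oplus x_{q+1}$, which equals $H(x_q)$ precisely when $\cC$'s output is valid, so the success events still coincide. You should make this explicit, but it is a trivial fix and does not affect your argument.
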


Toward proving the hardness, we exploit the framework of \cite{chung2021compressed}. Below, we borrow the notations and definitions of \cite{chung2021compressed}. Let $H\colon\cX\to\cY$ be a random oracle. Let $\hat{\cY}$ be the dual group of $\cY$. Let $\ol{\cY}$ denote the set $\cY\cup\set{\bot}$. We say that $D:\cX\to\ol{\cY}$ is a \emph{database}. By $\mathfrak{D}$ we mean the set of all databases, \ie the set of all functions from $\cX$ to $\ol{\cY}$. 
For any tuple $\bfx = (x_1,\dots,x_k)$ with pairwise disjoint $x_i\in\cX$, tuple $\bfr = (r_1,\dots,r_k)\in\ol{\cY}^k$ and database $D\in\mathfrak{D}$, we define the database $D[\bfx\mapsto\bfr]$ as 
\[
D[\bfx\mapsto\bfr](x) \coloneqq \begin{cases}
    r_i & \text{ if } x = x_i \text{ for some } i\in[k] \\
    D(x) & \text{ if } x \notin \set{x_1,\dots,x_k}.
\end{cases}
\]
By database property $\mathsf{P}$ we mean a set of databases, that is $P \subseteq \mathfrak{D}$. In this section, we assume that $\cX = \cY$. 
For any database $D\in\mathfrak{D}$ and tuple $\bfx = (x_1,\dots,x_k)$ of pairwise distinct $x_i\in\cX$, we let 
\[
D|^\bfx \coloneqq \set{D[\bfx\mapsto\bfr] \mid \bfr \in \ol{\cY}^k} \subseteq \mathfrak{D}
\]
be the set of databases that coincide $D$ outside of $\bfx$.
Furthermore, for any database property $\mathsf{P}\subseteq\mathfrak{D}$, we let
\[
\mathsf{P}|_{D|^\bfx} \coloneqq \mathsf{P} \cap D|^\bfx.
\]

\begin{definition}[Definition~5.5 in~\cite{chung2021compressed}]
    Let $\mathsf{P},\mathsf{P}'$ be two database properties. Then, the \emph{quantum transition capacity} (of order $k$) is defined as
    \[
        \llbracket \mathsf{P} \xrightarrow{k} \mathsf{P}' \rrbracket 
        \coloneqq \max_{\bfx,\hat{\bfy},D}\|\mathsf{P}'|_{D|^\bfx} \mathsf{cO_{\bfx\hat{\bfy}}}
        \mathsf{P}|_{D|^\bfx}\|,
    \]
    where the maximum is over all possible $\bfx \in \cX^k$, $\hat{\bfy} \in \hat{\cY}^k$ and $D \in \mathfrak{D}$.
    Furthermore, we define
    \[
        \llbracket \mathsf{P} \xRightarrow{k} \mathsf{P}' \rrbracket 
        \coloneqq \sup_{U_1,\dots,U_{q-1}} \left\| \mathsf{P}' U_{q-1}\mathsf{cO}^k U_{q-1} \mathsf{cO}^k \dots U_1\mathsf{cO}^k \mathsf{P} \right\|,
    \]
    where $\|\cdot\|$ is the operator norm; the supremum is over all positive $d\in\mathbb{Z}$ and all unitaries $U_1,\dots,U_{q-1}$ acting on $\C[\cX]\otimes\C[\cY]\otimes\C^d$.\footnote{Namely, over all $q$-query quantum algorithms.}
    For the formal definitions of $\mathsf{cO_{\bfx\hat{\bfy}}}$ and $\mathsf{cO}^k$, we refer to \cite{chung2021compressed}. We note that their definitions are not required for the following proof.
\end{definition}

\begin{definition}
The database property \emph{twisted hash chain of length $s$}, denoted by $\mathsf{TCHN^s}$, is defined as
\[
\mathsf{TCHN^s} 
\coloneqq \set{D \mid \exists x_0,x_1,\dots,x_s \in \cX \colon  x_i = D(x_{i-1})\oplus x_{i-2}, \forall i \in [s]}  \subseteq \mathfrak{D},
\]
where we use the convention that $x_{-1} \coloneqq 0^n$ for convenience.
\end{definition}

\begin{definition}[Definition~5.20 in \cite{chung2021compressed}, with $\ell$ fixed to $1$]
A database transition $\mathsf{P}\to\mathsf{P}'$ is said be \emph{$k$-non-uniformly weakly recognizable} by $1$-local properties\footnote{We refer to Definition~5.10 in \cite{chung2021compressed} for the formal description of local properties.} if for every $\bfx = (x_1,\dots,x_k)$ with pairwise disjoint entries, and for every $D \in \mathfrak{D}$, there exists a family of $1$-local properties $\set{L^{D,\bfx}_i}$ where each $L^{D,\bfx}_i \subseteq \ol{\cY}$ and the support of $L^{D,\bfx}_i$ is $\set{x_i}$ or empty, so that
\[
D[\bfx\mapsto\bfr] \in \mathsf{P} \land [\bfx\mapsto\bfu] \in \mathsf{P}'
\implies \exists i \colon u_i \in L^{D,\bfx}_i \land r_i \neq u_i.
\]
\end{definition}

\begin{theorem}[Theorem~5.23 in~\cite{chung2021compressed}]\label{thm:5.23}
Let $\mathsf{P}$ and $\mathsf{P}'$ be $k$-non-uniformly weakly recognizable by $1$-local properties $L^{\bfx,D}_i$, where the support of $L^{\bfx,D}_i$ is $\set{x_i}$ or empty. Then
\[
\llbracket \bot \xRightarrow{q,k} \mathsf{TCHN}^{q+1}\rrbracket 
\leq \max_{\bfx,D}e \sum_{i}\sqrt{10P[U\in L^{\bfx,D}_i]},
\]
where $U$ is defined to be uniformly random in $\cY$ and $\bot \coloneqq \set{D \mid D(x)=\bot \text{  for all } x \in\cX}$.
\end{theorem}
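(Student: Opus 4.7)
The plan is to follow the general framework of Chung et al.'s compressed oracle technique: decompose the $q$-step transition $\bot \xRightarrow{q,k} \mathsf{TCHN}^{q+1}$ into per-query transitions, bound each per-query transition capacity $\llbracket \mathsf{P} \xrightarrow{k} \mathsf{P}' \rrbracket$ using the $1$-local recognizability assumption, and combine via a telescoping argument.

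First, I would introduce intermediate database properties $\mathsf{P}_0 = \bot, \mathsf{P}_1, \dots, \mathsf{P}_q = \mathsf{TCHN}^{q+1}$, where $\mathsf{P}_j$ captures the "length of chain extractable after $j$ queries." For any $q$-query algorithm with unitaries $U_1, \dots, U_{q-1}$, I would insert the projectors $\mathsf{P}_j$ between each query and use a hybrid/telescoping inequality to express
\[
\bigl\| \mathsf{P}_q U_{q-1}\mathsf{cO}^k \cdots U_1 \mathsf{cO}^k \mathsf{P}_0 \bigr\|
\;\le\; \sum_{j=1}^{q} \bigl\| \mathsf{P}_j \,\mathsf{cO}^k\, \mathsf{P}_{j-1} \bigr\|
\;\le\; q \cdot \max_{j} \llbracket \mathsf{P}_{j-1} \xrightarrow{k} \mathsf{P}_j \rrbracket.
\]
The constant $e$ in the target bound is then accounted for by a refined choice of weights in the hybrid (using a reweighting argument of the form $(1+1/q)^q \le e$) rather than the naive uniform telescoping sum.

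Second, I would bound each per-step capacity $\llbracket \mathsf{P} \xrightarrow{k} \mathsf{P}' \rrbracket = \max_{\bfx,\hat{\bfy},D} \bigl\| \mathsf{P}'|_{D|^\bfx}\,\mathsf{cO_{\bfx\hat{\bfy}}}\,\mathsf{P}|_{D|^\bfx} \bigr\|$ by $\sum_i \sqrt{10\,\Pr[U \in L^{D,\bfx}_i]}$. Because $\mathsf{cO_{\bfx\hat{\bfy}}}$ acts only on the $\bfx$-coordinates of the compressed database, fixing $\bfx, \hat{\bfy}, D$ reduces the operator norm to a question about the action of a local compressed-oracle step on the $k$ coordinates. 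Using the recognizability hypothesis, any basis vector in the support of $\mathsf{P}|_{D|^\bfx}$ that maps into $\mathsf{P}'|_{D|^\bfx}$ must change at least one coordinate $r_i$ to a value $u_i \in L^{D,\bfx}_i$ distinct from $r_i$. A Fourier-analytic estimate on $\mathsf{cO_{\bfx\hat{\bfy}}}$ (essentially: writing it as a combination of $\bot$-insertion/removal and phase kickback, then bounding amplitude leakage into a set of density $|L^{D,\bfx}_i|/|\cY|$) yields a per-coordinate bound of $\sqrt{10\,\Pr[U\in L^{D,\bfx}_i]}$, where the $\sqrt{10}$ constant collects the worst-case overhead from the $\bot$-symbol handling and the fact that $U$ ranges over $\cY$ rather than $\overline{\cY}$. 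A triangle-inequality union over the $k$ coordinates gives $\sum_i \sqrt{10\,\Pr[U\in L^{D,\bfx}_i]}$.

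Finally, I would combine the per-step bound with the hybrid argument and take maxima over $\bfx, D$ to obtain
\[
\llbracket \bot \xRightarrow{q,k} \mathsf{TCHN}^{q+1} \rrbracket
\;\le\; e \cdot \max_{\bfx,D} \sum_{i} \sqrt{10\,\Pr[U \in L^{\bfx,D}_i]},
\]
as stated. The main obstacle will be the per-step operator-norm estimate: translating the abstract recognizability hypothesis into a concrete amplitude bound on $\mathsf{cO_{\bfx\hat{\bfy}}}$ requires delicate bookkeeping of the $\bot$-symbol and the Fourier-conjugate formulation of the compressed oracle, and getting the correct constant ($\sqrt{10}$ versus a looser $\sqrt{c}$) is a nontrivial calibration. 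The rest of the argument — the hybrid, the decomposition into $\mathsf{TCHN}^j$, and the union bound over the $k$ parallel query coordinates — is routine once this key per-step estimate is in hand.
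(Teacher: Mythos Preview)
The paper does not prove this statement at all: it is quoted verbatim as Theorem~5.23 from~\cite{chung2021compressed} and used as a black box. So there is no ``paper's own proof'' to compare against.

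That said, your reconstruction reveals that the statement as printed in the paper is slightly garbled, and you have silently absorbed the confusion. The hypothesis speaks of a generic transition $\mathsf{P}\to\mathsf{P}'$ being recognizable, while the conclusion is about the specific multi-step capacity $\llbracket \bot \xRightarrow{q,k} \mathsf{TCHN}^{q+1}\rrbracket$. Looking at how the paper actually \emph{uses} the result (in Lemma~\ref{lem:capacity} and again inside Lemma~\ref{lem:chain_all_but_last}), the genuine Theorem~5.23 of~\cite{chung2021compressed} bounds a \emph{single-step} capacity $\llbracket \mathsf{P} \xrightarrow{k} \mathsf{P}' \rrbracket \le e\max_{\bfx,D}\sum_i\sqrt{10\Pr[U\in L^{\bfx,D}_i]}$; the telescoping into a sum over $q$ steps is a separate ingredient (Lemma~5.6 of~\cite{chung2021compressed}), invoked explicitly in the proof of Lemma~\ref{lem:capacity}. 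Your ``first step'' (the hybrid with intermediate $\mathsf{P}_j$ and the $(1+1/q)^q\le e$ reweighting) is therefore \emph{not} part of what is being cited here, and in fact your placement of the constant $e$ is wrong: it comes from the single-step Fourier estimate on $\mathsf{cO}_{\bfx\hat{\bfy}}$, not from the telescoping.

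Your ``second step'' is the actual content of the cited theorem, and your sketch of it is in the right spirit but too vague to be a proof. The $\sqrt{10}$ and the factor $e$ both arise from a careful decomposition of the compression/decompression isometry in~\cite{chung2021compressed}; getting those constants requires their Lemma~5.16 and surrounding machinery, not just ``Fourier-analytic estimate'' and ``$\bot$-symbol handling.''
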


Here, we define the family of local properties for our purpose. For any $D \in \mathfrak{D}$ and any $\bfx \coloneqq (x_1,\dots,x_k)$ $\in \cX^k$  with disjoint entries, we defined the following $1$-local properties $L^{D,\bfx}_i \subseteq \ol{\cY}$\footnote{Recall that we assume $\cX$ = $\cY$.} with support $\set{x_i}$ for $i\in[k]$ as
\[
L^{D,\bfx}_i \coloneqq L^{D,\bfx}_{i,1} \cup L^{D,\bfx}_{i,2},
\]
where
\[
L^{D,\bfx}_{i,1} \coloneqq \set{x\in\cX \mid D(x) \neq \bot \lor x \in \set{x_1,\dots,x_k}}
\]
and
\[
L^{D,\bfx}_{i,2} \coloneqq 
\set{x \in \cX \mid \exists x',x''\in L^{D,\bfx}_{i,1} \colon x = x' \oplus x''}.
\]

The following lemma, in line with Lemma~2.1 in~\cite{chung2021compressed}, shows that the local properties $\set{L^{D,\bfx}_i}$ recognize the database transition $\neg\mathsf{TCHN^s} \to \mathsf{TCHN^{s+1}}$, and allows us to exploit Theorem~\ref{thm:5.23}.
First, we briefly explain the intuition. We pick an arbitrary $(s+1)$-chain in $D[\bfx\mapsto\bfu]$ and call it the \emph{new chain}. We denote the elements of the new chain by $\hat{x}_0,\hat{x}_1,\dots,\hat{x}_{s+1}$.
There are two possible consequences of this database transition:

First, some branch $x_i$ of the query $\bfx$ becomes the first elements $\hat{x}_0$ of the new chain, \ie $x_i = \hat{x}_0$ and $x_i$ is responded with $D[\bfx\mapsto\bfu](x_i) = u_i$ such that $u_i = D[\bfx\mapsto\bfu](\hat{x}_0) = \hat{x}_1$.
In addition, $D[\bfx\mapsto\bfu](\hat{x}_1) = \hat{x}_0 \oplus \hat{x}_2 \neq \bot$.
This means that $\hat{x}_1$ must either already be sampled ($D(\hat{x}_1) \neq \bot$) or be one of the branch of $\bfx$ ($\hat{x}_1 \in\set{x_1,\dots,x_k}$) (or both). 
In other words, $u_i$ must be in $L^{D,\bfx}_{i,1}$.

Second, some branch $x_i$ of the query $\bfx$ becomes the $(j+1)$-th elements $\hat{x}_j$ ($1\leq j \leq s$) of the new chain, \ie $x_i = \hat{x}_j$ and $x_i$ is responded with $D[\bfx\mapsto\bfu](x_i) = u_i$ such that $u_i = D[\bfx\mapsto\bfu](\hat{x}_j) = \hat{x}_{j-1} \oplus \hat{x}_{j+1}$.
Similarly, we can conclude that either $D(\hat{x}_{j-1}) \neq \bot$ or $\hat{x}_{j-1} \in \set{x_1,\dots,x_k}$ (or both) and either $D(\hat{x}_{j+1}) \neq \bot$ or $\hat{x}_{j+1} \in \set{x_1,\dots,x_k}$ (or both).
This means that $u_i$ must be the XOR of two elements in $L^{D,\bfx}_{i,1}$. That is, $u_i \in L^{D,\bfx}_{i,2}$.

The intuition above can be formalized as the following lemma.
\begin{lemma}\label{lem:local_properties}
    $D[\bfx\mapsto\bfr]\notin\mathsf{TCHN^s}\land D[\bfx\mapsto\bfu] \in \mathsf{TCHN^{s+1}} 
    \implies \exists i\in[k] \colon r_i \neq u_i \land u_i \in L^{D,\bfx}_i$.
\end{lemma}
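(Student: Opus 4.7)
My plan is to formalize the intuition given just before the lemma via a ``first divergence'' argument. I start by fixing an $(s+1)$-chain $\hat{x}_0, \hat{x}_1, \dots, \hat{x}_{s+1}$ in $D[\bfx\mapsto\bfu]$ witnessing $D[\bfx\mapsto\bfu]\in\mathsf{TCHN^{s+1}}$, so that $D[\bfx\mapsto\bfu](\hat{x}_{j-1}) = \hat{x}_j \oplus \hat{x}_{j-2}$ holds for every $j\in[s+1]$ under the convention $\hat{x}_{-1}=0^n$. In particular, $D[\bfx\mapsto\bfu]$ is non-$\bot$ at each of $\hat{x}_0,\hat{x}_1,\dots,\hat{x}_s$, which will be my source of ``already sampled or newly queried'' elements.

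The first step is to locate the position where the two databases $D[\bfx\mapsto\bfr]$ and $D[\bfx\mapsto\bfu]$ must disagree. Since $D[\bfx\mapsto\bfr]\notin\mathsf{TCHN^s}$, the prefix $\hat{x}_0,\dots,\hat{x}_s$ (an obvious candidate $s$-chain) must fail one of the chain equations in $D[\bfx\mapsto\bfr]$. I take $j^{*}$ to be the smallest failing index in $[s]$ and set $m:=j^{*}-1\in\{0,\dots,s-1\}$; then $D[\bfx\mapsto\bfr](\hat{x}_m) \neq D[\bfx\mapsto\bfu](\hat{x}_m)$. The only way $D[\bfx\mapsto\bfr]$ and $D[\bfx\mapsto\bfu]$ can differ at a single point is for that point to be one of the queried branches $x_i$. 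This pins down an index $i\in[k]$ with $\hat{x}_m = x_i$, and consequently $r_i\neq u_i$ and $u_i=\hat{x}_{m+1}\oplus\hat{x}_{m-1}$ (reading $\hat{x}_{-1}=0^n$).

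The second step is to show $u_i\in L^{D,\bfx}_i$, which I do by case analysis on $m$. If $m=0$, then $u_i=\hat{x}_1\oplus 0^n=\hat{x}_1$, so it suffices to place $\hat{x}_1$ inside $L^{D,\bfx}_{i,1}$. If $m\geq 1$, then $u_i$ is the XOR of the two neighbors $\hat{x}_{m-1}$ and $\hat{x}_{m+1}$, and I place both of them inside $L^{D,\bfx}_{i,1}$, which lands $u_i$ in $L^{D,\bfx}_{i,2}$. The single observation that drives both sub-cases is this: for any $t\in\{0,\dots,s\}$, the value $D[\bfx\mapsto\bfu](\hat{x}_t)$ is forced to be non-$\bot$ because $\hat{x}_t$ participates in a chain equation; unpacking this through the definition of $D[\bfx\mapsto\bfu]$ gives ``$D(\hat{x}_t)\neq\bot$ or $\hat{x}_t\in\{x_1,\dots,x_k\}$'', which is exactly membership in $L^{D,\bfx}_{i,1}$.

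The only mildly delicate bookkeeping, rather than a real obstacle, is verifying that the indices appearing in $u_i=\hat{x}_{m+1}\oplus\hat{x}_{m-1}$ really do correspond to evaluation points of the chain: I need $m-1,m+1\in\{0,\dots,s\}$, which is immediate from $m\in\{0,\dots,s-1\}$ once $m\geq 1$, while the $m=0$ boundary case is absorbed by the convention $\hat{x}_{-1}=0^n$ that collapses the XOR to a single element. Everything else is a direct reading-off of the definitions of $L^{D,\bfx}_{i,1}$ and $L^{D,\bfx}_{i,2}$.
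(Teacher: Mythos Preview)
Your proposal is correct and follows essentially the same ``first divergence'' argument as the paper's proof: both locate the smallest index where $D[\bfx\mapsto\bfr]$ and $D[\bfx\mapsto\bfu]$ disagree on the chain, identify it with some queried branch $x_i$, and then show $u_i\in L^{D,\bfx}_i$ by case analysis. The only cosmetic difference is that the paper splits into three cases $s_\circ=0$, $s_\circ=1$, $2\le s_\circ\le s-1$, whereas you merge the latter two into a single case $m\ge 1$ using the convention $\hat{x}_{-1}=0^n$, which is a mild simplification.
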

\begin{proof}
    Suppose $D[\bfx\mapsto\bfu] \in\mathsf{TCHN^{s+1}}$ and let $\hat{x}_0,\hat{x}_1,\dots,\hat{x}_{s+1}$ be such a chain, \ie $\hat{x}_1 = D[\bfx\mapsto\bfu](\hat{x}_0)$ and $\hat{x}_{j+2} = D[\bfx\mapsto\bfu](\hat{x}_{j+1}) \oplus \hat{x}_j$ for $j = 0,1,\dots,q-1$.
    Let $s_\circ$ be the smallest $j$ such that $D[\bfx\mapsto\bfr](\hat{x}_{s_\circ}) \neq D[\bfx\mapsto\bfu](\hat{x}_{s_\circ})$. If $s_\circ \geq s$ or $j$ does not exist, then $D[\bfx\mapsto\bfr] \in \mathsf{TCHN^{s+1}}$, and we are done. 
    Suppose now $0\leq s_\circ\leq s-1$. Since $D[\bfx\mapsto\bfu]$ and $D[\bfx\mapsto\bfu]$ are identical outside of $\bfx$, there exists a coordinate $i$ of $\bfx$ such that $x_i = \hat{x}_{s_\circ}$. 
    Therefore, $u_i = D[\bfx\mapsto\bfu](x_i) = D[\bfx\mapsto\bfu](\hat{x}_{s_\circ}) \neq D[\bfx\mapsto\bfr](\hat{x}_{s_\circ}) = D[\bfx\mapsto\bfr](x_i) = r_i$.
    
    Below, we divide the analysis into three cases according to the value of $s_\circ$:
    \begin{enumerate}
        \item 
        If $s_\circ = 0$, we have $u_i = D[\bfx\mapsto\bfu](\hat{x}_{s_\circ}) = D[\bfx\mapsto\bfu](\hat{x}_0) = \hat{x}_1$.
        And $D[\bfx\mapsto\bfu](\hat{x}_1) = \hat{x}_2 \oplus \hat{x}_0 \neq \bot$, which implies either $D(\hat{x}_1) \neq \bot$ or $\hat{x}_1 \in \set{x_1,\dots,x_k}$ (or both).
        This means $u_i \in L^{D,\bfx}_{i,1}$.
        \item
        If $s_\circ = 1$, we have $u_i = D[\bfx\mapsto\bfu](\hat{x}_{s_\circ}) = D[\bfx\mapsto\bfu](\hat{x}_1) = \hat{x}_2 \oplus \hat{x}_0$.
        And $D[\bfx\mapsto\bfu](\hat{x}_2) = \hat{x}_3 \oplus \hat{x}_1 \neq \bot$, which implies either $D(\hat{x}_2) \neq \bot$ or $\hat{x}_2 \in \set{x_1,\dots,x_k}$ (or both).
        And $D[\bfx\mapsto\bfu](\hat{x}_0) = \hat{x}_1 \neq \bot$, which implies either $D(\hat{x}_0) \neq \bot$ or $\hat{x}_0 \in \set{x_1,\dots,x_k}$ (or both).
        This means $u_i \in L^{D,\bfx}_{i,2}$.
        \item
        If $2 \leq s_\circ \leq s-1$, then $r_i = D[\bfx\mapsto\bfu](\hat{x}_{s_\circ}) = \hat{x}_{s_\circ+1} \oplus \hat{x}_{s_\circ-1}$.
        Similarly, $D[\bfx\mapsto\bfu](\hat{x}_{s_\circ+1}) = \hat{x}_{s_\circ} \oplus \hat{x}_{s_\circ+2} \neq \bot$, which implies either $D(\hat{x}_{s_\circ+1}) \neq \bot$ or $\hat{x}_{s_\circ+1} \in \set{x_1,\dots,x_k}$ (or both).
        And $D[\bfx\mapsto\bfu](\hat{x}_{s_\circ-1}) = \hat{x}_{s_\circ} \oplus \hat{x}_{s_\circ-2} \neq \bot$, which implies either $D(\hat{x}_1) \neq \bot$ or $\hat{x}_1 \in \set{x_1,\dots,x_k}$ (or both).
        This means $u_i \in L^{D,\bfx}_{i,2}$.
    \end{enumerate}
    In all of the above cases, $u_i$ must be in $L^{D,\bfx}_i$ which concludes the proof.
\end{proof}

We need Corollary~4.2 in~\cite{chung2021compressed} which is rephrased from Lemma~5 in~\cite{zhandry2019record}.
\begin{lemma}[Lemma~5 in~\cite{zhandry2019record}]\label{lem:zhandry}
Let $R \subseteq \cX^\ell \times \cY^\ell$ be a relation. 
Let $\cA$ be an algorithm that outputs $\bfx\in\cX^\ell$ and $\bfy\in\cY^\ell$. Let $p$ be the probability that $\bfy = H(\bfx) \coloneqq (H(x_1),\dots,H(x_\ell))$ and $(\bfx,\bfy)\in R$ when $\cA$ has interacted with the standard random oracle, initialized with a uniformly random function $H$.
Similarly, let $p'$ be the probability that $\bfy = D(\bfx) \coloneqq (D(x_1),\dots,D(x_\ell))$ and $(\bfx,\bfy)\in R$ when $\cA$ has interacted with the compressed oracle and $D$ is obtained by measuring its internal state in the computational basis. Then
\[
\sqrt{p} \leq \sqrt{p'} + \sqrt{\frac{\ell}{|\cY|}}.
\]
\end{lemma}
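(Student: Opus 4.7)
The plan is to prove the lemma within Zhandry's compressed oracle framework, exploiting the fact that the compressed oracle and the standard (purified) random oracle are perfectly indistinguishable from the adversary's perspective, and bridging $p$ and $p'$ via an ``oracle verification'' step. Let $|\psi\rangle$ denote the final joint state of $\cA$'s registers together with the compressed oracle database after $\cA$ terminates, and write $|\bfx,\bfy,z\rangle$ for output registers $\bfx,\bfy$ and workspace $z$. Let $\Pi$ be the projector (on output registers tensor database register) onto the set of basis states with $(\bfx,\bfy)\in R$ and $D(x_i)=y_i$ for all $i\in[\ell]$. By the operational definition of the compressed oracle, $\sqrt{p'} = \|\Pi\,|\psi\rangle\|$.

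To relate $p$ to a similar quantity, I would augment $\cA$ into $\cA^*$ as follows: after $\cA$ stops, $\cA^*$ reads the output register $\bfx$ into control, makes $\ell$ additional queries of the form $|x_i,0\rangle\mapsto|x_i,H(x_i)\rangle$ into fresh ancillas, and then accepts iff the $\ell$-tuple it just read equals $\bfy$ and $(\bfx,\bfy)\in R$. The acceptance probability of $\cA^*$ in the standard oracle model is exactly $p$ by definition. By the equivalence of the standard and compressed oracles from the adversary's viewpoint, the acceptance probability of $\cA^*$ in the compressed oracle model is also $p$. Letting $V$ denote the unitary that implements these $\ell$ verification queries, the event tested by $\cA^*$ in the compressed-oracle world projects onto a subspace where $D(x_i)=y_i$ for all $i$ and $(\bfx,\bfy)\in R$, hence $\sqrt{p}=\|\Pi\, V\,|\psi\rangle\|$.

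Now I would apply the triangle inequality in operator-norm form,
\[
\sqrt{p} \;=\; \|\Pi\, V\,|\psi\rangle\| \;\leq\; \|\Pi\,|\psi\rangle\| + \|\Pi(V-I)\,|\psi\rangle\| \;=\; \sqrt{p'} + \|\Pi(V-I)\,|\psi\rangle\|,
\]
so the whole proof reduces to showing $\|\Pi(V-I)\,|\psi\rangle\| \leq \sqrt{\ell/|\cY|}$. I would break $V = V_\ell V_{\ell-1}\cdots V_1$ into single verification queries, and analyze $V_i$ on the relevant subspace. Working in the Fourier basis of the database register, the compressed query $V_i$ on a classical point $x_i$ acts nontrivially only on the $x_i$-entry of $D$; if that entry is $\bot$ it first instantiates it as the uniform superposition over $\cY$, and only then performs the standard XOR. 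The component of $V_i|\psi\rangle$ that now has $D(x_i)=y_i$ for the specific $y_i$ recorded in the output register gets amplitude at most $1/\sqrt{|\cY|}$ larger (per query) than the corresponding component of $|\psi\rangle$, because the uniformly-prepared entry has amplitude $1/\sqrt{|\cY|}$ on any fixed value. A telescoping argument across the $\ell$ queries, combined with Cauchy--Schwarz on the per-query deviations, yields $\|\Pi(V-I)\,|\psi\rangle\|^2 \leq \ell/|\cY|$.

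The main obstacle is making Step~3 rigorous: the database register is entangled with the output register $\bfy$, and the projector $\Pi$ simultaneously constrains the outputs, the relation $R$, and the database. In particular, the ``Fourier artifact'' of the compressed oracle (the phase kickback on the $\bot$ symbol) must be tracked carefully so that the per-query bound of $1/\sqrt{|\cY|}$ survives when composed across the $\ell$ queries and restricted to the $\Pi$-subspace. Once this accounting is done, as carried out in Zhandry's original argument, assembling the pieces yields $\sqrt{p}\leq \sqrt{p'}+\sqrt{\ell/|\cY|}$ directly.
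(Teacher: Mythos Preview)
The paper does not give its own proof of this lemma; it simply cites it as Lemma~5 of \cite{zhandry2019record} (equivalently Corollary~4.2 of \cite{chung2021compressed}) and uses it as a black box. So there is nothing in the paper to compare your argument against.

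That said, your sketch follows the outline of Zhandry's original argument: exploit the perfect indistinguishability of the standard and compressed oracles, append $\ell$ verification queries, use the triangle inequality to split $\sqrt{p}$ into $\sqrt{p'}$ plus a deviation term, and bound that deviation by $\sqrt{\ell/|\cY|}$ via a per-query analysis of how a compressed query can create a new database entry matching a fixed target value. One point to be careful about: you write $\sqrt{p}=\|\Pi\,V\,|\psi\rangle\|$ with the \emph{same} projector $\Pi$ used for $p'$, but after the verification step the acceptance condition lives on the fresh ancilla registers (comparing the freshly queried values to $\bfy$), not directly on the database register. In Zhandry's argument the two projectors are related but not identical, and the passage between them is where the $1/\sqrt{|\cY|}$-per-query slack actually enters. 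You flag this yourself as ``the main obstacle'' and defer to Zhandry for the accounting, which is fine as a proof plan, but be aware that the identity $\sqrt{p}=\|\Pi V|\psi\rangle\|$ as literally written needs that extra care.
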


\begin{lemma}\label{lem:capacity}
$\llbracket \bot \xRightarrow{q,k} \mathsf{TCHN}^{q+1}\rrbracket 
\leq  qek \sqrt{\frac{5kq(kq+1)}{|\cY|}}.$
\end{lemma}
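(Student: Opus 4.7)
The plan is to invoke Theorem~\ref{thm:5.23} directly on the $1$-local properties $\{L^{D,\bfx}_i\}$ introduced just before Lemma~\ref{lem:local_properties}. Lemma~\ref{lem:local_properties} already supplies the decisive hypothesis: the properties $\{L^{D,\bfx}_i\}$ weakly recognize every database transition $\neg\mathsf{TCHN}^s\to\mathsf{TCHN}^{s+1}$, so the $k$-non-uniform weak recognizability required by Theorem~\ref{thm:5.23} is fulfilled. The theorem then reduces the capacity estimate to a uniform bound on $P[U\in L^{D,\bfx}_i]$ over the query tuple $\bfx\in\cX^k$ and over every database $D$ that can arise during a $q$-query $k$-parallel compressed oracle interaction.

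The crux is therefore a purely combinatorial bound on $|L^{D,\bfx}_i|$. In the compressed oracle framework, each $k$-parallel query layer enlarges the recorded part of the database by at most $k$ entries, so every reachable $D$ satisfies $|\{x\colon D(x)\neq\bot\}|\leq kq$. Taking the union with the $k$ input positions in $\bfx$ and carefully bookkeeping the step at which $L^{D,\bfx}_i$ is evaluated yields
\[
|L^{D,\bfx}_{i,1}|\;\leq\; kq.
\]
For $L^{D,\bfx}_{i,2}$, the set of pairwise XORs of elements of $L^{D,\bfx}_{i,1}$, a symmetric counting gives
\[
|L^{D,\bfx}_{i,2}|\;\leq\;\binom{|L^{D,\bfx}_{i,1}|}{2}+1,
\]
where the extra $1$ corresponds to the single element $0^n$ obtained from $x'=x''$. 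Combining both pieces and absorbing the linear term produces the clean bound
\[
|L^{D,\bfx}_i|\;\leq\;\tfrac{1}{2}\,kq(kq+1),
\]
and therefore $P[U\in L^{D,\bfx}_i]\leq kq(kq+1)/(2|\cY|)$ uniformly in $\bfx$ and in any reachable $D$.

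Plugging this uniform estimate into the conclusion of Theorem~\ref{thm:5.23} produces
\[
\llbracket\bot\xRightarrow{q,k}\mathsf{TCHN}^{q+1}\rrbracket
\;\leq\; qk\cdot e\sqrt{10\cdot\frac{kq(kq+1)}{2\,|\cY|}}
\;=\; qek\sqrt{\frac{5\,kq(kq+1)}{|\cY|}},
\]
where the leading factor $q$ comes from summing the transition contributions across the $q$ sequential query layers and the factor $k$ from summing over the $k$ parallel branches within a single layer. I expect the main technical obstacle to be the tight combinatorial bookkeeping for $|L^{D,\bfx}_i|$: one must avoid double counting between $L^{D,\bfx}_{i,1}$ and $L^{D,\bfx}_{i,2}$, carefully argue that $|L^{D,\bfx}_{i,1}|\leq kq$ at the relevant stage of the telescoping (rather than the looser $k(q+1)$ one obtains by naively adding $|\bfx|=k$ to the database size), and keep the quadratic XOR term tight enough to yield the clean constant $5$ under the square root instead of a worse constant.
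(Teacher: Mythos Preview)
Your proposal is correct and follows essentially the same approach as the paper: recognize each transition $\neg\mathsf{TCHN}^s\to\mathsf{TCHN}^{s+1}$ via Lemma~\ref{lem:local_properties}, bound $|L^{D,\bfx}_i|\le kq(kq+1)/2$ using the database-size constraint, apply Theorem~5.23, and sum over the $q$ layers. The only difference is presentational: the paper makes your ``telescoping over the $q$ sequential query layers'' and your ``reachable $D$ has $\le kq$ recorded entries'' precise by first invoking Lemma~5.6 of~\cite{chung2021compressed}, which decomposes $\llbracket \bot \xRightarrow{q,k} \mathsf{TCHN}^{q+1}\rrbracket$ into $\sum_{s=1}^q \llbracket\mathsf{SZ}_{\leq k(s-1)}\setminus\mathsf{TCHN}^s \xrightarrow{k} \mathsf{TCHN}^{s+1}\rrbracket$, so that the size bound on $D$ is built into the source property rather than argued informally.
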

\begin{proof}
By Lemma~5.6 in~\cite{chung2021compressed}, we have
\[
\llbracket \bot \xRightarrow{q,k} \mathsf{TCHN}^{q+1}\rrbracket 
\leq \sum_{s=1}^q \llbracket\mathsf{SZ}_{\leq k(s-1)} \setminus \mathsf{TCHN^s} \xrightarrow{k} \mathsf{TCHN^{s+1}}\rrbracket.
\]
Choosing the local properties $\set{L^{D,\bfx}_i}$ as above whenever $D \in \mathsf{SZ}_{\leq k(s-1)}$, and to be constant-false otherwise, Lemma~\ref{lem:local_properties} ensures that we can apply Theorem~5.23 in~\cite{chung2021compressed} to bound quantum transition capacity.
Therefore, applying
Theorem~5.23 in \cite{chung2021compressed}, 
for each $s\in[q]$ we have
\[
\llbracket \mathsf{SZ}_{\leq k(s-1)} \setminus \mathsf{TCHN^s} \xrightarrow{k} \mathsf{TCHN^{s+1}} \rrbracket
\leq e \max_{D,\bfx} \sum_{i=1}^k \sqrt{10\Pr[U\in L^{D,\bfx}_i]} 
\leq ek \sqrt{\frac{5kq(kq+1)}{|\cY|}}.
\]
The last inequality holds because for every $s\in[q]$ and every $D \in \mathsf{SZ}_{\leq k(s-1)}$, it holds that 
\[
|\set{x \in \cX \mid D(x) \neq \bot} \cup \set{x_1\dots,x_k}| \leq k(q-1) + k = kq.
\]
Thus, we have $|L^{D,\bfx}_{i,1}| \leq kq$ and $|L^{D,\bfx}_{i,2}| \leq \binom{kq}{2} = kq(kq-1)/2$ for $i\in[k]$, which then implies $|L^{D,\bfx}_i| \leq kq(kq+1)/2$.
Finally, summing over $s\in[q]$ completes the proof.
\end{proof}

First, note that Lemma~\ref{lem:zhandry} is tailored for algorithms that output \emph{all elements of the chain and their hash values}. However, to obtain a bound when the algorithm $\cA$ is required to output \emph{only $x_0,x_q$ and $x_{q+1}$} is more challenging.
In most of situations, one could define another algorithm $\cB$ that simply runs $\cA$ followed by calculating the whole chain with $q+2$ extra queries. This would increase the number of queries by at most $q+2$. However, this gives us a meaningless bound for the twisted hash chain problem.

Below, we first provide the following lemma for algorithms that do \emph{not} have to output the last hash value $y_{q+1}$. The proof is similar to Theorem~5.9 in~\cite{chung2021compressed}.
\begin{lemma}\label{lem:chain_all_but_last}
For any $k$-parallel $q$-query oracle algorithm $\cA$ that interacts with a standard random oracle, the probability $p_\cA$ (parameterized by $k$ and $q$) that $\cA$ outputs $x_0,x_1,\dots,x_{q+1}\in\cX$ and $y_0,y_1,\dots,y_q \in \cY$ (without $y_{q+1}$) satisfying
\begin{itemize}
    \item $y_i = H(x_i)$ for $0\leq i\leq q$
    \item $y_{i-1} = x_i \oplus x_{i-2}$ for $i \in [q+1]$, where $x_{-1} \coloneqq 0^n$
\end{itemize}
is upper bounded by the function $F(k,q)$ where
\[
F(k,q) \coloneqq
\left( qek \sqrt{\frac{5kq(kq+1)}{|\cY|}} 
+ e(q+2)\sqrt{\frac{5(q+2)(q+3)}{|\cY|}} 
+ \sqrt{\frac{q+2}{|\cY|}} \right)^2
= O\left(\frac{q^4k^4}{|\cY|}\right).
\]
\end{lemma}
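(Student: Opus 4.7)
The plan is to adapt the compressed-oracle framework of Chung~et~al.~\cite{chung2021compressed}, following the skeleton of their Theorem~5.9 but adjusted for the twisted chain's recursive structure. The two workhorses are Lemma~\ref{lem:zhandry} for passing between the standard random oracle and the compressed oracle, and Lemma~\ref{lem:capacity} for upper-bounding the quantum transition capacity $\llbracket \bot \xRightarrow{q,k} \mathsf{TCHN}^{q+1} \rrbracket$ for any choice of $(q,k)$.

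First, I bridge the mismatch between $\cA$ (which omits $y_{q+1}$) and the premise of Lemma~\ref{lem:zhandry} (which requires \emph{all} hash values of the output tuple). Let $\cB$ be the augmented algorithm that runs $\cA$ and then issues $q+2$ additional one-query layers to expose/verify the pairs $(x_i,H(x_i))$ for $0\leq i\leq q+1$ inside the compressed oracle's recorded database. Then $\cB$ outputs $\bfx=(x_0,\dots,x_{q+1})\in\cX^{q+2}$ and $\bfy=(y_0,\dots,y_{q+1})\in\cY^{q+2}$, and whenever $\cA$ wins on the original task $\cB$ wins as well, so $p_\cA\leq p_\cB$. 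Applying Lemma~\ref{lem:zhandry} to $\cB$ with $\ell=q+2$ and $R$ equal to the twisted-chain relation $y_{i-1}=x_i\oplus x_{i-2}$ then yields
\[
\sqrt{p_\cA}\;\leq\;\sqrt{p_\cB}\;\leq\;\sqrt{p'_\cB}+\sqrt{\tfrac{q+2}{|\cY|}},
\]
where $p'_\cB$ is the same success probability measured against the compressed oracle. This accounts for the third term of $\sqrt{F(k,q)}$.

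Next I bound $p'_\cB$. Whenever $\cB$ wins in the compressed world, the resulting database $D$ must contain the full $(q+1)$-chain, so $p'_\cB\leq\Pr[D\in\mathsf{TCHN}^{q+1}]$. Writing $\cB$'s compressed-oracle evolution as the $q$ $k$-parallel queries of $\cA$ followed by the $q+2$ one-parallel verification queries, I would apply a triangle inequality to the operator $\mathsf{TCHN}^{q+1}\cdot U_{q+q+2}\mathsf{cO}\cdots U_0$ so that the norm splits additively into two contributions: (i) a term coming from $\cA$'s queries, bounded by $\llbracket \bot \xRightarrow{q,k} \mathsf{TCHN}^{q+1}\rrbracket$, which by Lemma~\ref{lem:capacity} is at most $A:=qek\sqrt{5kq(kq+1)/|\cY|}$; and (ii) a term coming from the verification phase, bounded by an instance of Lemma~\ref{lem:capacity} with the parameters $(q,k)$ replaced by $(q+2,1)$, giving $B:=e(q+2)\sqrt{5(q+2)(q+3)/|\cY|}$. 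Summing these and plugging back into the previous display produces exactly $\sqrt{p_\cA}\leq A+B+\sqrt{(q+2)/|\cY|}=\sqrt{F(k,q)}$, which is the claimed bound.

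The delicate step will be the additive decomposition just described. A lazy bound that views $\cB$ as $q+(q+2)$ rounds of $k$-parallel queries and invokes Lemma~\ref{lem:capacity} once would overpay a factor of $k$ on the sequential verification phase and fail to reproduce the stated constants. Making the additive split rigorous requires carefully controlling the $\neg\mathsf{TCHN}^{q+1}$ component of the post-$\cA$ state and accounting for how the sequential verification queries can ``close'' the chain. This mirrors the two-phase analysis behind \cite[Theorem~5.9]{chung2021compressed}, but, as emphasized in the paragraph immediately preceding the lemma statement, the recursion $x_i=H(x_{i-1})\oplus x_{i-2}$ forces the verification to be performed sequentially rather than in a single parallel layer, and that is exactly where most of the extra book-keeping will live.
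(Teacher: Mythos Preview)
Your high-level structure is right, and the third term $\sqrt{(q+2)/|\cY|}$ from Lemma~\ref{lem:zhandry} is obtained correctly. But your treatment of term (ii) has a genuine gap, and your final paragraph actually points in the wrong direction.

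First, the paper's $\cB$ makes only \emph{one} extra query, on $x_{q+1}$; it does not re-query $x_0,\dots,x_q$, since $\cA$ already output $y_0,\dots,y_q$. This already gives $p_\cA=p_\cB$ and the correct $\ell=q+2$ in Lemma~\ref{lem:zhandry}. More importantly, it means the ``verification phase'' in the compressed-oracle picture is a \emph{single} query $\mathsf{cO}_{x_{q+1}}$, not $q+2$ sequential queries.

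Second, your proposed bound for (ii) via ``Lemma~\ref{lem:capacity} with $(q,k)\to(q+2,1)$'' does not apply: that lemma bounds $\llbracket \bot \xRightarrow{q+2,1} \mathsf{TCHN}^{q+3}\rrbracket$, whereas what you need is the transition from $\neg\mathsf{TCHN}^{q+1}$ (with a database already of size up to $kq$ after $\cA$'s queries) to $\mathsf{TCHN}^{q+1}$. If you try to rerun the proof of Lemma~\ref{lem:capacity} from that starting point, the local properties of Lemma~\ref{lem:local_properties} have size $\Theta((kq)^2)$, not $\Theta((q+2)^2)$, and you pick up an extra factor of $k$ --- exactly the overpayment you warned against. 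The numerical coincidence with $B$ is just that: a coincidence.

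What the paper actually does for (ii) is the opposite of your ``sequential verification'' intuition. After measuring the output tuple $\bfx=(x_0,\dots,x_{q+1})$, it works with the \emph{fixed-chain} property $\mathsf{TCHN}_\bfx^{q+1}$ and bounds the single-step norm $\|\mathsf{TCHN}_\bfx^{q+1}\,\mathsf{cO}_{x_{q+1}}\,\neg\mathsf{TCHN}_\bfx^{q+1}\|$. The trick is to pad this one query with $q+1$ dummy queries $\mathsf{cO}_{x_i\hat{0}}=I$ on the other output coordinates, turning it into a single $(q+2)$-parallel step $\llbracket \neg\mathsf{TCHN}_\bfx^{q+1} \xrightarrow{q+2} \mathsf{TCHN}_\bfx^{q+1}\rrbracket$. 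Because $\bfx$ is now \emph{fixed}, one can choose local properties depending only on $\{x_0,\dots,x_{q+1}\}$ and their pairwise XORs (size $\leq (q+2)(q+3)/2$), \emph{independent of the database accumulated by $\cA$}. Theorem~5.23 then yields exactly $e(q+2)\sqrt{5(q+2)(q+3)/|\cY|}$. This fixed-$\bfx$ decoupling and the dummy-query padding are the missing ideas.
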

\begin{proof}
We define an algorithm $\cB$ that runs $\cA$ and obtains the output $x_0,\dot,x_{q+1},y_0,\dots,y_q$. And then $\cB$ makes a classical query $x_{q+1}$ to the random oracle. Finally, $\cB$ outputs $x_0,\dot,x_{q+1},y_0,\dots,y_q,H(x_{q+1})$.

Let $p_\cB$ be the probability that $\cB$ outputs $x_0,x_1,\dots,x_{q+1}\in\cX$ and $y_0,y_1,\dots,y_{q+1} \in \cY$ satisfying
\begin{itemize}
    \item $y_i = H(x_i)$ for $0\leq i \leq q+1$
    \item $y_{i-1} = x_i \oplus x_{i-2}$ for $i \in [q+1]$, where $x_{-1} \coloneqq 0^n$
\end{itemize}
when $\cB$ interacts with the standard random oracle.

Let $p'_\cB$ be the probability that $\cB$ outputs $x_0,x_1,\dots,x_{q+1}\in\cX$ and $y_0,y_1,\dots,y_{q+1} \in \cY$ satisfying
\begin{itemize}
    \item $y_i = D(x_i)$ for $0\leq i \leq q+1$
    \item $y_{i-1} = x_i \oplus x_{i-2}$ for $i \in [q+1]$, where $x_{-1} \coloneqq 0^n$
\end{itemize} when $\cB$ interacts with the compressed oracle.

Let $\ol{p}'_\cB$ be the probability that $\cB$ outputs $x_0,x_1,\dots,x_{q+1}\in\cX$ and $y_0,y_1,\dots,y_{q+1} \in \cY$ satisfying
\begin{itemize}
    \item $D(x_{i-1}) = x_i \oplus x_{i-2}$ for $i \in [q+1]$, where $x_{-1} \coloneqq 0^n$
\end{itemize} when $\cB$ interacts with the compressed oracle.

We trivially have $p_\cA = p_\cB$ and $p'_\cB \leq \ol{p}'_\cB$. Since $\cB$ now outputs all the hash values as well, we can apply Lemma~\ref{lem:zhandry} to $\cB$ which gives 
\[
\sqrt{p_\cB} \leq \sqrt{p'_\cB} + \sqrt{\frac{q+2}{|\cY|}}.
\]

\noindent In the rest of the proof, it remains to bound $\ol{p}'_\cB$. 
\begin{align*}
    \sqrt{\ol{p}'_\cB} & \leq \sup_{U_1,\dots,U_q} \left\| \sum_\bfx \mathsf{TCHN}_\bfx^{q+1} (\ket{\bfx}\bra{\bfx} \otimes \mathsf{cO}_{x_{q+1}})U_q\mathsf{cO}^k U_{q-1} \mathsf{cO}^k \dots U_1\mathsf{cO}^k\bot \right\| \\
    & \leq \left\| \sum_\bfx \mathsf{TCHN}_\bfx^{q+1} (\ket{\bfx}\bra{\bfx}\otimes \mathsf{cO}_{x_{q+1}})\neg\mathsf{TCHN}^{q+1} \right\| \\
    & + \sup_{U_1,\dots,U_q} \|\mathsf{TCHN}^{q+1}U_q \mathsf{cO}^k U_{q-1} \mathsf{cO}^k \dots U_1\mathsf{cO}^k\bot\| \\
    & \leq \max_\bfx\|\mathsf{TCHN}_\bfx^{q+1}\mathsf{cO}_{x_{q+1}} \neg\mathsf{TCHN}^{q+1}\| 
    + \llbracket \bot \xRightarrow{q,k} \mathsf{TCHN}^{q+1}\rrbracket \\
    & \leq \max_\bfx\|\mathsf{TCHN}_\bfx^{q+1}\mathsf{cO}_{x_{q+1}} \neg\mathsf{TCHN}_\bfx^{q+1}\| 
    + \llbracket \bot \xRightarrow{q,k} \mathsf{TCHN}^{q+1}\rrbracket,
\end{align*}
where the summation is over all $\bfx = (x_0,\dots,x_{q+1}) \in \cX^{q+2}$;
$\set{\ket{\bfx}\bra{\bfx}}$ denotes the measurement acting on $\cB$'s output register to produce the output $\bfx$; 
the database property $\mathsf{TCHN}_\bfx^{q+1}$ is defined as
\[
\mathsf{TCHN}_\bfx^{q+1} \coloneqq 
\set{D \mid x_i = D(x_{i-1}) \xor x_{i-2} \text{ for } i\in[q+1]} \subseteq \mathfrak{D}.
\]
That is, the sequence $x_0,\dots,x_{q+1}$ forms a $(q+1)$-chain.

Now, notice that for every $\bfx \in \cX^{q+2}$,
\begin{align*}
& \|\mathsf{TCHN}_\bfx^{q+1}\mathsf{cO}_{x_{q+1}} \neg\mathsf{TCHN}_\bfx^{q+1}\| \\
= & \|\mathsf{TCHN}_\bfx^{q+1}(\mathsf{cO}_{x_{q+1}} \otimes \mathsf{cO}_{x_q\hat{0}} \otimes \dots \otimes \mathsf{cO}_{x_0\hat{0}}) \neg\mathsf{TCHN}_\bfx^{q+1}\| \\
\leq & \max_{\hat{\bfy}} \|\mathsf{TCHN}_\bfx^{q+1}\mathsf{cO}_{\bfx\hat{\bfy}} \neg\mathsf{TCHN}_\bfx^{q+1}\| 
\leq \llbracket \neg\mathsf{TCHN}_\bfx^{q+1} \xrightarrow{q+2} \mathsf{TCHN}_\bfx^{q+1}\rrbracket,
\end{align*}
where the first equality holds since $\mathsf{cO}_{x\hat{0}}$ is equal to the identity operator for every $x\in\cX$.

Following similar arguments as in Lemma~\ref{lem:local_properties}, we now show there exist local properties that recognize the database transition $\llbracket \neg\mathsf{TCHN}_\bfx^{q+1} \xrightarrow{q+2} \mathsf{TCHN}_\bfx^{q+1}\rrbracket$.
For any tuple $\bfx = (x_1,\dots,x_{q+2})$ with pairwise distinct entries, any tuple $\bfx' = (x'_0,\dots,x'_{q+1})$\footnote{In the rest of the proof, we switch the variable $\bfx$ of $\mathsf{TCHN}_\bfx^{q+1}$ into $\bfx'$ for convenience.} and database $D \in \mathfrak{D}$, we define the following local properties for $i \in [q+2]$
\[
L^{\bfx,D}_i \coloneqq \set{x'_0, \dots, x'_{q+1}} \cup \set{x \mid \exists a,b \in\set{1,\dots,q+2} \colon x = x'_a \oplus x'_b}.
\]
Note that $|L^{\bfx,D}_i| \leq (q+2) + \binom{q+2}{2} = (q+2)(q+3)/2$ for each $i\in[q+2]$.

Suppose $D[\bfx\mapsto\bfr] \notin \mathsf{TCHN}_{\bfx'}^{q+1}$ yet $D[\bfx\mapsto\bfu] \in \mathsf{TCHN}_{\bfx'}^{q+1}$. Then $\set{x'_0, \dots, x'_{q+1}}$ is a $(q+1)$-chain. Let $s_\circ$ be the smallest $j$ such that $D[\bfx\mapsto\bfr](x'_j) \neq D[\bfx\mapsto\bfu](x'_j)$. If $s_\circ = q+1$ or $j$ does not exist, then $D[\bfx\mapsto\bfr] \in \mathsf{TCHN}_{\bfx'}^{q+1}$ and we are done. So we assume $0\leq s_\circ \leq q$. Since $D[\bfx\mapsto\bfr]$ coincides $D[\bfx\mapsto\bfu]$ outside of $\bfx$, there must exists an index $i\in[q+2]$ such that $x_i = x'_{s_\circ}$. Therefore, we have $r_i = D[\bfx\mapsto\bfr](x_i) = D[\bfx\mapsto\bfr](x'_{s_\circ}) \neq D[\bfx\mapsto\bfu](x'_{s_\circ}) = D[\bfx\mapsto\bfu](x_i) = u_i$.

In addition, if $s_\circ = 0$, then $u_i = D[\bfx\mapsto\bfu](x'_0) = x'_1 \in \set{x_0',\dots,x'_{q+1}}$. If $1\leq s_\circ \leq q$, then $u_i = D[\bfx\mapsto\bfu](x'_{s_\circ}) = x'_{s_\circ-1} \xor x'_{s_\circ+1}$ which means $u_i$ is the XOR of two distinct elements in $\set{x'_0, \dots, x'_{q+1}}$. In either case, $u_i$ must lie in $L^{\bfx,D}_i$.
Therefore, by Theorem~5.23 in~\cite{chung2021compressed}, for every $\bfx'\in\cX^{q+2}$ we have
\[
\llbracket \neg\mathsf{TCHN}_{\bfx'}^{q+1} \xrightarrow{q+2} \mathsf{TCHN}_{\bfx'}^{q+1}\rrbracket
\leq e(q+2)\sqrt{\frac{5(q+2)(q+3)}{|\cY|}}.
\]
Thus, we can bound $\max_{\bfx'}\|\mathsf{TCHN}_{\bfx'}^{q+1}\mathsf{cO}_{x'_{q+1}} \neg\mathsf{TCHN}_{\bfx'}^{q+1}\|$ by the above quantity.

Putting things together, we have
\[
p_\cA \leq 
\left(  
\llbracket \bot \xRightarrow{q,k} \mathsf{TCHN}^{q+1}\rrbracket
+ e(q+2)\sqrt{\frac{5(q+2)(q+3)}{|\cY|}} 
+ \sqrt{\frac{q+2}{|\cY|}}
\right)^2
\]
Bounding the first term by Lemma~\ref{lem:capacity}, this concludes the proof.
\end{proof}

Now, we are ready to prove the main theorem.
\begin{proof}[Proof of Theorem~\ref{thm:twisted_hash_chain}]
We finish the proof by reduction. 
Define the algorithm $\cD$ as follows:
\begin{enumerate}
    \item Run $\cC$ and obtain $x_0,x_q$ and $x_{q+1}$.
    \item For $i \in [q-1]$: \\
    - Make a classical $2$-parallel query $(x_{i-1},x_{q+i})$ to the random oracle and then obtain $(H(x_{i-1}),$ $H(x_{q+i}))$. \\
    - Set $x_i \coloneqq H(x_{i-1}) \oplus x_{i-2}$ and $x_{q+i+1} \coloneqq H(x_{q+i}) \oplus x_{q+i-1}$.
    \item Make a classical $2$-parallel query $(x_{q-1},x_{2q})$ to the random oracle and then obtain $(H(x_{q-1}),$ $H(x_{2q}))$.
    \item Output $x_0,x_1,\dots,x_{2q+1}$ and $H(x_0),H(x_1),\dots,H(x_{q-1}),H'(x_q),H(x_{q+1}),\dots,H(x_{2q})$, \\ 
    where $H'(x_q) \coloneqq x_{q-1} \oplus x_{q+1}$\footnote{Note that in Step 2 and 3, $\cD$ makes a total of $2q$ queries including $x_0,\dots,x_{2q}$ except $x_q$.}.
\end{enumerate}
First, it is trivial that $p_\cC$ is equivalent to the probability $p_\cD$ that $H'(x_q) = H(x_q)$ and $\cD$ outputs a $(2q+1)$-chain.
Now, we calculate the total number of queries made by $\cD$. 
In Step 1, $\cD$ makes $q$ $k$-parallel queries to execute $\cC$. 
In Steps 2 and 3, $\cD$ makes $q$ $2$-parallel queries. 
To sum up, $\cD$ makes a total of $2q$ $k$-parallel queries. 
By Lemma~\ref{lem:chain_all_but_last}, the probability $p_\cD$ is at most $F(k,2q)$. 
Therefore, this finishes the proof.
\end{proof}

Considering the situation in which the algorithm is assigned to a particular starting point $x_0\in\cX$ of the chain, we have the following corollary which is trivially implied by Theorem~\ref{thm:twisted_hash_chain}.
\begin{corollary}\label{cor:twisted_hash_chain_puzzle}
For any $k$-parallel $q$-query oracle algorithm $\cE$, the probability $p_\cE$ (parameterized by $k$ and $q$) that the algorithm takes a uniformly random $x_0\in\cX$ as input, and outputs $x_q,x_{q+1}\in\cX$ satisfying
\begin{itemize}
    \item there exist $x_1,\dots,x_{q-1}\in\cX$ such that $H(x_{i-1}) = x_i \oplus x_{i-2}$ for $i\in[q+1]$, where $x_{-1} \coloneqq 0^n$
\end{itemize}
is at most $F(k,2q) = O(k^4q^4/|\cY|)$, where the function $F$ is defined in Lemma~\ref{lem:chain_all_but_last}.
\end{corollary}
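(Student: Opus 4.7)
The plan is to derive Corollary~\ref{cor:twisted_hash_chain_puzzle} as a direct reduction to Theorem~\ref{thm:twisted_hash_chain}. Given any $k$-parallel $q$-query oracle algorithm $\cE$ for the puzzle version (where a uniformly random starting point $x_0 \in \cX$ is provided as input), I will construct an algorithm $\cC$ for the free-starting-point version considered in Theorem~\ref{thm:twisted_hash_chain}, and argue that the success probabilities match up to the desired bound.

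The reduction $\cC$ proceeds as follows. First, $\cC$ samples $x_0 \in \cX$ uniformly at random using its own internal randomness (no oracle query is needed for this). Next, $\cC$ runs $\cE$ on input $x_0$, using the same random oracle access that $\cC$ is given, and obtains the output $x_q, x_{q+1} \in \cX$. Finally, $\cC$ outputs the triple $(x_0, x_q, x_{q+1})$. Since $\cC$ makes no additional queries beyond those made by $\cE$, the query complexity of $\cC$ is exactly $q$ $k$-parallel queries, matching the hypothesis of Theorem~\ref{thm:twisted_hash_chain}.

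To analyze the success probability, observe that $\cC$ outputs a valid triple $(x_0, x_q, x_{q+1})$ forming a $(q+1)$-chain precisely when $\cE$, given the sampled $x_0$, outputs valid $(x_q, x_{q+1})$. Since the sampling of $x_0$ by $\cC$ is uniform over $\cX$ and independent of the random oracle, the success probability of $\cC$ is
\[
p_\cC = \Ex_{x_0 \gets \cX}[\Pr[\cE(x_0) \text{ succeeds}]] = p_\cE.
\]
By Theorem~\ref{thm:twisted_hash_chain}, $p_\cC \leq F(k, 2q) = O(k^4 q^4 / |\cY|)$, which gives the desired bound on $p_\cE$.

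This proof is essentially bookkeeping and presents no real obstacle; the core technical work has already been done in Theorem~\ref{thm:twisted_hash_chain} and its supporting lemmas. The only subtlety worth noting is that the random sampling of $x_0$ must be independent of the random oracle, which is ensured by having $\cC$ use its own internal coins (not any oracle output) to pick $x_0$ before invoking $\cE$.
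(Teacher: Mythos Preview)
Your proof is correct and follows essentially the same approach as the paper: both construct $\cC$ by sampling $x_0$ uniformly, running $\cE(x_0)$, outputting the resulting triple, and then invoking Theorem~\ref{thm:twisted_hash_chain}. Your version is in fact slightly cleaner, as you explicitly include $x_0$ in $\cC$'s output and note the independence of the sampling from the oracle; the paper states only $p_\cC \geq p_\cE$ whereas you observe the equality, but either suffices.
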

\begin{proof}
    We finish the proof by reduction. Let $\cC$ be the algorithm that first samples $x_0\in\cX$ uniformly at random and invokes $\cE(x)$. $\cC$ responds to every $\cE$'s oracle query by its oracle access directly. Then $\cC$ outputs whatever $\cE$ outputs. 
    
    Let $p_\cC$ be the probability defined as in Theorem~\ref{thm:twisted_hash_chain}.
    Since the success of $\cE$ implies the success of $\cC$, we have $p_\cC \geq p_\cE$.
    By Theorem~\ref{thm:twisted_hash_chain} and the construction of $\cC$, we have $F(k,2q) \geq p_\cC$, which concludes the proof.
\end{proof}

\begin{remark}
Here, we explain the challenging issue of our case. Given only $x_0,x_q$ and $x_{q+1}$, in order to output the whole chain, the algorithm cannot make the query in parallel but is required to make \emph{adaptive} queries. For example, to reveal the next point $x_1 = H(x_0)$, the algorithm must first query $x_0$. Therefore, we cannot use Theorem~5.9 in~\cite{chung2021compressed} in a black-box way.
\end{remark}
\section{Quantum Walk on a Line}\label{sec:Q_walk}
Our proof of Hamiltonian simulation lower bound relies on the continuous-time quantum walk on a line \cite{Childs_2003}. We introduce quantum walks on a line in this section.

Consider a particle moving on a graph, which is a line with $L$ vertices.
Each vertex on the line is labeled by an integer $1,2,\dots L$.
We use a quantum state $\ket{j}$ to denote the particle locating at the vertex $j$.
Figure~\ref{fig:1d_walk}(a) illustrates our system, a finite segment with length $L$.
We let the Hamiltonian $\HL$ of the system be the adjacency matrix of the graph.
In physics terminology, $\HL$ couples adjacent vertices with the coupling constant $1$.
We have
\begin{equation}
  \label{eq:h_simple}
  \HL = \sum_{j = 1}^{L-1} \ket{j}\bra{j+1} + \ket{j+1}\bra{j},
\end{equation}
or
\begin{equation*}
  \HL = \left(
  \begin{array}{cccccc}
    0 & 1 & 0 & & & 0\\
    1 & 0 & 1 & \cdots & \cdots & 0\\
    0 & 1 & 0 & & & 0\\
      & \vdots & & \ddots & &\vdots\\
      & \vdots & & & 0 & 1\\
    0 & 0 & 0 & \cdots & 1 & 0
  \end{array}
  \right).
\end{equation*}
in the $\{\ket{j}\}_{j=1}^{L}$ basis.

The dynamics of the particle are determined by the time evolution operator $e^{-i\HL\evolt}$, where $\evolt$ is the evolution time.
We call the dynamics of the system ``quantum walk on a line.''

\begin{figure}[ht]
    \centering
    \begin{adjustbox}{max width = \textwidth}
\begin{tikzpicture}
    \node at (-3.5, 1) {(a)};
    \filldraw [black] (1,1) circle (2.5pt) node[below]{\small{$\ket{1}$}};
    \filldraw [black] (2.5,1) circle (2pt) node[below]{$\ket{2}$};
    \filldraw [black] (4,1) circle (2pt) node[below]{$\ket{3}$};
    
    \filldraw [black] (4.375,1) circle (0.5pt);
    \filldraw [black] (4.75,1) circle (0.5pt);
    \filldraw [black] (5.125,1) circle (0.5pt);
    
    \filldraw [black] (5.5,1) circle (2pt) node[below]{\small{$\ket{L-2}$}};
    \filldraw [black] (7,1) circle (2pt) node[below]{$\ket{L-1}$};
    \filldraw [black] (8.5,1) circle (2.5pt) node[below]{$\ket{L}$};
    
    \draw (1,1) - - (4,1);
    \draw (5.5,1) - - (8.5,1);
    
%=====================

    \node at (-3.5, -1) {(b)};
    \filldraw [black] (-3.125,-1) circle (0.5pt);
    \filldraw [black] (-2.75,-1) circle (0.5pt);
    \filldraw [black] (-2.375,-1) circle (0.5pt);
    
    \draw (-2,-1) - - (4,-1);
    
    \filldraw [black] (-2,-1) circle (2pt) node[below]{\small{$\ket{-1}$}};
    \filldraw [black] (-0.5,-1) circle (2pt) node[below]{\small{$\ket{0}$}};
    \filldraw [black] (1,-1) circle (2pt) node[below]{\small{$\ket{1}$}};
    \filldraw [black] (2.5,-1) circle (2pt) node[below]{$\ket{2}$};
    \filldraw [black] (4,-1) circle (2pt) node[below]{$\ket{3}$};
    
    \filldraw [black] (4.375,-1) circle (0.5pt);
    \filldraw [black] (4.75,-1) circle (0.5pt);
    \filldraw [black] (5.125,-1) circle (0.5pt);
    
    \filldraw [black] (5.5,-1) circle (2pt) node[below]{\small{$\ket{L-2}$}};
    \filldraw [black] (7,-1) circle (2pt) node[below]{$\ket{L-1}$};
    \filldraw [black] (8.5,-1) circle (2pt) node[below]{$\ket{L}$};
    \filldraw [black] (10,-1) circle (2pt) node[below]{$\ket{L+1}$};
    \filldraw [black] (11.5,-1) circle (2pt) node[below]{$\ket{L+2}$};
    
    \draw (5.5,-1) - - (11.5,-1);
    
    \filldraw [black] (11.875,-1) circle (0.5pt);
    \filldraw [black] (12.25,-1) circle (0.5pt);
    \filldraw [black] (12.625,-1) circle (0.5pt);
    
\end{tikzpicture}
\end{adjustbox}
    \caption{Quantum walk on a line. (a) Quantum walk on a finite segment with length $L$. (b) Quantum walks on an infinite line.}
    \label{fig:1d_walk}
\end{figure}
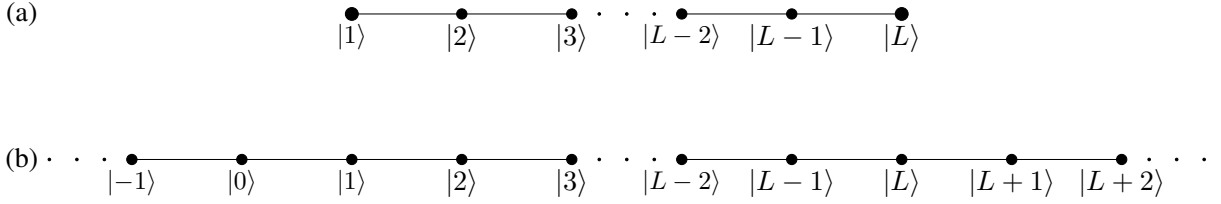

We are interested in the dynamics of a particle initially at the end of the line.
In other words, we consider the evolution of a particle under $\HL$ with the initial state $\ket{1}$.
We have the following result.

\begin{lemma}
\label{lem:simple_h}
  Given a system that evolves under the Hamiltonian $\HL$ described in \eqref{eq:h_simple} with initial state $\ket{1}$, if the system is measured at time $\evolt \in [0, L/2]$ in the $\{\ket{j}\}_{j=0}^{L}$ basis with outcome $l$, the probability that $l>\evolt$ is at least $1/3$.
\end{lemma}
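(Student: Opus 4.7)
The plan is to reduce the finite-line quantum walk to the infinite-line one via the method of images, obtain an explicit Bessel-function formula for the amplitudes, and then upper bound the ``slow'' probability mass at outcomes $l\le t$ using \lemmaref{lem:bessel_tail}.

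First, I would diagonalize $H_\infty=\sum_{j\in\mathbb{Z}}(\ket{j}\bra{j+1}+\ket{j+1}\bra{j})$ in momentum space. Using the integral representation \eqref{eq:jn_integral}, a direct computation gives $\bra{m}e^{-iH_\infty t}\ket{n}=(-i)^{m-n}J_{m-n}(2t)$. For the semi-infinite line $\{1,2,\ldots\}$ with a Dirichlet boundary at site $0$, the method of images extends the initial state $\ket{1}$ to the antisymmetric combination $\ket{1}-\ket{-1}$ on the infinite line. Combined with the recursion \eqref{eq:jn_recursion}, this yields
\[
\bra{j}e^{-iH_{\mathrm{semi}}t}\ket{1} \;=\; (-i)^{j-1}\bigl(J_{j-1}(2t)+J_{j+1}(2t)\bigr) \;=\; (-i)^{j-1}\frac{j}{t}J_j(2t),
\]
so the semi-infinite outcome probability is $P_{\mathrm{semi}}(l)=(l/t)^2 J_l(2t)^2$; the Bessel moment identity $\sum_{l\ge 1}l^2 J_l(2t)^2=t^2$ confirms $\sum_{l\ge 1}P_{\mathrm{semi}}(l)=1$.

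Second, I would upgrade from the semi-infinite line to the finite line. Reflecting additionally about $L+1$, the finite-line amplitude equals the semi-infinite amplitude plus image contributions of the form $(-i)^{k}J_k(2t)$ with Bessel index $k\ge L+1$. Since $t\le L/2$ we have $k>2t$, so these image Bessel functions lie in the suppressed regime \eqref{eq:jn_large_order_xltn}; a straightforward estimate shows that $\sum_{l=1}^{L}|P_{\mathrm{finite}}(l)-P_{\mathrm{semi}}(l)|$ is small. Combined with the exact unitarity $\sum_{l=1}^{L}P_{\mathrm{finite}}(l)=1$, it then suffices to upper bound $\sum_{l=1}^{\lfloor t\rfloor}(l/t)^2J_l(2t)^2$.

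Third, for the trivial case $t<1$ the sum is empty and $\Pr[l>t]=1$, so assume $t\ge 1$. For every positive integer $l\le t$, \lemmaref{lem:bessel_tail} applied with $x=2t\ge 2l$ gives $J_l^2(2t)\le 2/(l\pi)$. Therefore
\[
\sum_{l=1}^{\lfloor t\rfloor}\Bigl(\frac{l}{t}\Bigr)^2 J_l(2t)^2 \;\le\; \frac{2}{\pi t^2}\sum_{l=1}^{\lfloor t\rfloor} l \;\le\; \frac{\lfloor t\rfloor(\lfloor t\rfloor+1)}{\pi t^2}\;\le\; \frac{2}{\pi}.
\]
Combining with Step~2 gives $\Pr[l\le t]\le 2/\pi+\varepsilon<2/3$, so $\Pr[l>t]\ge 1/3$ as required.

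The main technical obstacle is making Step~2 quantitative when $t$ approaches $L/2$: then the image Bessel indices only barely exceed $2t$, so \eqref{eq:jn_large_order_xltn} gives merely polynomial-in-$L$ suppression rather than exponential. Fortunately, the slack $2/3-2/\pi\approx 0.03$ between Step~3's bound and the target is enough to absorb the correction, which can be controlled by bounding $\sum_{k\ge L+1}J_k(2t)^2$ by the escape probability of the semi-infinite walk past site $L$.
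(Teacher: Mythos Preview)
Your proposal is correct and follows essentially the same route as the paper: method of images to reduce the finite line to the infinite line, the Bessel-function amplitude formula $P(1,l,t)\approx(l/t)^2J_l(2t)^2$ via the recursion \eqref{eq:jn_recursion}, and the bound $\sum_{l\le t}(l/t)^2 J_l(2t)^2\le 2/\pi$ from \lemmaref{lem:bessel_tail}. The paper is in fact looser than you on the image-term error near $t=L/2$---it simply asserts exponential smallness via \eqref{eq:jn_large_order_xltn} and carries the $\approx$ through---so your observation that the slack $2/3-2/\pi$ must absorb this correction is a welcome refinement rather than a deviation.
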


Before the formal proof of Lemma~\ref{lem:simple_h}, we first discuss the general behavior of the quantum walk.
Let the particle initially locate at $\ket{k}$ and evolve under the Hamiltonian $\HL$.
When measuring the system at time $\evolt$ in the $\{\ket{j} \}_{j=1}^{L}$ basis, the probability $P(k,l,\evolt)$ of measurement outcomes being $l$ is
\begin{equation}
  \label{eq:prob_get_l}
  P(k,l,\evolt) = \abs{\bra{l}e^{-i\HL\evolt}\ket{k}}^2.
\end{equation}
By diagonalizing $\HL$, we can calculate $P(k, l, \evolt)$ as follows:
\begin{equation}
  \label{eq:prob_sol}
  P(k, l, \evolt) = \abs{\bra{l}e^{-i\HL\evolt}\ket{k}}^2
  = \sum_{p,q=1}^{L} e^{-i(\lambda_p-\lambda_q) \evolt}
  v^{(p)}_l v^{(p)\ast}_k  v^{(q)\ast}_l v^{(q)}_k,
\end{equation}
where $\lambda_p$'s are the eigenvalues of $\HL$ --- each with the corresponding eigenstate $\ket{v^{(p)}} = \sum_{j=1}^{L} v^{(p)}_j \ket{j}$.
The eigenvalues and the eigenstates of $\HL$ have a closed-form expression.
That is, $\lambda_p = 2\cos(\frac{p\pi}{L+1})$ and  $v^{(p)}_j = \sqrt{\frac{2}{L+1}}\sin(\frac{jp\pi}{L+1})$ \cite{nagaj2010fast}
\footnote{In fact, there is a simpler form of $P(k,l,\evolt)$: when $\abs{l-k}$ is even, $$P(k,l,\evolt) = \big(\sum_{p} \cos\left(2\evolt\cos\left(\frac{p\pi}{L+1}\right)\right)\sin\left(\frac{kp\pi}{L+1}\right)\sin\left(\frac{lp\pi}{L+1}\right) \big)^2,$$ and when $\abs{l-k}$ is odd, $$P(k,l,\evolt) = \big(\sum_{p} \sin\left(2\evolt\cos\left(\frac{p\pi}{L+1}\right)\right)\sin\left(\frac{kp\pi}{L+1}\right)\sin\left(\frac{lp\pi}{L+1}\right)\big)^2.$$}.

We use the propagation of the wave function of a free particle to analogize the quantum walk.\footnote{Consider an extreme case that the distance between two adjacent vertices in space goes to zero and the length $L$ goes to infinity. The system is reduced to free space.}
For example, we plot the result of the quantum walk on a segment of length $L=100$ in Figure~\ref{fig:id_walk_result}.
The initial state is $\ket{k=1}$ and we focus on the time interval $\evolt\in[0, L/2]$.
Figure~\ref{fig:id_walk_result}(a) shows $P(1, l, \evolt)$, the probability of obtaining the measurement outcome $\ket{l}$, for every $l$ at different time $\evolt$.
We see that at time $\evolt$, the wavefront reaches $l \approx 2\evolt$.
Figure~\ref{fig:id_walk_result}(b) shows the probability of getting the measurement outcome $\ket{l}$ for a \emph{fixed} $l$ versus time.
We see that the probability is extremely small when $\evolt\ll l/2$ and reaches the maximum at $\evolt\approx l/2$.
Finally, it behaves like a damped oscillation when $\evolt \gtrsim l/2$.
These observations suggest that the wavefront propagates at a constant speed, which gives a hint that the particle reaches the vertex $l = \Theta(\evolt)$ at time $\evolt$.\footnote{This corresponds to the fact that the uncertainty of the position of a free particle is linear in $\evolt$. See, for example, \cite{SN20}.}

\begin{figure}[ht]
    \centering
    \begin{subfigure}{0.48\textwidth}
    \includegraphics[width=\textwidth]{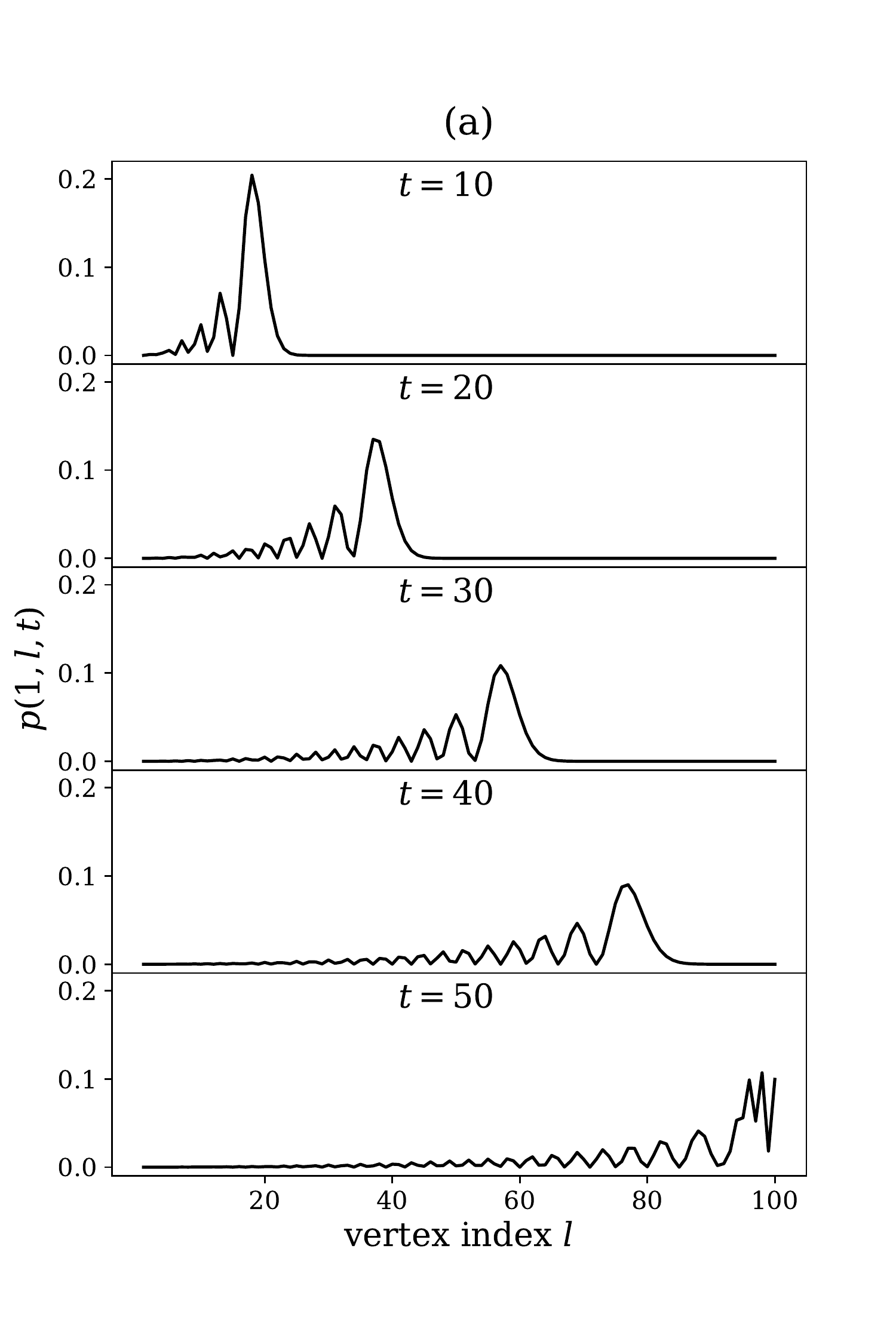}
    %\caption{}
    \end{subfigure}
    %\hfill
    \begin{subfigure}{0.48\textwidth}
    \includegraphics[width=\textwidth]{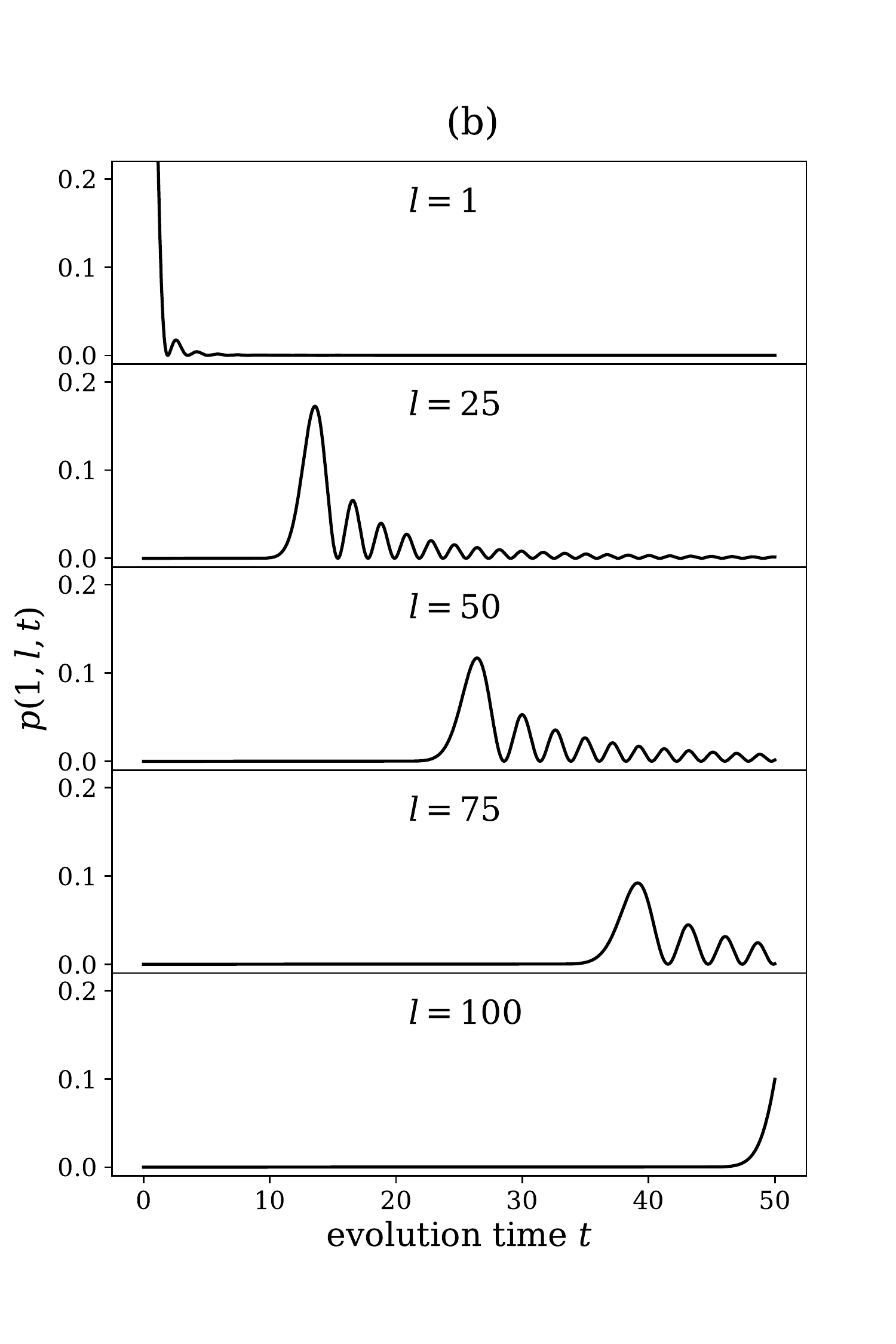}
    %\caption{}
    \end{subfigure}
    \caption{The result of quantum walk on a segment with $L=100$ for the evolution time $\evolt\in[0, L/2]$. The initial state is $\ket{1}$. (a) The probability of getting the outcomes $\ket{l}$ for every node $l$ at the evolution time $\evolt = 10, 20, 30, 40$ and $50$ respectively. (b) The probability of getting the outcomes $l=1, 25, 50, 75$ and $100$ versus evolution time $\evolt\in[0, L/2]$.}
    \label{fig:id_walk_result}
\end{figure}

Next, we are going to prove Lemma~\ref{lem:simple_h}.
We take another approach instead of diagonalizing $\HL$ directly.
We follow the approach in \cite{Childs_2003}.
Similar to solving ``the particle in a box model'' in quantum mechanics, we first find the homogeneous solution in free space and then find the particular solution that satisfies the boundary conditions and the initial conditions. (See, for example, \cite{SN20}.)

Consider the quantum walk on an infinite line which is illustrated in Figure~\ref{fig:1d_walk}(b).
The Hamiltonian of the quantum walk on an infinite line is defined by
\begin{equation}
  \label{eq:H_infty}
  H_{\infty} \coloneqq \sum_{j=-\infty}^{\infty} \ket{j+1}\bra{j} +\ket{j}\bra{j+1},
\end{equation}
and we define the propagator
\begin{equation}
  G(k,l,\evolt) := \bra{l}e^{-iH_{\infty}\evolt}\ket{k}.
\end{equation}
The (sub-normalized) eigenstate of $H_{\infty}$ is the momentum state $\ket{p}$.
The momentum state has the following property
\begin{equation}
  \inner{j}{p} = e^{ipj}, -\pi\leq p \leq \pi.
\end{equation}
The corresponding eigenvalue of $\ket{p}$ is $E_p = 2\cos p$.
Hence, we have
\begin{equation}
  \bra{l}e^{-iH_{\infty}\evolt}\ket{k}
  = \int_{-\pi}^{\pi} dp e^{-i2\evolt\cos p + i p(l-k)}
  = i^{(l-k)}J_{l-k}(2\evolt),
\end{equation}
where $J_n(\cdot)$ is the Bessel function of order $n$. (See \eqref{eq:jn_integral}.)

Now we are ready to calculate the propagator of the quantum walk on a \emph{finite} segment.
We use $\wt{G}(k, l, \evolt) \coloneqq \bra{l}e^{-i\HL\evolt}\ket{k}$ to denote the propagator of the quantum walk on a finite segment.
The propagator $\wt{G}$ is a superposition of $G$ and $\wt{G}$ that satisfies the boundary conditions: $\wt{G}(k, 0, \evolt) = 0 = \wt{G}(k, L+1, \evolt)$, and the initial condition $\wt{G}(k, l, 0) =\delta_{kl}$.

The solution is
\begin{equation}
  \label{eq:G_tilde_finite}
  \wt{G}(1, l, \evolt) = \sum_{m=-\infty}^{\infty}
  G(1, l+2m(L+1), \evolt) - G(1, -l+2m(L+1), \evolt).
\end{equation}
The above equation \eqref{eq:G_tilde_finite} can be interpreted as the wave reflecting between the boundaries $j=0$ and $j=L+1$. 

We set the starting point $j=1$.
In the time interval that we are interested in, namely, $\evolt \in[0, L/2]$, we have $G(1, \pm l + 2m(L+1), t) = J_{2m(L+1)\pm l -1 }(2t)$ is exponentially small in $L$ for $m\neq 0$.
This is because the order $\abs{2m(L+1)\pm l -1} >L$ for $m\neq 0$ and the argument $2\evolt\leq L$ for $\evolt \leq L/2$. (See \eqref{eq:jn_large_order_xltn}.)

\noindent Thus,
\begin{align}
\begin{split}
  \wt{G}(1, l,\evolt) &\approx G(1,l,\evolt) - G(1,-l, \evolt)\\
  &= i^{l-1} J_{l-1}(2\evolt) - i^{-(l+1)} J_{-(l+1)}(2\evolt)\\
  &=  i^{l-1} J_{l-1}(2\evolt) - (i^{-(l+1)})(-1^{l+1}) J_{l+1}(2\evolt)\\
  &=  i^{l-1} J_{l-1}(2\evolt) - (-i)^{l+1})  J_{l+1}(2\evolt)\\
  &=  i^{l-1} (J_{l-1}(2\evolt) - (-i)^{2} J_{l+1}(2\evolt))\\
  &=  i^{l-1} (J_{l-1}(2\evolt) + J_{l+1}(2\evolt))\\
  &=  i^{l-1} \frac{l}{\evolt} J_{l}(2\evolt).\label{eq:G_tilde_sol}
  \end{split}
\end{align}
The third equation is due to the relation of negative order \eqref{eq:jn_minus_n} of the Bessel function,
and the last equation uses the recursion property \eqref{eq:jn_recursion} of the Bessel function.
Then we have

\begin{equation}
  \label{eq:p_using_bessel}
  P(1, l, \evolt)
  = \abs{\wt{G}(1, l, \evolt)}^2
  \approx \left(\frac{l}{\evolt}\right)^2 J^2_{l}(2\evolt).
\end{equation}
As a remark, the probability $P(1, l, \evolt)$ is almost independent of $L$ when $\evolt\in [L/2]$.
It can be interpreted as the following: before the wavefront reaches the boundary, the wave propagates as in free space.
Finally, we prove Lemma~\ref{lem:simple_h}.

\begin{proof}[Proof of Lemma~\ref{lem:simple_h}]
We directly calculate the probability $\sum_{l=1}^{\floor{\evolt}} P(1, l, \evolt)$ as follows:
\begin{align*}
\sum_{l=1}^{\floor{\evolt}} P(1,l, \evolt)
= \sum_{l=1}^{\floor{\evolt}} \left(\frac{l}{\evolt}\right)^2 J^2_{l}(2\evolt)
\leq \sum_{l=1}^{\floor{\evolt}} \left(\frac{l}{\evolt}\right)^2 \frac{2}{\pi}\frac{1}{l}
= \frac{2}{\pi} \sum_{j=l}^{\floor{\evolt}} \frac{l}{\evolt^2}
\leq \frac{2}{\pi},
\end{align*}
where the first equation follows from \eqref{eq:p_using_bessel} and the second inequality follows from Lemma~\ref{lem:bessel_tail}. \\
\noindent As a result, we conclude that
\begin{equation*}
\sum_{l=\ceil{\evolt}}^{L} P(1, l, \evolt) 
= 1- \sum_{l=1}^{\floor{\evolt}} P(1, l, \evolt)
>\frac{1}{3}.
\end{equation*}
\end{proof}
\section{No Fast-forwarding in Oracle Model: Unconditional Result}
\label{sec:oracle}

In this section, we are going to investigate the parallel lower bound of Hamiltonian simulation in the oracle model.
In the oracle model, the Hamiltonian is expressed by a Hermitian matrix.
There are many algorithms that can efficiently simulate a Hamiltonian in the oracle model if the Hamiltonian matrix is sparse \cite{berry2007efficient, CB12, BCC+14, BCK15, LC17, LC19}.
As a result, we are interested in the lower bound of simulating a sparse Hamiltonian.
Besides, we normalize the Hamiltonian by setting the absolute value of every element of the Hamiltonian to be at most $1$.
The sparse Hamiltonian is defined as follows.

\begin{definition}[Sparse Hamiltonian]
  Let $H\in\mathbb{C}^{N\times N}$ denote a Hamiltonian acting on the Hilbert space with dimension $N$.
  We say $H$ is $d$-sparse if there are at most $d$ nonzero entries in every row.
\end{definition}

In the oracle setting, the simulation algorithm can only obtain the description of the Hamiltonian via oracle queries.
In most of the models of the algorithms, there are two oracles that can be accessed:
First, the \emph{entry oracle}, denoted by $\OracleH$, answers the value of the matrix element.
Second, the \emph{sparse structure oracle}, denoted by $\OracleSparse$, answers the index of the nonzero entry.
Let the Hamiltonian $H$ that we want to simulate be acting on an $N$-dimensional Hilbert space and be $d$-sparse.
When the entry oracle $\OracleH$ is queried on the index $(j, k)$ where $j, k\in[N]$, it returns the element value $H_{jk}$.
When the sparse structure oracle is queried on $(j, s)$ where $j\in[N]$ and $s\in[d]$, it returns $k$ where $H_{jk}$ is the $s$-th nonzero entry of the $j$-th row.

The algorithm can query these two oracles in superposition respectively.
In the standard quantum oracle model, these two oracles are written as:

\begin{equation}
  \label{eq:oracle_h}
  \OracleH \ket{j,k,z} = \ket{j, k, z\oplus H_{jk}},
\end{equation}
and
\begin{equation}
  \label{eq:oracle_sparse}
  \OracleSparse \ket{j, s} = \ket{j, k},
\end{equation}
where $k$ is the index of the $s$-th nonzero entry in the $j$-th row.

We are going to prove that simulating a quantum system for evolution time $t$ requires at least $\Omega(t)$ parallel quantum queries.
We have the following result.

\begin{theorem}[Simulation lower bound in the oracle model]
  \label{thm:lower_bound_oracle}
  For any integer $n$, any polynomial $T(\cdot)$ and $\pquery = \poly(n)$, there exists a time-independent Hamiltonian $H\in\mathbb{C}^{(2^nT(n))\times(2^nT(n))}$ satisfies the following.
  For any quantum algorithm that can make $\pquery$-parallel queries to the entry oracle $\OracleH$ (defined in \eqref{eq:oracle_h}) and the sparse structure oracle $\OracleSparse$ (defined in \eqref{eq:oracle_sparse}), simulating $H$ for an evolution time $t\in [0, T(n)/2]$ within an error $\epsilon \leq 1/4$ needs at least $\Omega(t)$ $\pquery$-parallel queries to $\OracleH$ and $\OracleSparse$ in total.
  Furthermore, $H$ is $2$-sparse and $\abs{H_{jk}}\le 1$ for every $j,k\in [2^nT(n)]$.
\end{theorem}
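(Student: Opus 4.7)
The plan is to prove Theorem~\ref{thm:lower_bound_oracle} via the graph-to-Hamiltonian reduction from the permutation chain problem (Theorem~\ref{thm:hashchain_main}), using the continuous-time quantum walk on a line analyzed in Lemma~\ref{lem:simple_h}. Fix $N \eqdef 2^n$ and $L \eqdef T(n)-1$, so the target Hilbert space has dimension $N(L+1)=2^nT(n)$. Given $L$ permutations $\Pi_1,\ldots,\Pi_L$ of $[N]$ accessed through the oracle $\sPi$ of Definition~\ref{def:perm-notation}, I define the graph $G$ on vertex set $\{(j,x)\colon 0\le j\le L,\ x\in[N]\}$ with edges $\{(j,x),(j+1,\Pi_{j+1}(x))\}$. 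Because each $\Pi_j$ is a bijection, $G$ is a disjoint union of $N$ paths, each of length $L+1$; I take $H$ to be its adjacency matrix, which is $2$-sparse with all nonzero entries equal to $1$ and hence satisfies the bounds of the theorem.

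I will next show that a single call to either $\OracleH$ or $\OracleSparse$ can be simulated coherently by $O(1)$ calls to $\sPi$. For $\OracleH$ at $((j_1,x_1),(j_2,x_2),z)$, check whether $j_2\in\{j_1-1,j_1+1\}$ and whether $\Pi_{\max(j_1,j_2)}$ maps the endpoint at the lower level to the one at the higher level, using a single forward $\sPi$-query and its uncomputation. For $\OracleSparse$ at $((j,x),s)$ with $s\in\{1,2\}$, return either $(j+1,\Pi_{j+1}(x))$ or $(j-1,\Pi_j^{-1}(x))$, which is a single query on either the forward or the inverse branch of $\sPi$. Consequently, any algorithm issuing $d$ layers of $p$-parallel queries to $\{\OracleH,\OracleSparse\}$ yields an $O(d)$-layer algorithm issuing $p$-parallel queries to $\sPi$ with the same output distribution.

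The core of the argument is that simulating $H$ for time $t$ on the initial state $\ket{0,1}$ already produces a chain advancement on the permutation chain. Because $G$ is a disjoint union of paths, the dynamics restricts to the path containing $(0,1)$, on which $H$ acts as the line Hamiltonian of \eqref{eq:h_simple}. By Lemma~\ref{lem:simple_h}, for every $t\in[0,L/2]\supseteq[0,T(n)/2]$, measuring $e^{-iHt}\ket{0,1}$ in the computational basis yields an outcome $(\ell,x_\ell)$ with $\ell\ge\lceil t\rceil$ with probability at least $1/3$, and any such outcome on this path satisfies $x_\ell=\Pi_\ell(\cdots\Pi_1(1)\cdots)=\xb_{\ell+1}$ in the notation of Definition~\ref{def:perm-notation}. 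If the simulator's output is within trace distance $\epsilon\le 1/4$ from the ideal, this success probability drops by at most $1/4$ and remains at least $1/12$. Averaging over the at most $L$ possible values of $\ell$, there exists some $\ell^\star\ge\lceil t\rceil$ for which the simulator outputs the pair $(\ell^\star,\xb_{\ell^\star+1})$ with probability at least $1/(12L)=1/\poly(n)$.

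Finally, suppose for contradiction that the simulator uses only $d$ $p$-parallel queries to $\{\OracleH,\OracleSparse\}$ with $d<\lfloor(\lceil t\rceil-1)/2\rfloor$. Composing the oracle-simulation step with the reduction above gives an $O(d)$-depth, $p$-parallel algorithm that, on the permutation chain of length $\ell^\star$ (extended by independent permutations $\Pi_{\ell^\star+1},\ldots,\Pi_L$ sampled locally), outputs $\xb_{\ell^\star+1}$ with probability $1/\poly(n)$. But $d<\lfloor(\ell^\star-1)/2\rfloor$, so Theorem~\ref{thm:hashchain_main} (applied with $q=\ell^\star$ and $k=p$, both polynomial in $n=\log N$) caps this probability by $O(\ell^\star\sqrt{p/N})=2^{-\Omega(n)}$, yielding a contradiction and forcing $d=\Omega(t)$. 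The main obstacle I anticipate is reconciling the fact that Lemma~\ref{lem:simple_h} produces a \emph{random} stopping level $\ell$ rather than the \emph{fixed} final index $q$ required by Theorem~\ref{thm:hashchain_main}; the $1/L=1/\poly(n)$ loss from the averaging step is comfortably dominated by the exponential gap $2^{-\Omega(n)}$, but one must verify that the same fixed $H$ (with $L=T(n)-1$) witnesses the bound uniformly for every $t$ in the range. A secondary subtlety is checking that the oracle reduction respects both the forward and inverse branches of $\sPi$, which is immediate from Definition~\ref{def:perm-notation}.
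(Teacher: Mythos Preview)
Your proposal is correct and follows essentially the same reduction as the paper: the same graph-to-Hamiltonian construction, the same $O(1)$-overhead simulation of $\OracleH,\OracleSparse$ from $\sPi$, and the same use of Lemma~\ref{lem:simple_h} to extract a chain element at level $\ell>t$ with constant probability after the trace-distance loss.

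The one place where the paper is cleaner is exactly the ``main obstacle'' you flag. Rather than pigeonholing to a fixed $\ell^\star$ and then padding with locally sampled permutations $\Pi_{\ell^\star+1},\ldots,\Pi_L$ to fit the format of Theorem~\ref{thm:hashchain_main}, the paper first records Corollary~\ref{thm:hardness_hash_chain}, which union-bounds Theorem~\ref{thm:hashchain_main} over all $q>t$ to show that any $d$-query algorithm outputs $\xb_{q+1}$ for \emph{some} $q>d$ with probability at most $L\cdot O(d\sqrt{p/N})=\negl(n)$. With that in hand, the reduction simply measures the simulator's output $(q,x_q)$ and wins whenever $q>t$, which happens with constant probability; no averaging and no locally sampled extension are needed. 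Your route works because the $1/L$ loss is swallowed by the exponential gap, but invoking the corollary avoids the detour entirely.
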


Theorem~\ref{thm:lower_bound_oracle} can be interpreted as simulating a system with $n+O(\log n)$ qubits for an evolution time $t<\poly (n)$ cannot be fast-forwarded.

Before the formal proof of Theorem~\ref{thm:lower_bound_oracle}, we first sketch our proof strategy.
We modify the proof of the query lower bound in \cite{berry2007efficient}.
In \cite{berry2007efficient}, the parity problem is reduced to the Hamiltonian simulation problem.
In particular, it is shown that if one can fast-forward the Hamiltonian simulation, then one can find the parity of an $N$-bit string with $o(N)$ queries.
However, this technique cannot be extended to prove the parallel lower bound since finding the parity of a string is not parallel-hard.
Instead, we reduce the permutation chain problem, of which the parallel hardness was already proven in Section~\ref{sec:hashchain}, to the Hamiltonian simulation.
We are going to show that there exists a specific Hamiltonian such that simulating the Hamiltonian implies solving the permutation chain problem.

We restate the permutation chain problem and its hardness below.
\begin{definition}[Permutation chain]
\label{def:hash_chain}
  Let $n\in\mathbb{N}$ and $\pquery, L = \poly(n)$. 
  For each $j\in [L]$, let $\hash_{j}:\ \{ 0, 1\} ^{n}\rightarrow\{ 0, 1\} ^{n}$ be a random permutation and let $\hash_{j}^{-1}$ be the inverse of $\hash_{j}$.
  Let $\hashstep^{(j)}(\cdot)$ denote $\hash_j(\hash_{j-1}(\cdots \hash_1(\cdot)))$.
  A quantum algorithm can make $\pquery$-parallel query to both $\hash_{j}$ and $\hash_{j}^{-1}$ for each $j\in [L]$ respectively and is asked to output $ x_{\querynum} \in \{ 0, 1\} ^{n}$ such that $x_{\querynum} = \hashstep^{(\querynum)} (0^n)$, where $q\in [L]$.
\end{definition}

\begin{corollary}[Hardness of permutation chain]
\label{thm:hardness_hash_chain}
    Let $n\in\mathbb{N}$, and $\pquery, L = \poly(n)$.
    For each $j\in [L]$, let $\Pi_j$ and $\Pi^{-1}_j$ be a random permutation over $n$-bit strings and its inverse.
    Let $\hashstep^{(j)}(\cdot) := \hash_j(\hash_{j-1}(\cdots \hash_1(\cdot)))$ be the function defined in Definition~\ref{def:hash_chain}.
    For any $t,q\in[L]$ and any quantum algorithm $\cA$ that makes $t$ $\pquery$-parallel queries to $\hash_j$ and $\hash^{-1}_{j}$, the probability that $\cA$ outputs $x_q\in\{0,1\}^{n}$ satisfying $x_q = f^{(q)}(0^n)$ and $t<q$ is negligible in $n$. 
\end{corollary}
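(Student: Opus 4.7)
The plan is to derive the corollary as a consequence of Theorem~\ref{thm:hashchain_main} through a notation translation and a reduction that matches parameters. First I would observe that the corollary's random hash functions $\hash_1,\ldots,\hash_L$ viewed as permutations over $\{0,1\}^n$, together with their inverses, are distributionally identical to the permutation chain $\Pi_1,\ldots,\Pi_L$ of Definition~\ref{def:perm-notation}; the only cosmetic difference is that the corollary seeds the chain at $0^n$ instead of $\xb_1=1$, but since each $\hash_j$ is a uniformly random permutation, the induced distribution over chain values is invariant under relabeling of the seed. Under this identification, $x_q = \hashstep^{(q)}(0^n)$ plays exactly the role of $\xb_{q+1}$, and the two query models coincide once we bundle $(\hash_j,\hash_j^{-1})$ into the controlled unitary $\sPi$.

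Given an adversary $\cA$ making $t$ $\pquery$-parallel queries and outputting a candidate for $x_q$ with $t<q$, the reduction instantiates Theorem~\ref{thm:hashchain_main} with parallelism $k=\pquery$ and a chain parameter chosen to match the adversary's budget. The natural choice is $q^\star \coloneqq 2t+1$, which makes $\lfloor(q^\star-1)/2\rfloor = t$, and the theorem then yields
\[
\E_{\sPi}\bigl[\Pr[\cA^{\sPi}\text{ outputs }\xb_{q^\star+1}]\bigr]
\;\leq\; O\!\left(q^\star\sqrt{\pquery/N}\right)
\;=\; \poly(n)\cdot 2^{-n/2}
\;=\; \negl(n),
\]
using $N=2^n$ and $q^\star,\pquery=\poly(n)$. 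Whenever $q\geq 2t+1$, producing $x_q$ is at least as hard as producing $x_{q^\star}$: an adversary for the former can be composed with a constant-depth wrapper that further applies $\hash_{q^\star+1}^{-1},\ldots,\hash_q^{-1}$ to $x_q$, so the same negligibility bound transfers. Likewise, instantiating Theorem~\ref{thm:hashchain_main} at parameter $q$ itself handles the subregime $q\geq 2t+1$ immediately.

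The main obstacle I anticipate is the intermediate regime $t < q < 2t+1$: Theorem~\ref{thm:hashchain_main} applied at parameter $q$ only controls adversaries of depth $\lfloor(q-1)/2\rfloor < t$, while the instantiation at $q^\star = 2t+1$ requires an extension of $\cA$ by $q^\star - q$ further sequential queries, which overshoots the $t$-query budget that the theorem demands. I expect to resolve this either by a tighter reworking of Lemma~\ref{lem:pi2pitl} that sharpens the factor-of-two loss in the $\sPi\to\sPitl$ simulation, or, more cleanly, by a union bound over the polynomially many admissible values of $q\in\{t+1,\ldots,L\}$ combined with a case analysis that pushes each $q$ into one of the two regimes already handled. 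In either route the bookkeeping of query depth across the two subregimes is the only subtle point; the notation translation of the first paragraph and the negligibility calculation of the second are routine, and together with the careful case analysis they deliver the corollary with the stated $\negl(n)$ bound.
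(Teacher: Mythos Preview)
Your overall route is the paper's: translate notation, then appeal to Theorem~\ref{thm:hashchain_main} together with a union bound over the admissible chain lengths. The paper's proof is in fact just the two-line union bound $\sum_{j=t+1}^{L}\Pr[\cA\text{ outputs }\bar{x}_j]\le L\cdot O(t\sqrt{\pquery/2^n})=\negl(n)$, with no regime split at all.

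One genuine error in your sketch: the ``constant-depth wrapper'' that applies $\hash_{q^\star+1}^{-1},\ldots,\hash_q^{-1}$ is not constant depth. These inverse evaluations are inherently sequential (each input is the previous output), so the wrapper costs $q-q^\star$ additional query layers, and the resulting algorithm has depth $t+(q-q^\star)>t=\lfloor(q^\star-1)/2\rfloor$, which is outside the hypothesis of Theorem~\ref{thm:hashchain_main} at parameter $q^\star$. This detour is unnecessary anyway: as you also note, for $q\ge 2t+2$ one has $t\le\lfloor(q-2)/2\rfloor$, so the theorem applies directly at that $q$.

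Your worry about the intermediate regime $t<q\le 2t+1$ is well founded, and in fact the paper's proof glosses over exactly this point: it invokes Theorem~\ref{thm:hashchain_main} as though it bounded $\Pr[\cA\text{ outputs }\bar{x}_j]$ for every $j>t$, whereas the theorem as stated only controls $j\ge 2t+2$. The slack is harmless for the downstream application (Theorem~\ref{thm:lower_bound_oracle} needs only an $\Omega(t)$ query-depth lower bound, which absorbs the factor of two), so the cleanest repair is simply to state the corollary with $q>2t$ rather than $q>t$; your alternative of sharpening the $\sPi\to\sPitl$ simulation in Lemma~\ref{lem:pi2pitl} would also work but is more effort than the application requires.
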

\begin{proof}
    Let $\bar{x}_q \coloneqq f^{(q)}(0^n)$ for each $q\in [L]$.
    The probability that $\cA$ outputs $\bar{x}_q$ such that $t<q$ is given by $$\sum_{j=t+1}^{L}\Pr[\cA \textrm{ outputs }\bar{x}_j].$$
    By Theorem~\ref{thm:hashchain_main}, for any quantum algorithm $A$ that makes $t$ $\pquery$-parallel queries to $\hash_j$ and $\hash^{-1}_{j}$, the probability that $\cA$ outputs $x_j$ such that $j>t$ is $O(t\sqrt{\pquery/2^n})$. Hence, the probability $$\sum_{j=t+1}^{L}\Pr[\cA \textrm{ outputs }\bar{x}_j] = \poly(n)\cdot O\left(t\sqrt{\frac{\pquery}{2^n}}\right)$$ is negligible in $n$.
\end{proof}

Similar to \cite{berry2007efficient}, we use quantum walk on a graph to solve the underlying hard problem.
We construct a graph that consists of $L$ columns where there are $2^n$ vertices in each column.
Each vertex in the $j$-th column is labelled by $(j, x)$, where $j\in \{0,1,\dots L\}$ and $x\in \{ 0, 1\} ^{n}$.
The label is translated as follows: after $j$ queries, the output string is $x$.
The vertices in the $j$-th column are only adjacent to the vertices that are in the $(j\pm 1)$-th columns.
Furthermore, the vertices $(j, x)$ and $(j+1, x')$ (\resp $(j-1,x')$) are adjacent if and only if $x'=\hash_{j+1}(x)\ (\mathrm{resp.}\ \hash_{j}^{-1}(x))$.
Because each $\hash_{j}$ is a permutation, the graph consists of $2^n$ disconnected lines of length $L$.
If the vertices $(j, x)$ and $(0, x_0)$ are connected, it holds that $x = \hashstep^{(j)}(x_0)$.

In Figure~\ref{fig:hash_graph}, we presents a toy example: let $\hash_{j} :\{ 0, 1 \} ^{2} \rightarrow \{ 0, 1 \} ^{2}$ for each $j\in [L]$ and each $\hash_{j} = \hash$ has the same truth table, \ie
\begin{equation*}
  \hash(00) = (01),\ \hash(01)=(10),\ \hash(10) = (11),\  \mathrm{and}\ \hash(11) = (00).
\end{equation*}

\begin{figure}[ht]
  \begin{adjustbox}{max width = \textwidth}
\begin{tikzpicture}
  \filldraw [black] (0,3) circle (2pt) node[above]{$(0, 00)$};
  \filldraw [black] (0,2) circle (2pt) node[above]{$(0, 01)$};
  \filldraw [black] (0,1) circle (2pt) node[above]{$(0, 10)$};
  \filldraw [black] (0,0) circle (2pt) node[above]{$(0, 11)$};

  %=========
  \filldraw [black] (3,3) circle (2pt) node[above]{$(1, 00)$};
  \filldraw [black] (3,2) circle (2pt) node[above]{$(1, 01)$};
  \filldraw [black] (3,1) circle (2pt) node[above]{$(1, 10)$};
  \filldraw [black] (3,0) circle (2pt) node[above]{$(1, 11)$};

  \draw (0,3) -- (3,2);
  \draw (0,2) -- (3,1);
  \draw (0,1) -- (3,0);
  \draw (0,0) -- (3,3);
  %==========
  \filldraw [black] (6,3) circle (2pt) node[above]{$(2, 00)$};
  \filldraw [black] (6,2) circle (2pt) node[above]{$(2, 01)$};
  \filldraw [black] (6,1) circle (2pt) node[above]{$(2, 10)$};
  \filldraw [black] (6,0) circle (2pt) node[above]{$(2, 11)$};

  \draw (3,3) -- (6,2);
  \draw (3,2) -- (6,1);
  \draw (3,1) -- (6,0);
  \draw (3,0) -- (6,3);
  %==========
  \draw (6,3) -- (6.3,2.9);
  \draw (6,2) -- (6.3,1.9);
  \draw (6,1) -- (6.3,0.9);
  \draw (6,0) -- (6.3,0.1);
  \filldraw [black] (8.2, 1.5) circle (1pt);
  \filldraw [black] (8.5, 1.5) circle (1pt);
  \filldraw [black] (8.8, 1.5) circle (1pt);
  \draw (10.7, 2.1) -- (11,2);
  \draw (10.7, 1.1) -- (11,1);
  \draw (10.7, 0.1) -- (11,0);
  \draw (10.7, 2.9) -- (11,3);
%   %==========
  \filldraw [black] (11,3) circle (2pt) node[above]{$(L-1, 00)$};
  \filldraw [black] (11,2) circle (2pt) node[above]{$(L-1, 01)$};
  \filldraw [black] (11,1) circle (2pt) node[above]{$(L-1, 10)$};
  \filldraw [black] (11,0) circle (2pt) node[above]{$(L-1, 11)$};
%   %==========
  \filldraw [black] (14,3) circle (2pt) node[above]{$(L, 00)$};
  \filldraw [black] (14,2) circle (2pt) node[above]{$(L, 01)$};
  \filldraw [black] (14,1) circle (2pt) node[above]{$(L, 10)$};
  \filldraw [black] (14,0) circle (2pt) node[above]{$(L, 11)$};

   \draw (11,3) -- (14,2);
   \draw (11,2) -- (14,1);
   \draw (11,1) -- (14,0);
   \draw (11,0) -- (14,3);

  \draw [decorate, decoration = {brace, amplitude=5pt},  very thick] (14,-0.2) --  (0,-0.2);
  \node at (7, -0.7) {$L+1$};
\end{tikzpicture}
\end{adjustbox}
  \caption{Using quantum walk to solve the permutation chain problem: a toy example.}
  \label{fig:hash_graph}
\end{figure}
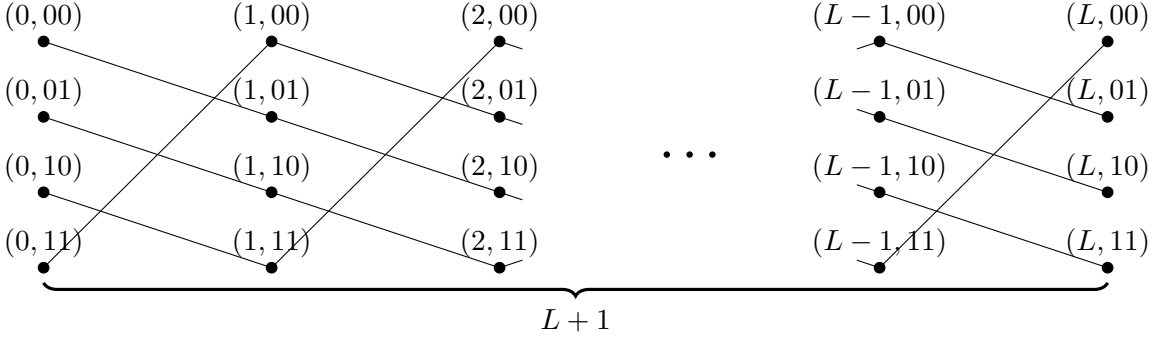

We let the Hamiltonian $H$ that determines the behavior of the quantum walk be the adjacency matrix of the graph.\footnote{Our Hamiltonian is different from that appears in \cite{berry2007efficient}, in which the graph is weighted.}
That is,
\begin{align}
    \begin{split}
    \label{eq:h_oracle_reduction}
    H &= \sum_{j=0}^{L-1}\sum_{x\in\{0,1\}^{n}} \ket{j+1, \hash_{j}(x)}\bra{j,x} + \ket{j, x}\bra{j+1,\hash_{j}(x)}\\
    &= \sum_{x_{0}\in\{0,1\}^{n}} \sum_{j=0}^{L-1} \ket{j+1, \hashstep^{(j+1)}(x_0)}\bra{j,\hashstep^{(j)}(x_0)} + \ket{j, \hashstep^{(j)}(x_0)}\bra{j+1,\hashstep^{(j+1)}(x_0)}.
    \end{split}
\end{align}
Because two vertices on different lines are decoupled, we have the following observation.
\begin{observation}
\label{obs:h_subspace}
If the random walk starts at the vertex $(0, x_0)$, then it always walks on the same line.
To be more precise, if a system evolves under the Hamiltonian $H$ described in~\eqref{eq:h_oracle_reduction} and the initial state is $\ket{0, x_0}$, then at any time $t$, the quantum state of the system is in the subspace $\spn\left(\left\{\ket{j, \hashstep^{(j)}(x_0)} \right\}_{j=0}^{L}\right)$.
\end{observation}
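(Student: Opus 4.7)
The plan is to exhibit $V_{x_0} := \spn\{\ket{j, \hashstep^{(j)}(x_0)}\}_{j=0}^L$ as an invariant subspace of $H$. Once this is established, the claim follows immediately: since $H V_{x_0} \subseteq V_{x_0}$, the evolution operator $e^{-iHt}$ also preserves $V_{x_0}$, and the initial state $\ket{0, x_0} = \ket{0, \hashstep^{(0)}(x_0)}$ lies in $V_{x_0}$.

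To establish invariance, I would directly exploit the second form of the Hamiltonian in \eqref{eq:h_oracle_reduction}. Define
\[
H_{x_0} := \sum_{j=0}^{L-1} \ket{j+1, \hashstep^{(j+1)}(x_0)}\bra{j, \hashstep^{(j)}(x_0)} + \ket{j, \hashstep^{(j)}(x_0)}\bra{j+1, \hashstep^{(j+1)}(x_0)},
\]
so that $H = \sum_{x_0 \in \{0,1\}^n} H_{x_0}$. By construction, $H_{x_0}$ is supported entirely on $V_{x_0}$. It therefore suffices to show that $H_{x_0'}\ket{\varphi} = 0$ for every $x_0' \neq x_0$ and every $\ket{\varphi} \in V_{x_0}$, which reduces to showing that the subspaces $\{V_{x_0}\}_{x_0}$ are mutually orthogonal.

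The orthogonality is where the permutation structure enters. Since each $\hash_j$ is a bijection, the composition $\hashstep^{(j)}$ is also a bijection for every $j$. Hence if $x_0 \neq x_0'$ then $\hashstep^{(j)}(x_0) \neq \hashstep^{(j)}(x_0')$ for all $j$, so the label sets $\{(j, \hashstep^{(j)}(x_0))\}_{j=0}^L$ and $\{(j, \hashstep^{(j)}(x_0'))\}_{j=0}^L$ are disjoint and the corresponding basis vectors are orthogonal. A quick dimension count ($2^n$ disjoint sets of size $L+1$ summing to the full dimension $(L+1)\cdot 2^n$) confirms that $\mathbb{C}^{2^n(L+1)} = \bigoplus_{x_0 \in \{0,1\}^n} V_{x_0}$, giving a direct-sum decomposition $H = \bigoplus_{x_0} H_{x_0}$ with $H_{x_0}$ acting on $V_{x_0}$.

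There is no real obstacle here; the observation is essentially a restatement of the fact that a permutation has no merging orbits, packaged on the level of the adjacency matrix. The only point requiring a little care is the bookkeeping of indices between the two equivalent forms of $H$ in \eqref{eq:h_oracle_reduction}, but once one fixes the convention $\hashstep^{(j+1)}(x_0) = \hash_j(\hashstep^{(j)}(x_0))$ used in the second sum, the rearrangement into the per-line blocks $H_{x_0}$ is immediate.
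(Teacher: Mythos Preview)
Your proposal is correct and follows essentially the same approach as the paper: the paper's entire justification is the one-line remark that the observation ``can be verified by taking the Taylor expansion of the time evolution operator $e^{-iHt} = \sum_{k=0}^\infty (-iHt)^k/k!$,'' which is exactly the mechanism by which your $H$-invariance of $V_{x_0}$ propagates to $e^{-iHt}$. Your write-up is simply a more explicit version, spelling out the block decomposition $H = \bigoplus_{x_0} H_{x_0}$ via the bijectivity of the $\hash_j$'s, which the paper leaves implicit.
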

Observation~\ref{obs:h_subspace} can be verified by taking the Taylor expansion of the time evolution operator: $e^{-iHt} = \sum_{k = 0}^\infty (-iHt)^k/k!$.

To solve the permutation chain problem, we use a Hamiltonian simulation algorithm to simulate the quantum walk under the Hamiltonian $H$ with initial state $\ket{0, 0^n}$.
When we measure the system at time $t$ and get the outcome $(q, x)$.
The string $x$ is a potential solution to the permutation chain problem.
We aim to prove the following two statements.
First, the oracles $\OracleH$ and $\OracleSparse$ can be simulated efficiently by $\hash_{j}$ and $\hash_{j}^{-1}$.
Second, the probability of getting a measurement outcome $(q, x)$ at time $t$ such that $q\ge t$ is high.
Combining these two statements, we have the following conclusion. If an algorithm can simulate $H$ for an evolution time $t$ with $o(t)$ queries, then we can solve the permutation chain problem with $o(t)$ queries as well. However, this violates the hardness of the permutation chain problem.

Now we are ready to present the formal proof of Theorem~\ref{thm:lower_bound_oracle}.

\begin{proof}[Proof of Theorem~\ref{thm:lower_bound_oracle}]
We construct a time-independent Hamiltonian $H$ acting on a $2^{n}(L+1)$-dimensional Hilbert space where $L+1=f(n)$.
The basis vector of the $2^{n}(L+1)$-dimensional Hilbert space is denoted by $\ket{j, x}$ where  $j\in \{0,1,\dots, L\}$ and $x\in\{ 0, 1 \} ^{n}$.
The element of $H$ is defined as follows.
\begin{equation}
\bra{j',x'}H\ket{j,x} = \left\{
\begin{array}{ll}
      1, &\mathrm{if }j' = j+1\ \mathrm{and}\ x' = \hash_{j+1}(x)\\
      1, &\mathrm{if }j' = j-1\ \mathrm{and}\ x' = \hash_{j}^{-1}(x)\\
      0, &\mathrm{otherwise}.
\end{array}
\right.
\end{equation}
Notice that the Hamiltonian $H$ is $2$-sparse and the absolute value of every matrix element is at most $1$.

We are going to show the following. Suppose $\Hsimalg$ can simulate $H$ for an evolution time $t\in[0, (L+1)/2]$ within an error $\epsilon <1/4$ by making $o(t)$ $\pquery$-parallel queries to $\OracleH$ and $\OracleSparse$. Then we can construct a reduction $\reduction$ that makes $o(t)$ $\pquery$-parallel queries to $\hash_{j}$ and $\hash_{j}^{-1}$ and outputs a pair $(x_0, x_{\querynum})$ such that $x_{\querynum} = \hashstep^{(\querynum)}(x_0)$ and $ q> t$ with constant probability.
  The reduction $\reduction$ is described as follows:
  \begin{enumerate}
    \item Run the Hamiltonian simulation algorithm $\Hsimalg$ on inputs the Hamiltonian $H$, the evolution time $t\in [0, L/2]$ and the initial state $\ket{0, 0^{n}}$.

    \begin{itemize}
        \item When $\Hsimalg$ queries $\OracleH$ on the index $((j,x), (j',x'))$, the reduction $\reduction$ returns the response by the following rules.
    \begin{itemize}
      \item If $j' = j+1$ and $x'= \hash_{j+1}(x)$, then $\reduction$ returns $1$.
      \item If $j' = j-1$ and $x' = \hash_{j}^{-1}(x)$, then $\reduction$ returns $1$.
      \item Otherwise, $\reduction$ returns $0$.
    \end{itemize}

    \item When $\Hsimalg$ queries $\OracleSparse$ on $((j,x), s)$, reduction $\reduction$ returns the response by the following rules.
      \begin{itemize}
        \item If $j=0$ and $s=1$, then $\reduction$ returns $(1,\hash_{1}(x))$.
        \item If $j=L$ and $s=1$, then $\reduction$ returns $(L-1,\hash_{L}^{-1}(x))$
        \item If  $j\neq 0, L$ and $s=1$, then $\reduction$ returns $(j-1, \hash_{j}^{-1}(x))$.
        \item If $j\neq 0, L$ and $s=2$, then  $\reduction$ returns $(j+1, \hash_{j+1}(x))$.
      \end{itemize}
    \end{itemize}

      \item Measure the system in the $\{\ket{j,x}\}$ basis and obtain the outcome $(q, x_{q})$.
      \item Output $x_{q}$.
  \end{enumerate}

  Note that answering a query to the entry oracle $\OracleH$ can be implemented by $O(1)$ queries to $\hash_{j}$ and $\hash_{j}^{-1}$. Similarly, the sparse structure oracle $\OracleSparse$ can be simulated by $O(1)$ queries to $\hash_{j}$ and $\hash_{j}^{-1}$ as well.

  Next, we analyze the evolution under $H$.
  Let us define another Hamiltonian $H\vert_{0}$ restricted to the subspace $\spn\left(\left\{\ket{j, \hashstep^{(j)}(0)}\right\}_{j=0}^{L}\right)$:
  \begin{equation*}
  \label{eq:h_o_subspace}
      H\vert_{0} = \sum_{j=0}^{L-1} \ket{j+1, \hashstep^{(j+1)}(0)}\bra{j,\hashstep^{(j)}(0)} + \ket{j,\hashstep^{(j)}(0)}\bra{j+1, \hashstep^{(j+1)}(0)}.
  \end{equation*}
  By Observation~\ref{obs:h_subspace}, the time evolution under $H\vert_{0}$ is equivalent to the time evolution under $H$ with the initial state $\ket{0, 0^{n}}$.

  We first consider a perfect Hamiltonian simulation algorithm $\wt{\Hsimalg}$ that outputs the state $\ket{\wt{\psi}} \coloneqq e^{-iHt}\ket{0, 0^{n}} = e^{-iH\vert_{0}t}\ket{0, 0^{n}}$.  
  In Step~2, the measurement outcome $(q,x_q)$ satisfies $x_q = \hashstep^{(q)}(0^{n})$.
  Then by Lemma~\ref{lem:simple_h}, the probability that the measurement outcome satisfies $q > t$ is at least $1/3$.

  Next, we consider the general simulation algorithm that outputs a state $\ket{\psi}$ such that $\trdist(\ket{\psi}\bra{\psi},$ $ \ket{\widetilde{\psi}}\bra{\widetilde{\psi}}) \leq 1/4$. 
  By the property of the trace distance, the difference in probabilities that $\cR$ outputs a correct outcome by measuring $\ket{\psi}$ and $\ket{\wt{\psi}}$ is at most $1/4$.
  As a result, $\reduction$ outputs the accepted string $x_q$ with probability at least $1/3 - 1/4 = 1/12$.

  Combining everything together, if $\Hsimalg$ simulates $H$ for time $t$ within $\epsilon \leq 1/4$ by making $o(t)$ $\pquery$-parallel queries, then $\reduction$ will output $x_{q} = \hashstep^{(q)}(0^n)$ such that $q>t$ with constant probability by making $o(t)$ $\pquery$-parallel queries. This contradicts Corollary~\ref{thm:hardness_hash_chain}.
\end{proof}

\section{No Fast-forwarding in Plain Model}
\label{sec:plain}
In this section, we are going to investigate the parallel lower bound of Hamiltonian simulation in the plain model.
In the plain model, we are interested in the Hamiltonians that have a succinct description.
Typically, we consider the \emph{local Hamiltonians}.

\begin{definition}[Local Hamiltonian]
    \label{def:local_h}
    We say a Hamiltonian $H$ that acts on $n$ qubits is \emph{$k$-local} if $H$ can be written as
    \begin{equation*}
        \label{eq:def_local_h}
        H = \sum_{j} H_{j},
    \end{equation*}
    where each $H_j$ acts non-trivially on at most $k$ qubits.
\end{definition}

The \emph{geometrically local} Hamiltonians are another kind of Hamiltonians that often appear in physics models.
A geometrically local Hamiltonian is a local Hamiltonian with more constraints.
For a geometrically local Hamiltonian written by $H =\sum_{j}H_{j}$, each term $H_{j}$ acts non-trivially on the qubits that are near in space.
We are especially interested in one-dimensional geometrically local Hamiltonians.

\begin{definition}[One-dimension geometrical local Hamiltonians]
  \label{def:geometrical_local_h}
  Let a system consist of $n$ qubits that are aligned in space and each qubit is labeled by an integer $l\in [n]$.
  Let $H = \sum_j H_{j}$ be a $k$-local Hamiltonian that acts on $n$ qubits.
  We say $H$ is an \emph{one-dimension geometrically local} Hamiltonian if each $H_{j}$ acts non-trivially on at most $k$ consecutive indices.
\end{definition}

For example, consider Hamiltonians $H_1$ and $H_2$ acting on four qubits defined as follows:
\begin{equation*}
  H_1
  \coloneqq \sigma^{X} \otimes  \sigma^{Z} \otimes I \otimes I
  + \sigma^{Z} \otimes I \otimes I \otimes \sigma^{Z}
  + I \otimes I \otimes \sigma^{X} \otimes I,
\end{equation*}
and
\begin{equation*}
  H_2
  \coloneqq \sigma^{Z} \otimes \sigma^{Z} \otimes I \otimes I
  + I \otimes \sigma^{Z} \otimes \sigma^{Z} \otimes I
  + I \otimes I \otimes \sigma^{Z} \otimes \sigma^{Z},
\end{equation*}
where $\sigma^{X}$ and $\sigma^{Z}$ are Pauli operators and $I$ is the identity operator.
Hamiltonian $H_1$ is $2$-local but not geometrically local, but Hamiltonian $H_2$ is geometrically local.
We normalize the Hamiltonian by setting the spectral norm $\|H_j\|= O(1)$ for each $j$.

Having a succinct description gives the simulation algorithm more power than in the oracle model.
In this sense, we obtain a stronger lower bound.
On the other hand, our lower bound in the plain model relies on computational assumptions, which weakens the result. For our lower bound, we need to assume an iterative parallel-hard function, which is slightly modified from the definition of an iterative sequential function by Boneh et al. \cite{Boneh18VDP}.

\begin{definition}[Iterative parallel-hard functions/puzzles]\label{def:p-hard_funcs}
  A function $f\colon \mathbb{N}\times\hat{X}\to X$ where $\hat{X}\in X$ and $|\hat{X}| = 2^{\theta(\lambda)}$ is a (post-quantum) $(s,d)$-iterated parallel-hard function if there exists a function $g\colon X\to X$ such that
  \begin{itemize}
      \item $g$ can be computed by a quantum circuit with width $\lambda$ and size $s(\lambda)$.  Without loss of generality, we can let $s(\lambda)=\Omega(\lambda)$
      \item $f(k,x) = g^{(k)}(x)$.
      \item For all sufficiently large $k=2^{o(\lambda)}$, for any quantum circuit $C$ with depth less than $d(k)$ and size less than $\poly(t,d(k),\lambda)$, \begin{equation*}
          \Pr[C(x)=f(k,x)\mid x\leftarrow \hat{X}]\leq \negl(\lambda)
      \end{equation*}
      Without loss of generality, we assume that $d$ is non-decreasing.
  \end{itemize}

  We say that $f$ forms a (post-quantum) $(s,d)$-iterated parallel-hard puzzle if it only satisfies a weaker version of the third requirement as follows:
  \begin{itemize}
      \item For all $k=2^{o(\lambda)}$, for any uniform quantum circuit $C$ with depth less than $d(k)$ and size less than $\poly(t,d(k),\lambda)$, \begin{equation*}
          \Pr[C(x)=f(k',x) \text{ for some } k'\geq k/2 \mid x\leftarrow \hat{X}]\leq \negl(\lambda).
      \end{equation*}
  \end{itemize}
  Note that an $(s,d)$-iterated parallel-hard function is directly an $(s,d')$-iterated parallel-hard puzzle, where $d'(x):=d(x/2)$.
\end{definition}

Under the (quantum) random oracle heuristic~\cite{BR93,BDFLSZ11}, such parallel-hard puzzles can be heuristically obtained by instantiating the twisted hash chain with a cryptographic hash function.

\begin{assumption}
\label{thm:puzzle}
  With the random oracle heuristic, we can assume that the standard instantiation of the twisted hash chain is parallel-hard by Corollary~\ref{cor:twisted_hash_chain_puzzle}. Assuming the cryptographic hash function $h$ in the instantiation can be implemented by circuits of size $s(\lambda)$ on $\lambda$-bit inputs, the twisted hash chain directly gives an $(s+O(\lambda), d)$ iterative parallel-hard function with $d(x):=x-1$, which is an $(s+O(\lambda), d)$ iterative parallel-hard puzzle with $d(x):=\floor{\frac{x}{2}}-1$
\end{assumption}

We present the simulation lower bound for the local Hamiltonians in the following theorem.

\begin{theorem}[Simulation lower bound for local Hamiltonians in the plain model]
  \label{thm:lower_bound_plain}
  Assuming an $(s,d)$-iterated parallel-hard puzzle, for any integer $n$, there exists a time-independent $c$-local Hamiltonian $H$ acting on $n+(2s(n)T(n))^{1/c}$ qubits such simulating $H$ for an evolution time $t\in[0, s(n)T(n)]$ with error $\epsilon < 1/4$ needs a $\left(d(\floor{t/2s(n)})-O(s(n))\right)$-depth circuit, where $T(\cdot)$ is an arbitrary polynomial and $c$ is a constant.
\end{theorem}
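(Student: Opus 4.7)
The plan is to compile an iterative parallel-hard computation into a time-independent Hamiltonian whose Schrödinger evolution carries it out step-by-step, using a Johnson-graph clock register chosen so that the total qubit count scales as $(2s(n)T(n))^{1/c}$ while the Hamiltonian stays $c$-local. Given the $(s,d)$-iterated parallel-hard puzzle $f(k,x)=g^{(k)}(x)$, I will instantiate $g$ as a quantum circuit $C_g$ of size $s(n)$ and width $n$, then concatenate $T(n)$ copies of $C_g$ into a single circuit $C=U_L\cdots U_1$ of $L:=s(n)T(n)$ one- or two-qubit gates, so that applying the first $k\cdot s(n)$ gates of $C$ to $\ket{x}$ yields $\ket{g^{(k)}(x)}$.

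The second step compiles $C$ into a local Hamiltonian following a Nagaj-style propagation construction~\cite{nagaj2010fast} with a compact clock. The clock tracks a position on a fixed Hamiltonian path $v_0,v_1,\dots,v_L$ through a Johnson graph $J_{m,k_0}$ with $m=\lceil(2L)^{1/c}\rceil$ and $k_0$ chosen so that $\binom{m}{k_0}\geq L+1$ (such a path exists by Definition~\ref{def:Johnson_graph}); each $v_j$ is encoded in unary by which $k_0$-subset of $[m]$ is ``on.'' By Definition~\ref{def:Johnson_graph}, adjacent $v_{j-1},v_j$ differ in exactly one element of their union, so every clock transition acts on only $2$ clock qubits. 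Adding a positive penalty term $H_{\mathrm{pen}}$ that suppresses all non-legal clock configurations yields the $c$-local Hamiltonian
\[
H \;=\; \sum_{j=1}^{L}\bigl(U_j\otimes\ket{v_j}\!\bra{v_{j-1}} + U_j^{\dagger}\otimes\ket{v_{j-1}}\!\bra{v_j}\bigr) \;+\; H_{\mathrm{pen}}
\]
on $n+m$ qubits. Restricted to the legal subspace spanned by $\{U_j\cdots U_1\ket{x}\otimes\ket{v_j}\}_{j=0}^{L}$, $H$ reduces to the nearest-neighbour line Hamiltonian of Section~\ref{sec:Q_walk}, so Lemma~\ref{lem:simple_h} applies: starting from $\ket{x}\otimes\ket{v_0}$ and evolving for time $t\in[0,L/2]$, a computational-basis measurement of the clock returns a position $j^\star>t$ with probability at least $1/3$, and conditioned on this outcome the data register equals $U_{j^\star}\cdots U_1\ket{x}$.

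The third step is the reduction. Suppose there is a circuit $A$ of depth $D$ that produces a state within trace distance $1/4$ of $e^{-iHt}\ket{x}\otimes\ket{v_0}$. On a uniformly random $x\leftarrow\hat X$, I will run $A$, measure the clock to obtain $j^\star$, then apply the remaining $s(n)-(j^\star\bmod s(n))$ gates of the current $g$-block in additional depth $O(s(n))$, thereby obtaining $\ket{g^{(k'')}(x)}$ for some $k''\geq\lfloor t/s(n)\rfloor$; the overall success probability is at least $1/3-1/4=\Omega(1)$. Setting $k:=\lfloor t/(2s(n))\rfloor$ gives $k''\geq k/2$, so the $(s,d)$-puzzle assumption forces $D+O(s(n))\geq d(k)$, i.e.\ $D\geq d(\lfloor t/(2s(n))\rfloor)-O(s(n))$, which is the claimed bound.

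The hardest step will be the clock design: three properties must hold simultaneously, namely that the clock fits in $\lceil(2L)^{1/c}\rceil$ qubits, that every nontrivial term of $H$ (including the penalty) acts on at most $c$ qubits, and that the induced dynamics on the legal subspace is \emph{exactly} the uniform nearest-neighbour hopping analyzed in Section~\ref{sec:Q_walk} rather than a perturbed version whose Bessel-function bound underlying Lemma~\ref{lem:simple_h} would degrade. The subtlety is controlling amplitude leakage from the legal subspace into penalty-excited states over the full evolution time $t$ up to $s(n)T(n)$; this will likely require a large penalty gap relative to the hopping couplings and a verification that the effective Hamiltonian on the legal subspace remains the ideal line Hamiltonian to the required precision.
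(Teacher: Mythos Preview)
Your overall reduction outline (build $C$ from $T(n)$ copies of $C^g$; compile to a propagation Hamiltonian; measure the clock, complete the current $g$-block, invoke the puzzle hardness) matches the paper. The gap is in the clock design.

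The claim that ``every clock transition acts on only $2$ clock qubits'' is the source of the problem. An operator $|1\rangle\langle 0|_a\otimes|0\rangle\langle 1|_b$ supported on just the two qubits of the symmetric difference $v_j\triangle v_{j-1}=\{a,b\}$ does \emph{not} implement $|v_j\rangle\langle v_{j-1}|$: applied to any legal clock state $|v_{j'}\rangle$ with $a\notin v_{j'}$ and $b\in v_{j'}$, it produces the $k_0$-weight state $|(v_{j'}\setminus\{b\})\cup\{a\}\rangle$, which is some other vertex of the Johnson graph. Hence on the legal subspace the resulting Hamiltonian is not the nearest-neighbour line of Section~\ref{sec:Q_walk} but a highly connected graph, and Lemma~\ref{lem:simple_h} does not apply. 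Your penalty $H_{\mathrm{pen}}$ cannot repair this: the $2$-local transitions already preserve Hamming weight, so there is no leakage into non-$k_0$-weight states to penalize, and no local penalty can distinguish the different legal states $|v_j\rangle$ from one another. The ``amplitude leakage'' you worry about at the end is thus not the real obstruction; the obstruction is cross-talk \emph{within} the legal subspace.

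The paper's construction (Lemma~\ref{construct:clock_state}) sidesteps all of this without any penalty term. The point is to fix the subset size in the Johnson graph to be the constant $c-1$ and let the number of clock qubits grow. Then $|S_{j-1}\cup S_j|=c$, and the transition operator $E_{j-1\to j}$ is the tensor of $|1\rangle\langle 1|$ on the $c-2$ qubits in $S_{j-1}\cap S_j$, the two flip operators on the symmetric difference, and identity elsewhere. Those intersection projectors are exactly what force $E_{j-1\to j}|S_{j'}\rangle=0$ for every $j'\neq j-1$: any other $(c-1)$-subset $S_{j'}$ must miss some element of $S_{j-1}$, and the corresponding $\langle 1|$ annihilates it. Consequently the legal subspace is exactly invariant, the restricted dynamics is exactly the uniform line walk, and Lemma~\ref{lem:c2H_lemma} applies verbatim with no leakage analysis and no penalty gap to control. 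The price is that with $(c-1)$-element subsets the clock needs $\Theta(L^{1/(c-1)})$ qubits and the final Hamiltonian becomes $(c+2)$-local, a constant the paper then absorbs into the locality parameter.
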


We also have the lower bound for simulating geometrically local time-dependent Hamiltonians.

\begin{theorem}[Simulation lower bound for geometrically local Hamiltonians in the plain model]
    \label{thm:lower_bound_dep}
    Assuming an $(s,d)$-iterated parallel-hard function, for any integer $n$, there exists a piecewise-time-independent 1-D geometrically 2-local Hamiltonian $H$ acting on $n$ qubits such that simulating $H$ for an evolution time $t\in[0, ns(n)T(n)]$ with error $\epsilon(n)\leq 1-1/\poly(n)$ needs a\\ $\left(d(\floor{\frac{t}{ns(n)}})-O(ns(n))-\poly(\log(n),\log\log(1/\epsilon'(n))\right)$-depth circuit, where $T(\cdot)$ is an arbitrary polynomial and  $\epsilon'(n)<1-\epsilon(n)-1/\poly(n)$.
\end{theorem}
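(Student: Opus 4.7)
The plan is to combine the iterated parallel-hard function with the piecewise-time-independent circuit-to-Hamiltonian compilation of Haah et al., while forcing geometric 2-locality by routing qubits via SWAPs. Starting from an $(s,d)$-iterated parallel-hard function $f(k,x) = g^{(k)}(x)$, where $g$ is computable by a quantum circuit $C_g$ of width $n$ and size $s(n)$, the goal is to construct a Hamiltonian whose evolution to time roughly $j \cdot n \cdot s(n)$ implements $g^{(j)}$ on a designated input register, so that fast-forwarding the simulation would violate parallel hardness.

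First, I would rewrite $C_g$ in geometrically 2-local form $\tilde{C}_g$ on an $n$-qubit line by surrounding each non-local $2$-qubit gate with a chain of SWAPs that brings the two operands adjacent. Each original gate contributes $O(n)$ local gates, so $\tilde{C}_g$ has size $O(n \cdot s(n))$; chaining $k$ copies gives $\tilde{C}_g^{(k)}$ of size $O(k \cdot n \cdot s(n))$ still computing $g^{(k)}$ on the same $n$ qubits. Next, I would compile $\tilde{C}_g^{(k)}$ into a piecewise-time-independent geometrically 2-local Hamiltonian $H(t)$ on these $n$ qubits: for each local gate $U_i$ in the sequence, assign a time slice of length $\Theta(1)$ during which $H(t)$ equals a constant geometrically 2-local Hamiltonian $H_{U_i}$ whose evolution over the slice equals $U_i$ (identity elsewhere). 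The total evolution time to implement $\tilde{C}_g^{(k)}$ is $\Theta(k \cdot n \cdot s(n))$, and at every time point $t_j = \Theta(j \cdot n \cdot s(n))$, the ideal state on input $\ket{x}$ equals (up to ancilla and routing bookkeeping) $\ket{g^{(j)}(x)}$ on the output register.

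The depth lower bound now follows by contradiction. Suppose a quantum circuit $\cA$ of depth $D$ simulates $H$ for time $t \in [0, ns(n)T(n)]$ within trace distance $\epsilon(n) \leq 1 - 1/\poly(n)$. Setting $j := \floor{t / (n s(n))}$, I would round $t$ down to the nearest gate-boundary $t_j$; the extra evolution from $t_j$ to $t$ can be handled either by evolving the residual interval explicitly using the Haah et al.\ simulation algorithm (cost $\polylog(n, 1/\epsilon')$ in depth for $O(ns(n))$ extra gate-slices) or absorbed as additive overhead $O(ns(n))$ in the reduction. Measuring the output register of $\cA$'s state then yields $f(j,x) = g^{(j)}(x)$ with probability at least $1 - \epsilon(n) - \negl(n) \geq \epsilon'(n) + 1/\poly(n)$, which is non-negligible. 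This gives a uniform quantum circuit of depth $D + O(ns(n)) + \poly(\log n, \log\log(1/\epsilon'(n)))$ computing $f(j, \cdot)$ on a uniformly random input with non-negligible probability, so the parallel hardness of $f$ forces $D \geq d(j) - O(ns(n)) - \poly(\log n, \log\log(1/\epsilon'(n)))$.

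The main obstacle will be the uniform handling of arbitrary $t$ rather than only gate-boundary times: one must argue that rounding $t$ down to $t_j$ contributes only additive $O(ns(n))$ to the depth and $\polylog$ overhead to the error correction, even when the algorithm $\cA$ is tailored adversarially to $t$. A secondary subtlety is calibrating the local terms $H_{U_i}$ so that $H(t)$ has bounded spectral norm on each local term (needed for the reduction to yield a properly normalized geometrically 2-local Hamiltonian), and ensuring the routing SWAPs interact cleanly with the piecewise-time-independent compilation so that no non-local terms are introduced across time slice boundaries.
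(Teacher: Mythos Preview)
Your proposal is correct and follows essentially the same route as the paper: geometrize the iterated circuit via SWAP routing (the paper's Lemma~\ref{lem:swap}), compile it to a piecewise-constant geometrically $2$-local Hamiltonian one gate per time slice (the paper's Theorem~\ref{thm:dep_trans}), and in the reduction absorb the fractional slice $\lceil t\rceil - t$ using an off-the-shelf simulator (the paper invokes Zhang et al., Theorem~\ref{thm:existing_Hsim}, rather than Haah et al.) and then apply the remaining $O(ns(n))$ explicit gates to reach the next multiple of $ns(n)$. The only cosmetic difference is that the paper rounds $t$ \emph{up} to $\lceil t\rceil$ and then to $m\cdot ns(n)$ with $m=\lceil t/(ns(n))\rceil$, whereas you round down; both work, and your anticipated obstacle about handling arbitrary $t$ is exactly the step the paper spends its effort on.
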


We sketch our proof strategy as follows.
The main idea is, again, to reduce the hard problem to the Hamiltonian simulation problem.
First, we consider a quantum circuit $C$ that computes an $(s,d)$-iterated parallel-hard puzzle, which according to the definition, can be written as a sequential composition of $\lambda$-qubit $s(\lambda)$-sized circuits.
Then we construct a Hamiltonian $H_{\mathrm{circuit}}$ to implement the circuit $C$ by the circuit to Hamiltonian reduction technique.
The circuit to time-independent reduction is introduced in Section~\ref{sec:c2h}, and the circuit to time-dependent reduction is introduced in Section~\ref{sec:c2timedep}.
Finally, we use the Hamiltonian simulation algorithm to simulate the Hamiltonian $H_{\mathrm{circuit}}$.
If we can fast-forward the Hamiltonian evolution under $H_{\mathrm{circuit}}$, then we can break the depth guarantee provided by the iterated parallel-hard puzzle.

\subsection{Circuit to time-independent Hamiltonian}
\label{sec:c2h}
Feynman suggested that we can implement a quantum circuit (which was called reversible computation at his time) by a time-independent Hamiltonian \cite{Fey85}.
About a decade later, Childs and Nagaj provided rigorous analyses for the implementation \cite{Chi04, nagaj2010fast}.
The idea is to introduce an extra register, which is called the clock register $\cH_{\mathrm{clock}}$, associated with the circuit register $\cH_{\mathrm{circuit}}$ to record the progress of the quantum circuit.
After introducing the clock register, we define a state $\ket{\psi_{j}} = \ket{\phi_j}_{\mathrm{circuit}}\otimes \ket{\gamma_{j}}_{\mathrm{clock}}$ to indicate that the $j$-th steps outcomes is $\ket{\phi_j}$.
We can construct a Hamiltonian acting on $\cH_{\mathrm{circuit}}\otimes \cH_{\mathrm{clock}}$ such that during the evolution, the system is in  $\spn\left(\{\ket{\psi_{j}}\}\right)$.
When we measure the clock register $\cH_{\mathrm{clock}}$ and get the outcome $\gamma_k$, the quantum state in the circuit register $\cH_{\mathrm{circuit}}$ collapses to $\ket{\phi_{k}}$.
And then we obtain the $k$-th step outcome of the quantum circuit.
Similar techniques appear in the proof of QMA completeness \cite{KSV02} and universality of adiabatic computation \cite{AvK+08}.

In \cite{nagaj2010fast}, Nagaj proved that for any quantum circuit $C$ with $L$ quantum gates, there is a Hamiltonian $H_{\mathrm{circuit}}$ such that evolving the system under the Hamiltonian $H_{\mathrm{circuit}}$ for time $O(L)$ and then measuring the system, we can get the final state of the quantum circuit $C$ with high probability.
We extend Nagaj's result.
In this section, we are going to prove that if the system evolves under $H_{\mathrm{circuit}}$ for time $t\in [0, L/2]$, we can get $\ket{\psi_{j}}$ where $j\geq t$ with high probability.
Our method is slightly different from Nagai's.
In \cite{nagaj2010fast}, the evolution time is uniformly sampled,
while we have an explicit evolution time.
Another difference is that in \cite{nagaj2010fast} it needs to pad $O(L)$ dummy identity gates at the end of the quantum circuit to amplify the probability of getting the output state.
In our construction, padding is not required.

Let a quantum circuit $C$ which acts on the register $\cH_{\mathrm{circuit}}$
consist of a sequence of $g$ quantum gates $U_1, U_2, \dots, U_L$.
Namely,
\begin{equation*}
  \label{eq:circuit}
	C = U_LU_{L-1}\cdots U_1.
\end{equation*}
After introducing the clock register $\cH_{\mathrm{clock}}$, and the clock state, which is a family of orthonormal states $\left\{\ket{\gamma_{j}}\right\}_{j = 0}^{L}$
where each $\ket{\gamma_{j}} \in \cH_{\mathrm{clock}}$,
we construct the following Hamiltonian
\begin{equation}
	\label{eq:h_circuit}
	H_{\mathrm{circuit}} := \sum_{j=1}^{L} H_j,
\end{equation}
where
\begin{equation}
	\label{eq:hj}
	H_j := U_j \otimes \ket{j}\bra{j-1} + U_j^{\dagger} \otimes \ket{j-1}\bra{j}.
\end{equation}
Let the input state of the circuit be $\ket{\phi_0^{(0)}}$,
and let $\ket{\phi_j^{(0)}}$ denote the quantum state of $j$-th step.
That is, $\ket{\phi_j^{(0)}} = U_jU_{j-1}\cdots U_1\ket{\phi_0^{(0)}}$.
Define the state
\begin{equation}
  \label{eq:histrical_state}
  \ket{\psi_j^{(0)}} := \ket{\phi_j^{(0)}}\otimes \ket{\gamma_j},
\end{equation}
which is the state after $j$ steps of $C$. 

Let $\{ \ket{\phi_{0}^{(1)}}, \dots , \ket{\phi_{0}^{(N-1)}}\}$  be the states that are orthogonal to $\ket{\phi_0^{(0)}}$ where $N$ denotes the dimension of $\mathcal{H}_{circuit}$.
Besides, let $\ket{\phi_j^{(m)}}:= U_jU_{j-1}\cdots U_1\ket{\phi_0^{(m)}}$ and $\ket{\psi_j^{(m)}} := \ket{\phi_j^{(m)}}\otimes \ket{\gamma_j}$ where $m\in[N-1]$.
We have
\begin{equation*}
		H_{\mathrm{circuit}}\ket{\psi_j^{(m)}} = \left\{\begin{array}{ll}
		\ket{\psi_{j+1}^{(m)}} &,\ \mathrm{if}\ j=0,\\
		\ket{\psi_{j-1}^{(m)}} + \ket{\psi_{j+1}^{m}} &,\ \mathrm{if}\ j =1,\dots, L-1,\\
		\ket{\psi_{j-1}^{(m)}} &,\ \mathrm{if}\  j=L,
	\end{array}\right.
\end{equation*}
for any $m\in\{0, 1, \dots, N-1\}$.

Again, the evolution under $H_{\mathrm{circuit}}$ can be viewed as quantum walks on a graph.
We construct a graph illustrated in Figure~\ref{fig:q_walk_c2H} whose  adjacency matrix is $H_{\mathrm{circuit}}$.

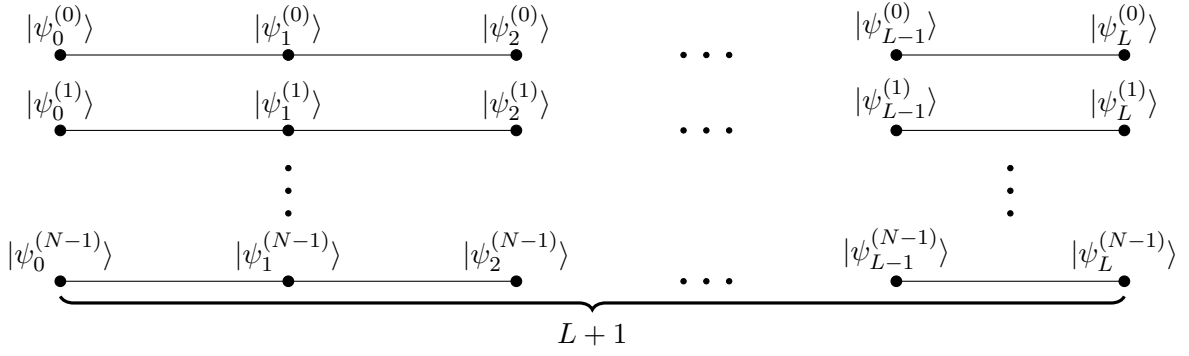
\begin{figure}[h]
    \centering
    \begin{adjustbox}{max width = \textwidth}
\begin{tikzpicture}
  \filldraw [black] (0,3) circle (2pt) node[above]{$\ket{\psi_0^{(0)}}$};
  \filldraw [black] (0,2) circle (2pt) node[above]{$\ket{\psi_{0}^{(1)}}$};
  
  \filldraw [black] (0,0) circle (2pt) node[above]{$\ket{\psi_{0}^{(N-1)}}$};

  %=========
  \filldraw [black] (3,3) circle (2pt) node[above]{$\ket{\psi_1^{(0)}}$};
  \filldraw [black] (3,2) circle (2pt) node[above]{$\ket{\psi_{1}^{(1)}}$};

  \filldraw [black] (3,0) circle (2pt) node[above]{$\ket{\psi_{1}^{(N-1)}}$};

  \draw (0,3) -- (3,3);
  \draw (0,2) -- (3,2);

  \draw (0,0) -- (3,0);
  %==========
  \filldraw [black] (6,3) circle (2pt) node[above]{$\ket{\psi_{2}^{(0)}}$};
  \filldraw [black] (6,2) circle (2pt) node[above]{$\ket{\psi_{2}^{(1)}}$};

  \filldraw [black] (6,0) circle (2pt) node[above]{$\ket{\psi_{2}^{(N-1)}}$};

  \draw (3,3) -- (6,3);
  \draw (3,2) -- (6,2);

  \draw (3,0) -- (6,0);
  %==========

  \filldraw [black] (8.2, 3) circle (1pt);
  \filldraw [black] (8.5, 3) circle (1pt);
  \filldraw [black] (8.8, 3) circle (1pt);
  
  \filldraw [black] (8.2, 2) circle (1pt);
  \filldraw [black] (8.5, 2) circle (1pt);
  \filldraw [black] (8.8, 2) circle (1pt);
  
  \filldraw [black] (8.2, 0) circle (1pt);
  \filldraw [black] (8.5, 0) circle (1pt);
  \filldraw [black] (8.8, 0) circle (1pt);
  
  \filldraw [black] (3, 1.5) circle (1pt);
  \filldraw [black] (3, 1.2) circle (1pt);
  \filldraw [black] (3, 0.9) circle (1pt);
  
  \filldraw [black] (12.5, 1.5) circle (1pt);
  \filldraw [black] (12.5, 1.2) circle (1pt);
  \filldraw [black] (12.5, 0.9) circle (1pt);

%   %==========
  \filldraw [black] (11,3) circle (2pt) node[above]{$\ket{\psi_{L-1}^{(0)}}$};
  \filldraw [black] (11,2) circle (2pt) node[above]{$\ket{\psi_{L-1}^{(1)}}$};

  \filldraw [black] (11,0) circle (2pt) node[above]{$\ket{\psi_{L-1}^{(N-1)}}$};
%   %==========
  \filldraw [black] (14,3) circle (2pt) node[above]{$\ket{\psi_{L}^{(0)}}$};
  \filldraw [black] (14,2) circle (2pt) node[above]{$\ket{\psi_{L}^{(1)}}$};

  \filldraw [black] (14,0) circle (2pt) node[above]{$\ket{\psi_{L}^{(N-1)}}$};

   \draw (11,3) -- (14,3);
   \draw (11,2) -- (14,2);

   \draw (11,0) -- (14,0);

  \draw [decorate, decoration = {brace, amplitude=5pt},  very thick] (14,-0.2) --  (0,-0.2);
  \node at (7, -0.7) {$L+1$};
\end{tikzpicture}
\end{adjustbox}
    \caption{The quantum walk on a graph for the circuit to Hamiltonian reduction.}
    \label{fig:q_walk_c2H}
\end{figure}

We have the following lemma.

\begin{lemma}
  \label{lem:c2H_lemma}
  A system evolves under the Hamiltonian $H_\mathrm{circuit}$ described in \eqref{eq:h_circuit} with the initial state $\ket{\psi_0^{(0)}}$ described in \eqref{eq:histrical_state}.
  If the clock register is measured at time $t \in [0, L/2]$ in the $\{\gamma_j\}$ basis and get the outcome $l$,
  the probability that the circuit register collapses to $\ket{\phi_{l}^{(m)}}$ where $m=0$ and $l>t$ is at least $1/3$.
\end{lemma}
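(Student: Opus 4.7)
The plan is to reduce the dynamics under $H_{\mathrm{circuit}}$ with initial state $\ket{\psi_0^{(0)}}$ to the quantum walk on a line analyzed in Section~\ref{sec:Q_walk}, and then invoke Lemma~\ref{lem:simple_h} directly. The first step is to identify the relevant invariant subspace. Let $V_0 \coloneqq \mathrm{Span}(\{\ket{\psi_j^{(0)}}\}_{j=0}^L)$. From the action of $H_{\mathrm{circuit}}$ computed right before the lemma,
\[
H_{\mathrm{circuit}}\ket{\psi_j^{(0)}} \;=\; \begin{cases}
\ket{\psi_1^{(0)}}, & j=0,\\
\ket{\psi_{j-1}^{(0)}}+\ket{\psi_{j+1}^{(0)}}, & 1\leq j\leq L-1,\\
\ket{\psi_{L-1}^{(0)}}, & j=L,
\end{cases}
\]
so $V_0$ is invariant under $H_{\mathrm{circuit}}$ and under $e^{-iH_{\mathrm{circuit}}t}$. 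Since the initial state $\ket{\psi_0^{(0)}}$ lies in $V_0$, at any time $t$ the evolved state $\ket{\Psi(t)} = e^{-iH_{\mathrm{circuit}}t}\ket{\psi_0^{(0)}}$ remains in $V_0$.

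The next step is to observe that the matrix of $H_{\mathrm{circuit}}|_{V_0}$ in the orthonormal basis $(\ket{\psi_0^{(0)}},\ket{\psi_1^{(0)}},\dots,\ket{\psi_L^{(0)}})$ is exactly the adjacency matrix of the path on $L+1$ vertices considered in \eqref{eq:h_simple}, up to the relabeling $\ket{j}\leftrightarrow \ket{\psi_j^{(0)}}$ (with $L$ replaced by $L+1$). Under this identification, the initial state $\ket{\psi_0^{(0)}}$ corresponds to an endpoint of the line, so Lemma~\ref{lem:simple_h}, applied with $L+1$ vertices (its hypothesis $t\in[0,(L+1)/2]$ is weaker than $t\in[0,L/2]$), yields
\[
\sum_{l=\lceil t\rceil}^{L} \bigl|\bra{\psi_l^{(0)}}\ket{\Psi(t)}\bigr|^2 \;\geq\; \tfrac{1}{3}
\qquad \text{for all } t\in[0,L/2].
\]

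Finally, because $\ket{\Psi(t)}\in V_0$, we can write $\ket{\Psi(t)} = \sum_{j=0}^{L}\alpha_j\,\ket{\phi_j^{(0)}}\otimes\ket{\gamma_j}$ where $\alpha_j = \braket{\psi_j^{(0)}|\Psi(t)}$. Measuring the clock register in the $\{\ket{\gamma_j}\}$ basis and obtaining outcome $l$ therefore collapses the circuit register deterministically to $\ket{\phi_l^{(0)}}$, \ie the $m=0$ branch of the lemma's statement. Summing the probabilities of the desired outcomes gives
\[
\Pr[\,l>t \;\wedge\; \text{circuit register is } \ket{\phi_l^{(0)}}\,] \;=\; \sum_{l=\lceil t\rceil}^{L} |\alpha_l|^2 \;\geq\; \tfrac{1}{3},
\]
which is the lemma.

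I do not expect a substantive obstacle: the only mild subtlety is the index bookkeeping when matching our line $\{0,1,\dots,L\}$ to the line $\{1,\dots,L\}$ of Lemma~\ref{lem:simple_h}, which is resolved by invoking that lemma with parameter $L+1$ and noting that $[0,L/2]\subseteq[0,(L+1)/2]$. All other ingredients (invariance of $V_0$, orthogonality of the $\ket{\gamma_j}$ clock states, and the fact that the walk restricted to $V_0$ is exactly the path-graph walk) follow from the definitions.
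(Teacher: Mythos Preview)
Your proposal is correct and follows essentially the same approach as the paper: identify the invariant subspace $V_0=\mathrm{Span}(\{\ket{\psi_j^{(0)}}\})$, observe that $H_{\mathrm{circuit}}$ restricted to $V_0$ is the path-graph adjacency matrix, invoke Lemma~\ref{lem:simple_h}, and note that measuring the clock register on a state in $V_0$ deterministically yields the $m=0$ branch. You are slightly more explicit than the paper about the $L$ versus $L+1$ index bookkeeping, but otherwise the arguments coincide.
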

\begin{proof}
  By the similar argument of Observation~\ref{obs:h_subspace}, we define another Hamiltonian $H\vert_{\psi^{(0)}}$ restricted to the subspace $\spn\left(\{\ket{\psi_j^{(0)}}\}_{j=0}^{L}\right)$:
  \begin{equation}
    \label{eq:h_c2H_restricted}
    H\vert_{\psi^{(0)}} := \sum_{j=0}^{L-1} \ket{\psi_{j+1}^{(0)}}\bra{\psi_{j}^{(0)}} + \ket{\psi_{j}^{(0)}}\bra{\psi_{j+1}^{(0)}}.
  \end{equation}
  If the system is initially at $\ket{\psi_{0}^{(0)}}$, the time evolution under $H_{\mathrm{circuit}}$ is the same as the time evolution under  $H\vert_{\psi^{(0)}}$.
  As a result, we have $m=0$ for any time $t$.
  
  Because $\ket{\psi_{j}^{(0)}} = \ket{\phi_{j}^{(0)}}\otimes\ket{\gamma_j}$, 
  the measurement results of measuring clock register in $\{\ket{\gamma_j}\}$ basis is the same as measuring the entire system in $\{\ket{\psi_{j}^{(0)}}\}$ basis.
  The probability that $l>t$ can be obtained directly from Lemma~\ref{lem:simple_h}.
  This finishes the proof.
\end{proof}

Next, we present our construction of the clock state.

\begin{lemma}\label{construct:clock_state}
    For all $c,T\in\N$, there exists a construction that implements the clock state for time $T$ with locality $c$ and at most $O(T^{1/(c-1)})$ qubits.
\end{lemma}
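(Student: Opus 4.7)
The plan is to encode clock states as characteristic vectors of $k$-subsets of $[n]$, walking along a Hamiltonian path of the Johnson graph $J_{n,k}$ with $k := c-1$. I choose $n$ to be the least integer satisfying $\binom{n}{k} \ge T$; since $\binom{n}{k} \ge (n/k)^k$, this yields $n = O(k T^{1/k}) = O(T^{1/(c-1)})$, matching the target qubit count.

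First I will define the clock states. By \defref{def:Johnson_graph}, $J_{n,k}$ admits a Hamiltonian path, so I list its vertices as $v_0, v_1, \dots, v_{L}$ with $L+1 = \binom{n}{k} \ge T$, each $v_t$ being a $k$-subset of $[n]$ and consecutive $v_{t-1}, v_t$ differing by exactly one element. I set $\ket{\gamma_t}$ to be the characteristic vector $\ket{\chi_{v_t}} \in \{0,1\}^n$, a Hamming-weight-$k$ bit string; distinct subsets automatically give orthonormal states.

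Next, for each transition $t$, write $v_{t-1} \setminus v_t = \{a_t\}$, $v_t \setminus v_{t-1} = \{b_t\}$, and let $w_1^{(t)}, \dots, w_{k-1}^{(t)}$ enumerate the common elements $v_{t-1} \cap v_t$. I propose the transition operator
\[
T_t \;:=\; (\ket{0}\bra{1})_{a_t} \otimes (\ket{1}\bra{0})_{b_t} \otimes \bigotimes_{i=1}^{k-1} (\ket{1}\bra{1})_{w_i^{(t)}},
\]
which acts nontrivially on exactly $2 + (k-1) = c$ qubits of the clock register; the clock Hamiltonian is $\sum_t (T_t + T_t^\dagger)$, and the full circuit-to-Hamiltonian construction uses $\sum_t (U_t \otimes T_t + U_t^\dagger \otimes T_t^\dagger)$, keeping the clock-register locality at $c$.

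The main obstacle, and the key verification, is that $T_t$ must implement $\ket{\gamma_t}\bra{\gamma_{t-1}}$ on the clock subspace with no spurious couplings. For any $k$-subset $S$, the vector $T_t\ket{\chi_S}$ is nonzero only when $S \supseteq \{a_t, w_1^{(t)}, \dots, w_{k-1}^{(t)}\} = v_{t-1}$ and $b_t \notin S$; but $|v_{t-1}| = |S| = k$ forces $S = v_{t-1}$, whence $T_t\ket{\chi_{v_{t-1}}} = \ket{\chi_{v_t}}$. Hence on the Hamming-weight-$k$ subspace, $T_t = \ket{\gamma_t}\bra{\gamma_{t-1}}$ exactly. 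Since every $T_t$ preserves Hamming weight and the Hamiltonian path covers \emph{all} $k$-subsets -- so the span of the clock states is precisely the Hamming-weight-$k$ subspace -- initializing the clock register at $\ket{\gamma_0}$ confines the dynamics to the clock subspace, and \lemmaref{lem:c2H_lemma} carries over verbatim. The cardinality argument that fixing $k$ elements of a size-$k$ subset pins it down uniquely is precisely what couples the clock locality $c$ to the qubit exponent $c-1$.
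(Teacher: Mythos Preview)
Your proposal is correct and follows essentially the same approach as the paper: encode clock states as characteristic vectors of $(c-1)$-subsets of $[n]$, traverse a Hamiltonian path in the Johnson graph $J_{n,c-1}$, and realize each transition by the $c$-local operator that projects onto the $c-2$ common elements while swapping the one departing and one entering element. Your correctness argument (forcing $S=v_{t-1}$ by cardinality once the $k$ projector conditions are met) is exactly the paper's, just phrased slightly differently.
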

\begin{proof}
    Let $n$ be the smallest integer such that $\binom{n}{c-1} \geq T$. Thus, $n = O(T^{1/(c-1)})$.
    Consider the system that consists of $n$ qubits indexed from $1$ to $n$. 
    Consider the Johnson graph $J_{n,{c-1}}$ (see Definition~\ref{def:Johnson_graph}), for each node $S \subseteq [n]$, define the $n$-qubit state $\ket{S}$ as
    \[
    \ket{S} \coloneqq \bigotimes_{i=1}^n \ket{I_S(i)}_i,
    \]
    where the subscript $i$ denotes the register of the $i$-th qubit; $I_S:[n]\to\bits$ is the indicator function that equals $1$ if $i\in S$ and $0$ otherwise.
    
    Choose an arbitrary Hamiltonian path of $J_{n,{c-1}}$, denoted by $(S_1,S_2,\dots,S_{\binom{n}{c-1}})$. Now, each time $j\in[\binom{n}{c-1}]$ corresponds to the $n$-qubit state $\ket{S_j}$.
    
    For every $j\in\left[ \binom{n}{{c-1}}-1 \right]$, the time transition $\ket{j+1}\bra{j}$ is implemented by 
    \[
    E_{j \to j+1} \coloneqq \bigotimes_{i=1}^n P_i,
    \] 
    where 
    \[
    P_i \coloneqq
    \begin{cases}
        \ket{1}\bra{1}_i, & \text{  if    } i \in S_{j}\cap S_{j+1} \\
        \ket{1}\bra{0}_i, & \text{  if    } i \in S_{j+1}\setminus S_{j} \\
        \ket{0}\bra{1}_i, & \text{  if    } i \in S_{j}\setminus S_{j+1} \\
        I, & \text{ otherwise}. \\
    \end{cases}
    \]
    It is easy to see that the transitions are $c$-local due to the definition of $J_{n,{c-1}}$. It remains to check the correctness of the transitions. That is, for every $j,j' \in\left[ \binom{n}{{c-1}}-1 \right]$, they should satisfy
    \[
    E_{j \to j+1}\ket{S_{j'}} = 
    \begin{cases}
        \ket{S_{j'+1}}, & \text{  if    } j = j' \\
        0, & \text{ otherwise}. \\
    \end{cases}
    \]
    When $j=j'$, the equality holds since there is an edge between $S_j$ and $S_{j+1}$. 
    When $j\neq j'$, there must exist an $i^*\in S_j$ such that $i^*\notin S_{j'}$. Therefore, $P_{i^*} = \ket{0}\bra{1}_{i^*}$ will vanish $\ket{S_{j'}}$ because the $i^*$-th qubit of $\ket{S_{j'}}$ is in the state $\ket{0}_i$. This verifies the correctness.
\end{proof}

\subsection{Proof of the lower bound for local Hamiltonians}
\label{sec:proof_of_the_lower_bound}

\begin{proof}[Proof of Theorem~\ref{thm:lower_bound_plain}]
  From an $(s,d)$-iterated parallel-hard puzzle $f(k,x) = g^{(k)}(x)$, we here show how to construct a time-independent $(c+2)$-local Hamiltonian $H$ acting on $n + (2s(n) T(n))^{1/c}$ qubits.
  
  The construction is direct. By definition, $g$ can be implemented by an $s(n)$-sized quantum circuit $C^g$. We can thus construct a circuit $C$ concatenating $T(n)$ copies of $C^g$, which computes $f(T(n),x)$ with size $s(n)T(n)$. Note that, if we denote $C(i,x)$ to be the intermediate output of $C(x)$ after applying the $i$-th gate, then we additionally have $C(ks(n),x) = \ket{f(k,x)}$ for all $k\leq T(n)$.
  
  Given such a circuit $C$, we can construct a Hamiltonian $H$ by the circuit-to-Hamiltonian reduction introduced in Section~\ref{sec:c2h}. We use the construction in Lemma~\ref{construct:clock_state} with locality $c$ and time-bound $2s(n)T(n)$ to implement the clock state. Hence $H = \sum_{j} H_j$, where $H_j := U_{j}\otimes \ket{j}\bra{j-1} + U_{j}^{\dagger}\otimes \ket{j-1}\bra{j}$, and $U_j$ is a unitary corresponding to the $j$-th gate of $C$. Note that $U_{j}$ is always an one- or two-qubit gate, and $\ket{j}\bra{j-1}$ is $c$-local. 
  Hence $H$ is $c+2$ local over $n + (2s(n)T(n))^{1/c}$ qubits. It is also direct to see that $\|H_j\| = O(1)$ for each $j$.

    For such $H$ and any $t\in[0,s(n)T(n)]$, if there is a quantum algorithm $\Hsimalg$ that computes $e^{-iHt}$ within depth $d_\Hsimalg$, we can indeed construct a quantum algorithm $\reduction$ with depth $d_\Hsimalg+O(n)$ that computes the underlying parallel-hard puzzle. The algorithm $\reduction$ is defined as follows.

  \begin{enumerate}
    \item \label{stp:one}
    For an input $x\in\set{0,1}^n$, run the algorithm $\Hsimalg$ with input Hamiltonian $H$ and input state $\ket{x}$, obtain the output state of $\Hsimalg$, denoted by $\ket{\psi}$.

    \item \label{stp:two}
    Measure the clock register $\mathcal{H}_{\mathrm{clock}}$ in $\{\ket{\gamma_j}\}$ basis and obtain some $l \in [s(n) f(n)]$.
    The residual state in the circuit register $\mathcal{H}_{\mathrm{circuit}}$ is denoted by $\ket{\phi_l}$.

    \item \label{stp:three}
    Let $m = \ceil{l/s(n)}$.
    Apply $U_{ms(n)}\cdots U_{l+1}$ on  $\ket{\phi_l}$.
    Let the final state be $\ket{\phi_m}$.
    
    \item \label{stp:four}
    Measure $\ket{\phi_m}$ on the computational basis and obtain the outcome $x_{m}$. 
    
    \item
    Output $x_{m}$

  \end{enumerate}
  
    In this construction, Step~2 and Step~4 can be done within depth $O(n)$ and Step~3 can be done within depth $s(n)$. It is easy to see that $\reduction$ can be implemented with depth $d_\Hsimalg+O(s(n))$.

    We claim that, with constant probability, $m>\frac{t}{s(n)}$ and $x_{m} = f(m,x)$. This implies that $\reduction$ can break the underlying parallel-hard puzzle on $k=2\floor{\frac{t}{s(n)}}$
    
    To prove the claim, we first consider a simplified case, where there exists an ideal  $\wt{\Hsimalg}$ that perfectly simulates $H$ for time $t$ and plugs it into our construction. We use $\ket{\wt{\psi}}$ to denote the output of $\wt{\Hsimalg}$ in Step~1, and similarly $\ket{\wt{\psi}_l}$ for output in Step~2. By Lemma~\ref{lem:c2H_lemma}, we have $\ket{\wt{\phi}_l} = U_l\cdots U_1\ket{x}$ for all $l$ and the probability that we get some $l>t$ in Step~2 is at least $1/3$. By the definition of $C$, the output state of Step~3 satisfies $\ket{\wt{\phi}_m} = U_{ms(n)}\cdots U_{1}\ket{x} = \ket{f(m,x)}$. Moreover, $m\geq \floor{\frac{t}{s(n)}}$ with probability at least $1/3$. This matches the required condition of our claim.

    Now we return to the general $\Hsimalg$ with simulation error $\epsilon < 1/4$. Let $\ket{\psi}$ be the output of $\Hsimalg$ with error $\epsilon$. Observe that we have $\trdist(\ket{\psi}\bra{\psi}, \ket{\wt{\psi}}\bra{\wt{\psi}}) < 1/4$. Thus, by the definition of the trace distance, the difference in probabilities of obtaining any outcome by applying the same procedure to two states should be at most $1/4$. Hence, with probability at least $1/3-1/4=1/12$, measuring $\ket{\phi_m}$ gives $f(m,x)$ with some $m\geq \floor{\frac{t}{s(n)}}$. This completes the proof of our claim.
    
    Finally, by the security guarantee of the $(s,d)$-iterated parallel-hard puzzle, any circuit computing the puzzle for $k=2\floor{\frac{t}{s(n)}}$ should have depth at least $d(2\floor{\frac{t}{s(n)}})$. This gives an lower bound that $d_\Hsimalg+O(s(n))> d(2\floor{\frac{t}{s(n)}})$, which completes the proof.
\end{proof}

\subsection{Circuit to time-dependent Hamiltonian}
\label{sec:c2timedep}
In this section, we will show how to encode a circuit into a time-dependent geometrically local circuit.

\begin{lemma} 
\label{lem:swap}
A quantum circuit $C$ (of 2-qubit gates) over $n$ qubits of size $s$ can be transformed into a circuit $C'$ over $n$ qubits of size $ns$, such that every gate in $C'$ acts only on consecutive qubits, and $C'(x)=\pi C(x)$ for some permutation $\pi$ on $n$ elements. We call such $C'$ a geometrically local circuit.
\end{lemma}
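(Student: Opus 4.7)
The plan is to simulate each two-qubit gate of $C$ by routing the two relevant qubits into adjacent positions using a chain of nearest-neighbor SWAP gates, applying the gate there, and tracking the accumulated wire permutation at the output rather than undoing the SWAPs. Concretely, I would maintain a permutation $\sigma_t \colon [n] \to [n]$ recording, after processing the first $t$ gates of $C$, which physical wire currently carries the data of each logical qubit; initialize $\sigma_0$ to be the identity. For the $t$-th gate $U_t$ of $C$, acting on logical qubits $(i_t, j_t)$, let $p = \sigma_t(i_t)$ and $q = \sigma_t(j_t)$, and assume without loss of generality $p < q$. Insert a sequence of at most $q - p - 1 \leq n - 2$ nearest-neighbor SWAPs that moves the wire currently at position $p$ rightward one step at a time until it reaches position $q-1$, then apply $U_t$ on the adjacent positions $q-1$ and $q$, and update $\sigma_{t+1}$ by composing $\sigma_t$ with the corresponding transpositions.

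By construction every gate of $C'$ acts on two consecutive wires, and the size of $C'$ is at most $s \cdot (n-1) + s \leq ns$. Since every inserted SWAP is just a permutation of physical wires, a straightforward induction on $t$ shows that the state of $C'$ after processing the $t$-th gate equals the state of $C$ after its $t$-th gate, conjugated by the qubit permutation $\sigma_t$; taking $\pi := \sigma_s$ then yields $C'(x) = \pi\, C(x)$. There is no real technical obstacle here — the only thing to check carefully is the bookkeeping of $\sigma_t$, namely that composing with the transposition for each SWAP correctly records where each logical qubit lives. Since the permutation $\pi$ is determined entirely by the gate sequence of $C$ (independent of the input $x$) and can be computed classically in advance, the construction of $C'$ is explicit and uniform in $C$.
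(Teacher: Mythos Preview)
Your proposal is correct and follows essentially the same approach as the paper: for each gate, insert a chain of at most $n-1$ nearest-neighbor SWAPs to bring the two logical qubits to adjacent physical positions, apply the gate there, and accumulate the resulting wire permutation rather than undoing the SWAPs. The paper's proof is slightly terser and does not spell out the bookkeeping via $\sigma_t$, but the construction and the size count $(n-1)s + s \leq ns$ are identical.
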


\begin{proof}
The proof of this small lemma is very direct. Given a circuit $C$ with (sequential) gates $G_1,G_2,\dots,G_s$, where gate $G_i$ acts on qubits $\alpha_i, \beta_i$, then we can rewrite $C$ as $S_{1,\alpha_1}, S_{1,\alpha_1+1},\dots, $ $S_{1,\beta_1-2},G'_1, G'_2,\dots, G'_s$, where $S_i$ is a swap gate acting on the $i$ and $(i+1)$-th qubits, and $G'_i$ is $G_i$ acting on the permuted qubits. Note that now $G'_1$ is acting on two consecutive qubits $\beta_1-1, \beta_1$. It is easy to see that applying $G_1$ and applying $S_{1,\alpha_1}, S_{1,\alpha_1+1},\dots, S_{1,\beta_1-2}, G'_1$ generate output states that differ up to a permutation. Through repeating such process $s$ times, we can obtain a circuit $C'$ that consists of at most $(n-1)s$ swap gates and $s$ permuted gates from $C$. Furthermore, every gate in $C'$ is acting only on consecutive qubits.
\end{proof}

\begin{theorem}
\label{thm:dep_trans}
Given a quantum circuit $C$ over $n$ qubits that consists of $s$ gates $U_1\dots U_s$, we can define a time-dependent Hamiltonian $H$ such that $H(t):=-i\log U_i$ for all $t\in[i-1,i)$ and $U_i:=I_n$ for $i>s$. Note that $H$ obviously satisfies $e^{iHt}\ket{\phi} = C\ket{\phi}$ for $t>s$.
\end{theorem}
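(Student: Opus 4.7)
My plan is to unpack the definition of the time-ordered exponential applied to a piecewise time-independent Hamiltonian; the statement should then fall out by direct calculation. First I would verify that $-i\log U_i$ is well-defined and Hermitian for each unitary $U_i$. Diagonalizing $U_i = V_i D_i V_i^\dagger$ with $D_i = \mathrm{diag}(e^{i\theta_j})$ and taking the principal branch $\theta_j \in (-\pi, \pi]$ gives
$$H_i \;:=\; -i\log U_i \;=\; V_i\,\mathrm{diag}(\theta_j)\,V_i^\dagger,$$
which is Hermitian and satisfies $e^{iH_i} = U_i$. For $i > s$ we have $U_i = I$, so $H_i = 0$ and $H(t)$ is identically zero on $[s, \infty)$.

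The key step is to observe that since $H(t)$ is constant on each half-open unit interval $[i-1, i)$, the time-ordered exponential over $[0, t]$ factors cleanly: the Dyson series collapses into a product of unit-time exponentials, ordered with later times on the left. Concretely, for any integer $t \geq s$,
$$\exp_{\cT}\!\Bigl(i\int_0^t H(t')\,dt'\Bigr) \;=\; \Bigl(\prod_{i=t}^{s+1} e^{iH_i}\Bigr)\Bigl(\prod_{i=s}^{1} e^{iH_i}\Bigr) \;=\; I \cdot U_s U_{s-1}\cdots U_1 \;=\; C,$$
where the first product collapses to the identity because $H_i = 0$ for $i > s$. Applying this to $\ket{\phi}$ gives the claimed identity. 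The extension to non-integer $t \geq s$ is immediate since the evolution past time $s$ is trivial.

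The only delicate points are bookkeeping rather than technical: one must fix a branch of the matrix logarithm (principal branch) to make each $H_i$ a definite Hermitian operator, and one must be careful about sign conventions. The body of the paper uses the convention $\exp_\cT(-i\int H)$ for time-dependent evolution, so a strict reading of the theorem would require setting $H_i := i\log U_i$ instead of $-i\log U_i$; this changes nothing about the structure of the argument. Since the construction is essentially tautological once the time-ordered exponential is evaluated segment by segment, I do not anticipate any substantive obstacle — this is why the statement says ``obviously satisfies.'' The real content of this construction is that, when combined with \lemmaref{lem:swap}, the resulting $H$ inherits geometric 2-locality, which is what feeds into \theoremref{thm:lower_bound_dep}.
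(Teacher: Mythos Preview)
Your proposal is correct. The paper provides no proof at all for this theorem: the construction is the definition, and the evolution identity is declared to ``obviously'' hold, with the statement followed immediately by two remarks and the next section. Your argument is precisely the justification the paper omits --- diagonalize each $U_i$ to define a Hermitian $H_i$ with $e^{iH_i}=U_i$, then use piecewise-constancy to collapse the time-ordered exponential into the ordered product $U_s\cdots U_1$. Your observation about the sign convention (the body of the paper uses $\exp_\cT(-i\int H)$, so one would expect $H_i = i\log U_i$) is also apt; this is a minor inconsistency in the paper that does not affect anything downstream.
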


\begin{remark}
    A Hamiltonian $H$ obtained from a geometrically local circuit is 1-D geometrically 2-local.
\end{remark}

\begin{remark}
    Such time-independent Hamiltonians that remain constant in each time segment are also called  \emph{piecewise constant Hamiltonians} in~\cite{Haah_2021}.
\end{remark}

\subsection{Proof of the lower bound for geometrically local Hamiltonians}
Before proving the theorem, we will need the existence of a Hamiltonian simulation algorithm for 2-local \emph{time-independent} Hamiltonians. While there are plenty of proposals in the literature, we  use the one in~\cite{zhang2021parallel} which provides a good dependency on $\epsilon$. The choice of the simulation algorithm will only affect an additive term of our bound.

\begin{theorem}[\cite{zhang2021parallel}, with some parameter specified] 
\label{thm:existing_Hsim}
A 2-local Hamiltonian acting on $n$ qubits can be simulated for time $t$ within precision $\epsilon$ by a quantum circuit of depth $\poly(\log\log(1/\epsilon), \log(n), t)$.
\end{theorem}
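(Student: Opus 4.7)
The plan is to combine a block-encoding of the $2$-local Hamiltonian with parallel quantum signal processing (or truncated Taylor series via LCU), slicing time into short segments and exploiting the bounded locality to implement each primitive in logarithmic parallel depth. Writing $H=\sum_{j=1}^{m}H_{j}$ with $m=O(n^{2})$, each $H_{j}$ acting nontrivially on at most two qubits with $\|H_{j}\|=O(1)$, the goal is to assemble an approximation to $e^{-iHt}$ whose circuit depth is $\poly(\log\log(1/\epsilon),\log n,t)$.

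First I would construct an $(\alpha,a,\delta)$-block-encoding of $H$ with $\alpha=\sum_{j}\|H_{j}\|=\poly(n)$, $a=O(\log m)=O(\log n)$ ancillas, and arbitrarily small $\delta$. The $\mathsf{PREPARE}$ oracle loads a superposition over the $m$ terms and can be implemented in depth $O(\log n)$ by a balanced binary tree of controlled rotations acting on disjoint ancilla registers. The $\mathsf{SELECT}$ oracle applies the controlled $H_{j}$; since each $H_{j}$ acts on only $2$ qubits, all $m$ controlled gates can be organized so that gates touching disjoint qubits are performed simultaneously, and by a standard fan-out / binary tree of routing ancillas the entire $\mathsf{SELECT}$ has depth $O(\log n)$. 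The resulting block-encoding of $H/\alpha$ has depth $O(\log n)$.

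Next I would slice the evolution into $r=O(\alpha t)$ segments, so that within each segment $e^{-iHt/r}$ has argument of norm $O(1)$. For each segment I would use the truncated Taylor series / LCU implementation of the exponential, truncated at order $K=O(\log(1/\epsilon')/\log\log(1/\epsilon'))$ where $\epsilon'=\epsilon/r$. The order-$K$ sum is realized via LCU plus oblivious amplitude amplification, calling the block-encoding $O(K)$ times; the crucial parallelization step is to arrange the product of $K$ block-encoding calls along a balanced binary tree of unitary multiplications (using the parallel-LCU/parallel-QSP construction of the cited work), so that the per-segment depth is $O(\log K\cdot\log n)=O(\log\log(1/\epsilon)\cdot\log n)$ rather than $O(K\cdot\log n)$. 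Summing over the $r=O(\alpha t)$ segments and absorbing polynomial factors in $n$ into the $\poly(\log n)$ bound yields total depth $\poly(t,\log n,\log\log(1/\epsilon))$, as claimed.

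The main obstacle is the doubly-logarithmic $\epsilon$ dependence: a straightforward QSP/LCU implementation of each Trotter segment produces depth linear in $K$, giving $\poly(\log(1/\epsilon))$ rather than $\poly(\log\log(1/\epsilon))$. To beat this one must re-express the truncated Taylor polynomial (or the QSP polynomial approximating $e^{-ix}$) so that it can be evaluated by a depth-$O(\log K)$ tree of unitary operations while remaining compatible with the oblivious amplitude amplification that projects onto the target subspace. Establishing that the tree-structured LCU still prepares the correct normalized state up to error $\epsilon/r$, and that the amplitude-amplification step can be performed without blowing up the depth by more than a $\poly(\log n,\log\log(1/\epsilon))$ factor, is the technical heart of the argument and is precisely the contribution of \cite{zhang2021parallel}. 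Once these gadgets are in place, the depth accounting above is a routine composition.
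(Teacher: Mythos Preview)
The paper does not prove this theorem at all: it is quoted verbatim as a black-box result from \cite{zhang2021parallel} and is invoked only inside the proof of Theorem~\ref{thm:lower_bound_dep} (to simulate a constant-time segment of a time-independent Hamiltonian). There is therefore nothing in the paper to compare your sketch against.

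That said, your sketch has a genuine gap independent of the comparison. You set $\alpha=\sum_j\|H_j\|=\poly(n)$ and then take $r=O(\alpha t)$ segments, so the total depth picks up a factor of $r=\poly(n)\cdot t$. In the last sentence you write that this $\poly(n)$ factor can be ``absorbed into the $\poly(\log n)$ bound,'' which is simply false: $\poly(n)$ is not $\poly(\log n)$. With the segment count you chose, the depth you actually obtain is $\poly(n,t,\log\log(1/\epsilon))$, not $\poly(\log n,t,\log\log(1/\epsilon))$. Getting the dependence down to $\poly(\log n)$ requires a different handling of the normalization $\alpha$ (e.g., an interaction-picture or commutator-structure argument specific to local Hamiltonians, or a different block-encoding whose subnormalization does not scale with the number of terms), and that is not addressed in your outline.
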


\begin{proof}[Proof of Theorem~\ref{thm:lower_bound_dep}]
The proof is very similar to the proof of Theorem~\ref{thm:lower_bound_plain}. From an $(s,d)$-iterated parallel-hard function $f(k,x)=g^{(k)}(x)$, we will construct a geometrically 2-local Hamiltonian $H$ over $n$ qubits.

Again, since $g$ can be implemented by an $s(n)$ sized circuit $C^g$, we can construct $C^f$ by concatenating $T(n)$ copies of $C^g$, which computes $f(T(n),x)$ with size $s(n)T(n)$. Then, by applying Lemma~\ref{lem:swap}, we can obtain a geometrically local circuit $C$ of depth at most $ns(n)T(n)$. Note that Lemma~\ref{lem:swap} basically constructs $C$ by adding at most $n-1$ swap gates before each gate of $C^f$. For notation simplicity, we add extra dummy gate before each gate so that there are exactly $(n-1)$ gates added before each gate of $C^f$. Thus the depth of $C$ is exactly $ns(n)T(n)$, and if we denote $C(i,x)$ to be the intermediate output of $C(x)$ after applying the $i$-th gate, we have $C(kns(n),x)=\pi_k\ket{f(k,x)}$ for all $k\leq T(n)$, where $\pi_k$ is some (known) permutation on $n$ bits.

With such a geometrically local circuit $C$, we can apply Theorem~\ref{thm:dep_trans} to obtain a time-dependent geometrically $2$-local Hamiltonian $H$ over $n$ qubits. For all $t\in[0,ns(n)T(n)]$, if there is a quantum algorithm $\Hsimalg$ that computes $e^{-iHt}$ with precision $\epsilon$ of depth $d_\Hsimalg$, we can indeed construct a quantum algorithm $\reduction$ of depth $d_\Hsimalg+O(s(n))+\poly(\log\log(1/\epsilon'),\log n)$ with $\epsilon'<1-\epsilon-1/\poly(n)$. The algorithm $\reduction$ is defined as follows.
\begin{enumerate}
    \item For an input $x\in\set{0,1}^n$, run $\Hsimalg$ on inputs the Hamiltonian $H$ and the initial state $\ket{x}$ to obtain the output state $\ket{\phi}$.
    \item 
    Run the Hamiltonian simulation algorithm $\cS$ in Theorem~\ref{thm:existing_Hsim} with inputs the Hamiltonian $H(t)$ (which is time-independent) 
    , the initial state $\ket{\phi}$, the evolution time $\ceil{t}-t$, and the precision parameter $\epsilon'$ to obtain the output state $\ket{\phi_{\ceil{t}}}$.
    
    \item Let $m=\ceil{\frac{t}{ns(n)}}$. Apply $U_{mns(n)}\cdot U_{\ceil{t}+1}$ to $\ket{\phi_{\ceil{t}}}$. Let the final state be $\ket{\phi_m}$.
    \item Measure the permuted state $\pi_m\ket{\phi_m}$ on the computational basis and obtain the outcome $x_m$.
    \item Output $x_m$.
\end{enumerate}
    
    In the construction, Step~3 can be done within depth $O(ns(n))$ and Step~4 can be done within depth $O(1)$. Thus, $R$ can be instantiated within depth $d_\Hsimalg + O(ns(n)) + \poly(\log\log(1/\epsilon'), \log n)$.
    
    Now we show that if $\epsilon+\epsilon'\leq 1-1/\poly(n)$, then $x_m=f(k,m)$ holds with non-negligible probability.
    
    We consider the case in which the algorithms $\Hsimalg$ and $\cS$ both simulate the evolution perfectly. Denote the output state in each step of this experiment as $\ket{\tilde{\phi}}, \ket{\tilde{\phi}_{\ceil{t}}}$ and $\ket{\tilde{\phi}_m}$. In particular,
    \[
    \ket{\tilde{\phi}}=\exp_{\cT}\left(-i \int_{0}^{t} H(t')dt' \right)\ket{x}
    \]
    and 
    \begin{align*}
    \ket{\tilde{\phi}_{\ceil{t}}} 
    & = e^{-iH(t)(\ceil{t}-t)}\times\exp_{\cT}\left(-i \int_{0}^{t} H(t')dt' \right)\ket{x} \\
    & =  \exp_{\cT}\left(-i \int_{0}^{\ceil{t}} H(t')dt' \right)\ket{x}=\ket{C(\ceil{t},x)},
    \end{align*}
    where the last equation follows from the definition of $H$, while the second last equation follows from the fact that $H$ is constant on the interval $(t,\ceil{t})$. Hence, $\ket{\tilde{\phi}_m}=\ket{C(mns(n),x)}=\ket{f(m,x)}$ with probability $1$.
    
    Back to the actual construction, by the precision guarantee of $\Hsimalg$ and $\cS$, there exists some polynomial $p$ such that $\trdist(\ket{\psi_{\ceil{t}}}\bra{\psi_{\ceil{t}}}, \ket{\tilde{\psi}_{\ceil{t}}}\bra{\tilde{\psi}_{\ceil{t}}}) < 1-1/p(n)$. Thus, $x_m=f(m,x)$ holds with probability at least $1/p(n)$.
    
    Finally, by the security guarantee of the $(s,d)$-iterated parallel-hard function, any quantum circuit that computes $f(m,x)$ should have depth at least $d(m) = d(\ceil{\frac{t}{ns(n)}})$. This gives an lower bound that $d_\Hsimalg+O(ns(n))+\poly(\log\log(1/\epsilon), \log(n))<d(\ceil{\frac{t}{ns(n)}})$, which completes the proof.
\end{proof}

\section{No Fast-forwarding with Natural Simulators}

In previous sections, we have shown no-go theorems for using quantum circuits to parallelly fast-forward (geometrically) local Hamiltonian simulation. Here, we are going to generalize Theorem~\ref{thm:lower_bound_dep} and Theorem~\ref{thm:lower_bound_plain} by showing that simulators that are geometrically local Hamiltonians cannot do much better than quantum circuits. 

\begin{corollary}
\label{cor:lower_bound_tdh}
Assuming an $(s,d)$-iterated parallel-hard function, for any integer $n$, there exists a piecewise-time-independent 1-D geometrically 2-local Hamiltonian $H_A$ acting on $n$ qubits satisfying the following: 
For any geometrically constant-local Hamiltonian $H_B$ acting on $\poly(n)$ qubits, using $H_B$ to simulate $H_A$ for evolution time $t\in[0, ns(n)T(n)]$ with error $\epsilon(n)\leq 1-1/\poly(n)$ needs $(d(\floor{\frac{t}{ns(n)}})-O(ns(n))-\poly(\log(n),\log\log(1/\epsilon'(n)))/\polylog(tn/\epsilon)$ evolution time, where $T(\cdot)$ is an arbitrary polynomial and  $\epsilon'(n)<1-\epsilon(n)-1/\poly(n)$.
\end{corollary}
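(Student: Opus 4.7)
The plan is to reduce the no-simulator statement to the no-quantum-circuit statement of Theorem~\ref{thm:lower_bound_dep} by composing any hypothetical geometrically local Hamiltonian simulator with the known parallel simulation algorithm for geometrically local Hamiltonians due to Haah et al.~\cite{Haah_2021}. Concretely, let $H_A$ be the piecewise-time-independent 1-D geometrically $2$-local Hamiltonian on $n$ qubits produced by Theorem~\ref{thm:lower_bound_dep}.

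Suppose towards a contradiction that there is a geometrically constant-local Hamiltonian $H_B$ on $m=\poly(n)$ qubits, together with encoding/decoding isometries, that simulates the evolution $\exp_{\cT}(-i\int_0^t H_A(t')dt')$ for some $t\in[0,ns(n)T(n)]$ within trace distance $\epsilon(n)\leq 1-1/\poly(n)$ using total evolution time $t_B$. The algorithm of Haah et al.~\cite{Haah_2021} compiles the evolution of a geometrically $O(1)$-local Hamiltonian on $m$ qubits for time $t_B$ into a quantum circuit of depth $t_B\cdot \polylog(t_B m/\epsilon'')$ with simulation error $\epsilon''$. Setting $\epsilon''$ small enough that $\epsilon+\epsilon''\leq 1-1/\poly(n)$ (it suffices to take $\epsilon''=1/\poly(n)$), the resulting circuit is a bona fide parallel quantum simulator for $H_A$ on time $t$ with error $\epsilon+\epsilon''$, and its depth is
\[
    d_{\mathsf{ckt}} \;=\; O\!\left(t_B \cdot \polylog\!\left(\tfrac{t_B\, n}{\epsilon''}\right)\right),
\]
since $m=\poly(n)$ absorbs into the $\polylog$ factor.

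By Theorem~\ref{thm:lower_bound_dep} applied to $H_A$ with simulation error $\epsilon(n)+\epsilon''=:\epsilon_{\mathrm{tot}}(n)$, any quantum circuit simulating $H_A$ for time $t$ within error $\epsilon_{\mathrm{tot}}(n)\leq 1-1/\poly(n)$ must have depth at least
\[
    d^\star(t) \;:=\; d\!\left(\left\lfloor \tfrac{t}{ns(n)}\right\rfloor\right)-O(ns(n))-\poly\!\left(\log n, \log\log(1/\epsilon'(n))\right),
\]
for appropriately chosen $\epsilon'(n)<1-\epsilon_{\mathrm{tot}}(n)-1/\poly(n)$. Combining with the upper bound on $d_{\mathsf{ckt}}$ and dividing through by the polylogarithmic factor, we get
\[
    t_B \;\geq\; \Omega\!\left( \frac{d^\star(t)}{\polylog(tn/\epsilon)}\right),
\]
which is precisely the claimed lower bound.

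The main technical point is to ensure that ``using $H_B$ to simulate $H_A$'' is captured soundly enough that the composition with Haah et al.'s simulator yields a genuine parallel circuit simulator for $H_A$ on $n$ qubits within the permitted error budget; this amounts to a standard triangle inequality on trace distance for the encoding isometry together with the $\epsilon''$-slack in the second simulation. Once this is set up cleanly, everything else is book-keeping of parameters. No new hardness is required beyond Theorem~\ref{thm:lower_bound_dep} and the off-the-shelf upper bound of~\cite{Haah_2021}. An analogous argument proves Corollary~\ref{cor:lower_bound_plain} by substituting Theorem~\ref{thm:lower_bound_plain} for Theorem~\ref{thm:lower_bound_dep}.
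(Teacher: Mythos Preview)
Your proposal is correct and follows essentially the same approach as the paper: take $H_A$ from Theorem~\ref{thm:lower_bound_dep}, compile any hypothetical geometrically local simulator $H_B$ into a low-depth circuit via the Haah et al.\ algorithm, combine errors by the triangle inequality, and invoke the circuit-depth lower bound of Theorem~\ref{thm:lower_bound_dep} to force $t_B$ to be large. The only cosmetic difference is that the paper splits the error budget as $\epsilon/2+\epsilon/2$ whereas you split it as $\epsilon+\epsilon''$ with $\epsilon''=1/\poly(n)$; both work.
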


\begin{proof}

Let $H_A$ be the Hamiltonian we considered in the proof of Theorem~\ref{thm:lower_bound_dep}. Suppose there exists $H_B$ that can simulate $H_A$ for evolution time $t\in[0, ns(n)T(n)]$ and error $\epsilon(n)/2\leq 1-1/\poly(n)$ with  $t'<\frac{d(\floor{\frac{t}{ns(n)}})-O(ns(n))-\poly(\log(n),\log\log(1/\epsilon'(n))}{\polylog(tn/\epsilon)}$ evolution time, where  $\epsilon'(n)<1-\epsilon(n)/2-1/\poly(n)$.

Recall that the algorithm in~\cite{Haah_2021} can simulate a geometrically constant-local Hamiltonian with quantum circuit depth $t\cdot \polylog(tn/\epsilon)$, where $n$ is the number of qubits and $\epsilon$ is the precision parameter.

Then, we apply the algorithm in~\cite{Haah_2021} to simulate $H_B$ with evolution time $t'$ and precision $\epsilon/2$. This leads to a simulation algorithm for $H_A$ with error $\epsilon$ and  circuit depth strictly less than $d(\floor{\frac{t}{ns(n)}})-O(ns(n))-\poly(\log(n),\log\log(1/\epsilon'(n))$, which contradicts Theorem~\ref{thm:lower_bound_dep}. 
\end{proof}

\begin{corollary}
\label{cor:lower_bound_plain}
Assuming an $(s,d)$-iterated parallel-hard function, for any integer $n$, there exists a time-independent $c$-local Hamiltonian $H$ acting on $n+(2s(n)T(n))^{1/c}$ qubits satisfying the following: 
For any geometrically constant-local Hamiltonian $H_B$ acting on $\poly(n)$ qubits, using $H_B$ to simulate $H_A$ for evolution time $t\in[0, s(n)T(n)]$ with error $\epsilon(n)< 1/4$ needs $d(\floor{t/2s(n)})-O(s(n))/\polylog(tn/\epsilon)$ evolution time, where $T(\cdot)$ is an arbitrary polynomial and $c$ is a constant.
\end{corollary}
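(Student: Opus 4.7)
The plan is to mirror the proof of Corollary~\ref{cor:lower_bound_tdh} almost verbatim, substituting Theorem~\ref{thm:lower_bound_plain} (the time-independent, $c$-local lower bound) in place of Theorem~\ref{thm:lower_bound_dep} (the time-dependent, geometrically local lower bound). The starting point is the Hamiltonian $H_A$ whose existence is guaranteed by Theorem~\ref{thm:lower_bound_plain}: a time-independent $c$-local Hamiltonian on $n+(2s(n)T(n))^{1/c}$ qubits that cannot be simulated by a quantum circuit of depth less than $d(\lfloor t/2s(n)\rfloor)-O(s(n))$ for any $t\in[0,s(n)T(n)]$ within error $1/4$.

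Suppose, toward a contradiction, that there exists a geometrically constant-local Hamiltonian $H_B$ on $\poly(n)$ qubits that simulates $H_A$ for evolution time $t\in[0,s(n)T(n)]$ within trace distance $\epsilon/2<1/8$ using evolution time
\[
t' \;<\; \frac{d(\lfloor t/2s(n)\rfloor)-O(s(n))}{\polylog(tn/\epsilon)}.
\]
Here I split the overall error budget $\epsilon/2$ into one half for the $H_B$-to-$H_A$ reduction, leaving the other half for the next step. Next I would invoke the geometrically local simulation algorithm of Haah \emph{et al.}~\cite{Haah_2021}, which implements $e^{-iH_Bt'}$ within trace distance $\epsilon/2$ by a quantum circuit of depth $t'\cdot\polylog(t'n/\epsilon)\le t'\cdot\polylog(tn/\epsilon)$.

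Composing the Haah-style circuit with the $H_B$-based simulator of $H_A$ then yields a single quantum circuit that approximates $e^{-iH_At}$ within trace distance at most $\epsilon/2+\epsilon/2=\epsilon<1/4$ (by the triangle inequality for trace distance and the fact that the error of the Haah \emph{et al.} algorithm composes additively with the $H_B$-to-$H_A$ simulation error). The depth of this circuit is bounded by $t'\cdot\polylog(tn/\epsilon) < d(\lfloor t/2s(n)\rfloor)-O(s(n))$, in direct contradiction to Theorem~\ref{thm:lower_bound_plain}. Hence no such $H_B$ exists, establishing the stated evolution-time lower bound.

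The only real subtlety — and it is minor — is bookkeeping the error and the $\polylog$ factors: I need to make sure the doubled error budget still satisfies the hypothesis of Theorem~\ref{thm:lower_bound_plain} (choosing $\epsilon<1/4$ and splitting into two halves of size $\epsilon/2<1/8$ does so), and that the $\polylog(t'n/\epsilon)$ factor appearing in the circuit depth of~\cite{Haah_2021} can be upper bounded by $\polylog(tn/\epsilon)$ using $t'\le t$. There is no substantive new ideas needed beyond Theorem~\ref{thm:lower_bound_plain} and the upper bound of~\cite{Haah_2021}, exactly as in Corollary~\ref{cor:lower_bound_tdh}.
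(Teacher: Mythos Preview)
Your proposal is correct and follows exactly the approach the paper takes: the paper's proof consists of the single sentence ``The proof for Corollary~\ref{cor:lower_bound_plain} is similar to the proof for Corollary~\ref{cor:lower_bound_tdh},'' and you have filled in precisely those details---taking $H_A$ from Theorem~\ref{thm:lower_bound_plain}, halving the error budget, applying the Haah et al.\ upper bound to $H_B$, and deriving a contradiction.
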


The proof for Corollary~\ref{cor:lower_bound_plain} is similar to the proof for Corollary~\ref{cor:lower_bound_tdh}.

\bibliographystyle{alphaurl}
\bibliography{References.bib}

\end{document}